\newcommand\runtit{{Composable and Finite Computational Security of
    Quantum Message Trans.}}
\newcommand\runaut{{F. Banfi, U. Maurer, C. Portmann, and J. Zhu}}
\newcommand\tit{{Composable and Finite Computational Security of
    Quantum Message Transmission}}
\newcommand\aut{{Fabio Banfi, Ueli Maurer, Christopher Portmann, and Jiamin Zhu}}
\newcommand\instit{Department of Computer Science\\ETH Zurich\\8092 Zurich, Switzerland}
\newcommand\emails{\{%
    \href{mailto:fbanfi@inf.ethz.ch}{\nolinkurl{fbanfi}},%
    \href{mailto:maurer@inf.ethz.ch}{\nolinkurl{maurer}},%
    \href{mailto:chportma@inf.ethz.ch}{\nolinkurl{chportma}},%
    \href{mailto:zhujia@inf.ethz.ch}{\nolinkurl{zhujia}}%
\}@inf.ethz.ch}
\newif\iffinal
\newif\ifsub
\DeclareMathSymbol\Gamma\mathalpha{operators}{"00}
\DeclareMathSymbol\Delta\mathalpha{operators}{"01}
\DeclareMathSymbol\Theta\mathalpha{operators}{"02}
\DeclareMathSymbol\Lambda\mathalpha{operators}{"03}
\DeclareMathSymbol\Xi\mathalpha{operators}{"04}
\DeclareMathSymbol\Pi\mathalpha{operators}{"05}
\DeclareMathSymbol\Sigma\mathalpha{operators}{"06}
\DeclareMathSymbol\Upsilon\mathalpha{operators}{"07}
\DeclareMathSymbol\Phi\mathalpha{operators}{"08}
\DeclareMathSymbol\Psi\mathalpha{operators}{"09}
\DeclareMathSymbol\Omega\mathalpha{operators}{"0A}
\setlist{nolistsep}
\let\originalleft\left
\let\originalright\right
\renewcommand{\left}{\mathopen{}\mathclose\bgroup\originalleft}
\renewcommand{\right}{\aftergroup\egroup\originalright}
\algrenewcommand\alglinenumber[1]{\color{darkgray}\footnotesize#1:} 
\newcommand{\algstrut}[1][\algruledefaultfactor]{\vrule width 0pt depth .25\baselineskip height #1\baselineskip\relax}
\newcommand*{\algrule}[1][\algorithmicindent]{\hspace*{.5em}\vrule\algstrut\hspace*{\dimexpr#1-.5em}}
\def\ALG@printindent{%
    \ifnum \theALG@nested>0
    \ifx\ALG@text\ALG@x@notext
    \else
    \unskip
    \ALG@printindent@tempcnta=1
    \loop
    \algrule[\csname ALG@ind@\the\ALG@printindent@tempcnta\endcsname]%
    \advance \ALG@printindent@tempcnta 1
    \ifnum \ALG@printindent@tempcnta<\numexpr\theALG@nested+1\relax
    \repeat
    \fi
    \fi
}%
\newtcolorbox{algobox}[3][]{
    width=#2,
    title=#3,
    base,
    #1
}
\newtcolorbox{schbox}[3][]{
    width=#2,
    title=#3,
    base,
    sharp corners,
    #1
}
\newtcolorbox{convbox}[3][]{
    width=#2,
    title=#3,
    base,
    sys,
    #1
}
\newtcolorbox{resbox}[3][]{
    width=#2,
    title=#3,
    base,
    sys,
    sharp corners,
    #1
}
\newcommand\N{\mathbb N}
\newcommand\R{\mathbb R}
\renewcommand{\eqref}[1]{\hyperref[#1]{(\ref*{#1})}}
\newcommand{\eqnref}[1]{\hyperref[#1]{Eq.~(\ref*{#1})}}
\newcommand{\eqnsref}[1]{Eqs.~\eqref{#1}}
\newcommand\aref[1]{\hyperref[#1]{Appendix~\ref*{#1}}}
\newcommand{\pref}[1]{\hyperref[#1]{page~\pageref*{#1}}} 
\newcommand\Hi{\mathcal H}  
\newcommand{\lo}[1]{\mathcal{L}(\Hi_{#1})} 
\newcommand\epr{\ket{\Phi^+}} 
\renewcommand\epr{\ket{\phi_0}} 
\newcommand{\ketbra}[2]{{\lvert #1\rangle\!\langle #2\rvert}}
\newcommand{\proj}[1]{\ketbra{#1}{#1}}
\newcommand\x\times
\newcommand\df\coloneqq 
\newcommand\e\varepsilon
\newcommand{\eps}{\varepsilon}
\newcommand\es\varnothing
\newcommand\then\longrightarrow
\newcommand\Then\Longrightarrow
\newcommand\pr[2]{\mathrm{Pr}^{#1}\left[#2\right]} 
\newcommand\var\mathsf 
\newcommand\vNextIndex{\var{nextIndex}}
\newcommand\vSeed{\var{seed}}
\newcommand\lab\mathsf 
\newcommand\lAcc{\lab{acc}}
\newcommand\lRej{\lab{rej}}
\newcommand\lNewMsg{\lab{newMsg}}
\newcommand\lSend{\lab{send}}
\newcommand\lSkip{\lab{skip}}
\newcommand\lGetKey{\lab{getKey}}
\newcommand\lSeed{\lab{seed}}
\newcommand\lInput{\lab{input}}
\newcommand\lCheat{\lab{cheat}}
\newcommand\One{\mathds1} 
\newcommand\mc\mathcal
\newcommand\oset[3][0ex]{\mathrel{\mathop{#3}\limits^{\vbox to#1{\kern-3\ex@\hbox{$\scriptstyle#2$}\vss}}}}
\newcommand\uar{\oset\${\vphantom{a}\smash\gets}} 
\newcommand\samp{\oset{\hspace{1mm}p_k}\longleftarrow} 
\newcommand\lp{\left(}
\newcommand\rp{\right)}
\newcommand\bin{\{0,1\}}
\newcommand\ite[3]{#1\,\texttt?\,#2\texttt:#3}
\newcommand\Tr{\mathrm{Tr}}
\DeclareMathOperator\spl{spl}
\newcommand{\qauth}{\mathrm{q\text-auth}}
\newcommand{\qconf}{\mathrm{q\text-otp}}
\newcommand\key{\mathsf{KEY}}
\newcommand\prg{\mathsf{PRG}}
\newcommand\prf{\mathsf{PRF}}
\newcommand\urf{\mathsf{URF}}
\newcommand\qc{\mathsf{QC}} 
\newcommand\iqc{\mathsf{IC}} 
\newcommand\xqc{\mathsf{PMCC}} 
\newcommand\nmqc{\mathsf{NMCC}} 
\newcommand\sqc{\mathsf{SC}} 
\newcommand\osqc{\mathsf{OSC}} 
\newcommand\iface{} 
\newcommand\iA{\iface A}
\newcommand\iB{\iface B}
\newcommand\iE{\iface E}
\newcommand\state\rho
\newcommand\reg[1]{^{#1}} 
\newcommand\rA{A}
\newcommand\rB{B}
\newcommand\rX{X}
\newcommand\rE{E}
\newcommand\rM{M}
\newcommand\rC{C}
\newcommand\cptp\mathcal 
\newcommand\ifmember[2]{
    \in@{#1}{#2}%
    \ifin@
    \expandafter\@firstoftwo
    \else
    \expandafter\@secondoftwo
    \fi
}
\newcommand\sch{\mathfrak S} 
\newcommand\sqes{SQES} 
\newcommand\Gen{\mathtt{Gen}}
\newcommand\Enc{\mathtt{Enc}}
\newcommand\Dec{\mathtt{Dec}}
\newcommand\Key{\mc K}  
\newcommand\Rnd{\mc R} 
\newcommand\Bot{\proj{\bot}} 
\newcommand\msg{\varrho}
\newcommand\ctx{\sigma}
\newcommand\aux{\omega}
\newcommand\Out\Rightarrow
\newcommand\IND{\mathsf{IND}\text-}
\newcommand\AGM{\mathsf{AGM}\text-}
\newcommand\RRC{\mathsf{RRC}\text-}
\newcommand\RRO{\mathsf{RRO}\text-}
\newcommand\RCCA{\mathsf{RCCA}}
\newcommand\CCA[1]{\mathsf{CCA#1}}
\newcommand\QCPA{\mathsf{QCPA}}
\newcommand\QCCA[1]{\mathsf{QCCA#1}}
\newcommand\QAE{\mathsf{QAE}}
\newcommand\QCNF{\mathsf{CC\text-QCNF}}
\newcommand\QSEC{\mathsf{CC\text-QSEC}}
\newcommand\agm{\mathrm{agm}\text-}
\newcommand\rrc{\mathrm{rrc}\text-}
\newcommand\rro{\mathrm{rro}\text-}
\newcommand\qcca[1]{\mathrm{qcca#1}}
\newcommand\qccat[1]{\mathrm{qcca#1\text-test}}
\newcommand\qccaf[1]{\mathrm{qcca#1\text-fake}}
\newcommand\qccar[1]{\mathrm{qcca#1\text-real}}
\newcommand\qccai[1]{\mathrm{qcca#1\text-ideal}}
\newcommand\qae{\mathrm{qae}}
\newcommand\qaer{\mathrm{qae\text-real}}
\newcommand\qaei{\mathrm{qae\text-ideal}}
\newcommand\qenc{\mathrm{q\text-enc}}
\newcommand\Adv[3]{\mathbf{Adv}^{#1}_{#2,#3}}
\newcommand\sys\mathsf 
\newcommand\enc{\mathsf{enc}} 
\newcommand\dec{\mathsf{dec}}
\newcommand\simul{\mathsf{sim}}
\newcommand\cmt[1]{\Comment\textit{\color{gray}#1}}
\newcommand\Each{{\bf each }}
\newcommand{\cB}{\mathcal{B}}
\newcommand{\cC}{\mathcal{C}}
\newcommand{\cD}{\mathcal{D}}
\newcommand{\cE}{\mathcal{E}}
\newcommand{\cH}{\mathcal{H}}
\newcommand{\cL}{\mathcal{L}}
\newcommand{\cM}{\mathcal{M}}
\newcommand{\bD}{\mathds{D}}
\newcommand{\bR}{\mathds{R}}
\newcommand{\bS}{\mathds{S}}
\theoremstyle{remark}
\newtheorem*{remark*}{Remark}
\begin{document} 
    
    \mainmatter
    
    \title\tit
    \titlerunning\runtit
    \toctitle\tit
    \ifsub
	\author{}    
    \authorrunning{Anonymous Authors}
	\institute{}    
    \else
    \author\aut
    \authorrunning\runaut
    \institute{\instit\\\email{\emails}}
    \tocauthor\aut
    \fi
    \maketitle

\begin{abstract}
Recent research in quantum cryptography has led to the development
of schemes that encrypt and authenticate quantum messages with
computational security. The security definitions used so far in the
literature are asymptotic, game\-/based, and not known to be
composable. We show how to define finite, composable, computational
security for secure quantum message transmission. The new
definitions do not involve any games or oracles, they are directly
operational: a scheme is secure if it transforms an insecure channel
and a shared key into an ideal secure channel from Alice to Bob,
i.e., one which only allows Eve to block messages and learn their
size, but not change them or read them. By modifying the ideal
channel to provide Eve with more or less capabilities, one gets an
array of different security notions. By design these transformations
are composable, resulting in composable security.

Crucially, the new definitions are \emph{finite}. Security does not
rely on the asymptotic hardness of a computational problem. Instead,
one proves a finite reduction: if an adversary can distinguish the
constructed (real) channel from the ideal one (for some fixed security
parameters), then she can solve a finite instance of some
computational problem. Such a finite statement is needed to make
security claims about concrete implementations.

We then prove that (slightly modified versions of) protocols proposed
in the literature satisfy these composable definitions. And finally,
we study the relations between some game\-/based definitions and our
composable ones. In particular, we look at notions of quantum
authenticated encryption and $\QCCA2$, and show that they suffer from
the same issues as their classical counterparts: they exclude certain
protocols which are arguably secure.
\end{abstract}

    \bibliographystyle{eprintalpha}
    
    \iffinal
    \section{Introduction}
\label{sec:intro}

At its core, a security definition is a set of mathematical
conditions, and a security proof consists in showing that these
conditions hold for a given protocol. Given various security
definitions, one may analyze which are stronger and weaker by proving
reductions or finding separating examples. This however does not tell
us which definitions one should use, since too weak definitions may
have security issues and too strong definitions may exclude protocols
that are arguably secure. For example, $\IND\CCA2$ is often considered
an unnecessarily strong security definition, since taking a scheme
which is $\IND\CCA2$ and appending a bit to the ciphertext results in
a new encryption scheme that is arguably as secure as the original
scheme, but does not satisfy $\IND\CCA2$
\cite{canetti2003relaxing,CMT13}.

In this work we take a more critical approach to defining security. We
ask what criteria a security definition needs to satisfy that are both
necessary and sufficient conditions to call a protocol ``secure''. We
then apply them to the problem of encrypting and authenticating
quantum messages with computational security in the symmetric\-/key
setting.

\subsection{A Security Desideratum}
\label{sec:intro.desideratum}

\paragraph{Operational security.} Common security definitions for
encryption and authentication found in the literature are
\emph{game\-/based}, i.e., they require that an adversary cannot win a
game such as guessing what message has been encrypted given access to
certain oracles, see, e.g., \cite{BDPR98} and \cite{KY06} for
comparisons of various such games in the public\-/key and
private\-/key settings, respectively. These have been adapted for
transmitting quantum messages: a definition for $\QCPA$ has been
proposed in \cite{BJ15}, $\QCCA1$ in \cite{alagic2016computational},
and $\QCCA2$ as well as notions of quantum unforgeability and quantum
authenticated encryption in \cite{alagic2018unforgeable}. These are
just some of the security games one can imagine \--- in the classical,
symmetric\-/key setting, \cite{KY06} analyzes 18 different security
notions. A natural question is then to ask which of these games are
the relevant ones, for which ones is it both necessary and sufficient
that an adversary cannot win them. And the general answer is: we do
not know.

Through such cryptographic protocols one wishes to prevent
an adversary from learning some part of a message or modifying a
message undetected. But it is generally unclear how such game\-/based
security definitions relate to these operational notions \--- we refer
to \cite{MRT12} for a more in-depth critique of game\-/based security.
Instead, one should directly define security
\emph{operationally}.\footnote{Note that once a game\-/based
  definition has been proven to capture operational notions such as
  confidentiality or authenticity (e.g., via a reduction), then the
  game\-/based criterion may become a benchmark for designing schemes
  with the desired security; see the discussion in
  \autoref{sec:intro.alternative}.} In this work we follow the
constructive paradigm of
\cite{maurer2011abstract,maurer2012constructive,maurer2016specifications},
and define a protocol to be secure if it constructs a channel with the
desired properties, e.g., only leaks the message size or only allows
the adversary to block the message, but not change it or insert new
messages.

\paragraph{Composable security.} A second drawback of the definitions
proposed so far in the literature for computational security of
quantum message transmission
\cite{BJ15,alagic2016computational,alagic2018unforgeable} is that they
are not (proven to be) \emph{composable}. A long history of work on
composable security has shown that analyzing a protocol in an isolated
setting does not imply that it is actually secure when one considers
the environment in which it is used. When performing such a composable
security analysis, one sometimes finds that the definitions used are
inappropriate but the protocols are actually secure like for quantum
key distribution~\cite{Ren05,BHLMO05,KRBM07}, that the definitions are
still secure (up to a loss of security parameter) like for delegated
quantum computation~\cite{DFPR14}, or that not only the definitions
but also the protocols are insecure like in relativistic and bounded
storage bit commitment and (biased) coin
tossing~\cite{vilasini2017relativisitic}.\footnote{Note that a
  negative result in a composable framework only proves that a
  protocol does not construct the desired ideal functionality. This
  does not exclude that the protocol may construct some other ideal
  functionality or may be secure given some additional set-up
  assumptions.} It is thus necessary for a protocol to be proven to
satisfy a composable security definition before it may be considered
(provably) secure and safely used in an arbitrary environment.

\paragraph{Finite security.} A third problem with the aforementioned
security definitions is that they are all \emph{asymptotic}. This
means that the protocols have a security parameter $k \in \N$ \---
formally, one considers a sequence of protocols $\{\Pi_k\}_{k \in \N}$
\--- and security is defined in the limit when $k \to \infty$. An
implementation of a protocol will however always be finite, e.g., the
honest players choose a specific parameter $k_0$ which they consider
to be sufficient and run $\Pi_{k_0}$. A security proof for
$k \to \infty$ does not tell us anything about security for any
specific parameter $k_0$ and thus does not tell us anything about the
security of $\Pi_{k_0}$, which is run by the honest players. To
resolve this issue, some works consider what is called \emph{concrete
  security} \cite{bellare1997concrete}, i.e., instead of hiding
parameters in $O$\=/notation, security bounds and reductions are given
explicitly. This is a first step at obtaining finite security, but it
still considers the security of a sequence $\{\Pi_k\}_{k \in \N}$
instead of security of the individual elements $\Pi_{k_0}$ in this
sequence. For example, one still considers adversaries that are
polynomial in $k$, simulators that must be efficient in $k$, and
errors that are negligible in $k$. But the security definition of some
$\Pi_{k_0}$ should not depend on any other elements in the sequence,
on how the sequence is defined or whether it is defined at all. Hence
notions such as poly-time, efficiency, or negligibility should not be
part of a security definition for some specific $\Pi_{k_0}$. We call
the security paradigm that analyzes individual elements $\Pi_{k_0}$
\emph{finite security}, and show in this work how to define it for
computational security of quantum message transmission.

\subsection{Overview of Results}

Our contributions are threefold. We first provide definitions for
encryption and authentication of quantum messages that satisfy the
desideratum expressed above. In particular, we show how to define
finite security in the computational case. In
\autoref{sec:intro.finite} below we explain the intuition behind this
security paradigm.

We then show that (slightly modified) protocols from the literature
\cite{alagic2016computational,alagic2018unforgeable} satisfy these
definitions. These protocols use the quantum one-time pad and quantum
information\-/theoretic authentication as subroutine
\cite{barnum2002qauth,portmann2017qauth}, but run them with keys that
are only computationally secure to encrypt multiple messages. We
explain the constructions and what is achieved in more detail in
\autoref{sec:intro.constructions}.

Now that we have security definitions that satisfy our desideratum, we
revisit some game\-/based definitions from the literature, and compare
them to our own notions of security. An overview of these results is
given in \autoref{sec:intro.games}.

\subsection{Finite Computational Security}
\label{sec:intro.finite}

In traditional asymptotic security, a cryptographic protocol is
parameterized by a single value $k \in \N$ \--- any other parameters
must be expressed as a function of $k$ \--- and one studies a sequence
of objects $\{\Pi_k\}_{k \in \N}$. In composable security, one uses
this to define a parameterized real world
$\bR = \{\sys R_k\}_{k \in \N}$ and ideal world
$\bS = \{\sys S_k\}_{k \in \N}$, and argues that no polynomial
distinguisher $\bD = \{\sys D_k\}_{k \in \N}$ can distinguish one from
the other with non\-/negligible advantage. At first glance the notions
of polynomial distinguishers and negligible functions might seem
essential, because an unbounded distinguisher can obviously
distinguish the two, and without a notion of negligibility, how can
one define what is a satisfactory bound on the distinguishability.

The latter problem is the simpler to address: instead of categorizing
distinguishability as black or white (negligible or not), we give
explicit bounds. The former issue is resolved by observing that we
never actually prove that the real and ideal world are
indistinguishable (except in the case of information\-/theoretic
security), since in most cases that would amount to solving a problem
such as $\mathsf{P} \neq \mathsf{NP}$. What one actually proves is a
\emph{reduction}, which is a finite statement, not an asymptotic
one. More precisely, one proves that if $\sys D_k$ can distinguish
$\sys R_k$ from $\sys S_k$ with advantage $p_k$, then some (explicit)
$\sys D'_k$ can solve some problem $W_k$ with probability $p'_k$ \---
if one believes that $W_k$ is asymptotically hard to solve, then this
implies that $\bD$ cannot distinguish $\bR$ from $\bS$.

A finite security statement stops after the reduction. We prove that
for any $k_0$ and any $\sys D_{k_0}$,
\begin{equation} \label{eq:finitesecurity}
d^{\sys D_{k_0}}(\sys R_{k_0}, \sys S_{k_0}) \leq f(D_{k_0}),
\end{equation}
where $d^{\sys D_{k_0}}(\cdot,\cdot)$ denotes the advantage $\sys
D_{k_0}$ has in distinguishing two given systems, and $f(\cdot)$ is
some arbitrary function, e.g., the probability that $\sys D'_{k_0}$ (which
is itself some function of $D_{k_0}$) can solve some problem $W_{k_0}$.

\autoref{eq:finitesecurity} does not require systems to be part of a
sequence with a single security parameter $k \in \N$. There may be no
security parameter at all, or multiple
parameters. Information\-/theoretic security corresponds to the
special case where one can prove that $f(\sys D_{k_0})$ is small for
all $\sys D_{k_0}$.

\subsection{Constructing Quantum Channels}
\label{sec:intro.constructions}

As mentioned in \autoref{sec:intro.desideratum}, we use the Abstract
and Constructive Cryptography (AC) framework of Maurer and Renner
\cite{maurer2011abstract,maurer2012constructive,maurer2016specifications}
in this work. To define the security of a message transmission
protocol, we need to first define the type of channel we wish to
achieve \--- for simplicity, we always consider channels going from
Alice to Bob.

The strongest channel we construct in this work is an ordered secure
quantum channel, $\osqc$, which allows Eve to decide which messages
that Alice sent will be delivered to Bob and which ones get discarded.
But it does not reveal any information about the messages (except
their size and number) to Eve and guarantees that the delivered
messages arrive in the same order in which they were sent. A somewhat
weaker channel, a secure channel $\sqc$, also allows Eve to block or
deliver each message, but additionally allows her to jumble their
order of arrival at Bob's.

Our first result shows that a modified version of a protocol from
\cite{alagic2018unforgeable} constructs the strongest channel,
$\osqc$, from an insecure channel and a short key that is used to
select a function from a pseudo\-/random family (PRF). Security holds
for any distinguisher that cannot distinguish the output of the PRF
from the output of a uniform function. We also show how one can
construct $\osqc$ from $\sqc$ by simply appending a counter to the
messages.

The two channels described above are labeled ``secure'', because they
are both confidential (Eve does not learn anything about the messages)
and authentic (Eve cannot change or insert any messages). If we are
willing to sacrifice authenticity, we can define weaker channels that
allow Eve to modify or insert messages in specific ways. We define a
non\-/malleable confidential channel, $\nmqc$ \--- which does not
allow Eve to change a message sent by Alice, but does allow her to
insert a message of her choice \--- and a Pauli\-/malleable channel,
$\xqc$ \--- which allows Eve to  apply
bit and phase flips to Alice's messages or insert a fully mixed state.

Our second construction modifies a protocol from
\cite{alagic2016computational} to construct $\xqc$ from an insecure
channel and a short key that is used to select a function from a
pseudo\-/random family (PRF). Here too, security holds for any
distinguisher that cannot distinguish the PRF from uniform.

\subsection{Comparison to Game-Based Definitions}
\label{sec:intro.games}

In the last part of this work, we relate existing game\-/based security
definitions for quantum encryption with our new proposed security
definitions phrased in constructive cryptography. More concretely, we
focus on the notions of \emph{quantum ciphertext indistinguishability
  under adaptive chosen-ciphertext attack} ($\QCCA2$) and
\emph{quantum authenticated encryption} ($\QAE$), both introduced in
\cite{alagic2018unforgeable}.

We first note that encryption schemes are defined to be stateless in
\cite{BJ15,alagic2016computational,alagic2018unforgeable} and the
proposed game\-/based definitions are tailored to such schemes. The
restricted class of encryption protocols analyzed can thus not
construct ordered channels, because the players need to remember tags
numbering the messages to be able to preserve this ordering. The
strongest notion of encryption from these works, namely $\QAE$, is
thus closest to constructing a $\sqc$. In fact, we show that $\QAE$ is
\emph{strictly} stronger than constructing a $\sqc$: a scheme
satisfying $\QAE$ constructs a $\sqc$, however there are (stateless)
schemes constructing a $\sqc$ that would be considered insecure by the
$\QAE$ game. These schemes are obtained in the same way as the ones
showing that classical $\IND\CCA2$ is unnecessarily strong: one starts
with a scheme satisfying $\QAE$ and appends a bit to the ciphertext,
resulting in a new scheme that still constructs a $\sqc$, but is not
$\QAE$-secure. Our proof shows that $\QAE$ may be seen as constructing
a $\sqc$ with a \emph{fixed} simulator that is hard\-/coded in the
game. A composable security definition only requires the existence of
a simulator, and the separation between the two notions is obtained by
considering schemes that can be proven secure using a different
simulator than the one hard\-/coded in the game.

For $\QCCA2$, we first propose an alternative game\-/based security
notion that captures the same intuition, but which we consider more
natural than the one suggested in \cite{alagic2018unforgeable}. In
particular, its classical analogue is easily shown to be equivalent to
a standard $\IND\CCA2$ notion, whereas the notion put forth in
\cite{alagic2018unforgeable}, when cast to a classical definition,
incurs a concrete constant factor loss when compared to $\IND\CCA2$,
and requires a complicated proof of this fact. We then show that for a
restricted class of protocols (which includes all the ones for which a
security proof is given in previous work), our new game-based notion
indeed implies that the protocol constructs a $\nmqc$. The same
separation holds here as well: $\QCCA2$ definitions are unnecessarily
strong, and exclude protocols that naturally construct a $\nmqc$. Note
that in the classical case, the $\IND\RCCA$
game~\cite{canetti2003relaxing} that was developed to avoid the
problems of $\IND\CCA2$ has been shown to be exactly equivalent to
constructing a classical non\-/malleable confidential channel in the
case of large message spaces~\cite{CMT13}.

\subsection{Alternative Security Notions}
\label{sec:intro.alternative}

Common security definitions often capture properties of (encryption)
schemes, e.g., let $M$ be a plaintext random variable, let $C$ be the
corresponding ciphertext, $H$ is the entropy function, $M'$ is the
received plaintext, and $\mathsf{accept}$ is the event that the
message is accepted by the receiver, then
\begin{equation} \label{eq:conf.auth} H(M|C) = H(M) \qquad \text{and}
  \qquad \Pr \left[M \neq M' \text{ and } \mathsf{accept}\right] \leq
  \eps\end{equation} are simple notions of confidentiality and
authenticity, respectively. But depending on how schemes satisfying
these equations are used \--- e.g., encrypt\-/then\-/authenticate or
authenticate\-/then\-/encrypt \--- one gets drastically different
results.\footnote{Encrypt\-/then\-/authenticate is always secure, but
  one can find examples of schemes satisfying \eqref{eq:conf.auth}
  following the authenticate\-/then\-/encrypt paradigm that are
  insecure~\cite{BN00,Kra01,MT10}.} The equations in
\eqref{eq:conf.auth} may be regarded as crucial security properties of
encryption schemes, but before schemes satisfying these may be safely
used, one needs to consider the context and prove what is actually
achieved by such constructs (in an operational sense).

The same applies to security definitions proposed for quantum key
distribution. The accessible
information\footnote{$I_{\text{acc}}(K;E) \coloneqq \max_\Gamma
  I(K;\Gamma(E))$, where $\rho_{KE}$ is the joint state of the secret
  key $K$ and the adversary's information $E$, and $\Gamma(E)$ is the
  random variable resulting from measuring the $E$ system with a POVM
  $\Gamma$.}  and the trace distance
criterion\footnote{$\| \rho_{KE} - \tau_K \otimes \rho_E\|$, where
  $\rho_{KE}$ is the joint state of the secret key $K$ and the
  adversary's information $E$ and $\tau_K$ is a fully mixed state.}
capture different properties of a secret key. If a scheme satisfying
the former is used with an insecure quantum channel, then the
resulting key could be insecure, but if the channel only allows the
adversary to measure and store classical information, then the key has
information\-/theoretic security~\cite{KRBM07,portmann2014qkd}. A
scheme satisfying the latter notion \--- the trace distance criterion
\--- constructs a secure key even when the quantum channel used is
completely insecure~\cite{Ren05,BHLMO05,portmann2014qkd}. Neither
criterion is a satisfactory security definition on its own, they both
require a further analysis to prove whether a protocol satisfying them
does indeed distribute a secure key. But now that this has been
done~\cite{BHLMO05,portmann2014qkd}, the trace distance criterion has
become a reference for what a quantum key distribution scheme must
satisfy~\cite{SBCDLP09,TL17}.

Previous works on the computational security of quantum message
transmission \cite{BJ15,alagic2016computational,alagic2018unforgeable}
as well as the new definition of $\QCCA2$ proposed on this paper may
be viewed in the same light. These game\-/based definitions capture
properties of encryption schemes. But before a scheme satisfying these
definitions may be safely used, one needs to analyze how the scheme is
used and what is achieved by it. The constructive definitions
introduced in this work and the reductions from the game\-/based
definitions do exactly this. As a result of this, $\QAE$ or $\QCCA2$
may be used as a benchmark for future schemes \--- though unlike the
trace distance criterion, they are only sufficient criteria, not
necessary ones.

\subsection{Other Related Work}
\label{sec:intro.related}

The desideratum expressed in \autoref{sec:intro.desideratum} is the
fruit of many different lines of research that go back to the late
90's. We give an incomplete overview of some of this work in this
section.

Composable security was introduced independently by Pfitzmann and
Waidner~\cite{PW00,PW01,BPW04,BPW07} and
Canetti~\cite{Can01,CDPW07,Can13}, who each defined their own
framework, dubbed \emph{reactive simulatability} and \emph{universal
  composability} (UC), respectively. Unruh adapted UC to the quantum
setting~\cite{unruh2010quantumUC}, whereas Maurer and Renner's AC
applies to any model of computation, classical or
quantum~\cite{maurer2011abstract}. Quantum UC may however not be used
for finite security without substantial modifications, since it
hard-codes asymptotic security in the framework: machines are defined
by sequences of operators $\left\{\mathcal{E}^{(k)}\right\}_{k}$,
where $k \in \N$ is a security parameter, and distinguishability
between networks of machines is then defined asymptotically in
$k$.\footnote{The object about which ones makes a security statement
  is quite different in an asymptotic and a finite framework. In the
  former it is an infinite sequence of behaviors (e.g., a
  \emph{machine} in UC), whereas in the later it is an element in such
  a sequence (the sequence itself is not necessarily
  well\-/defined). One thus composes different objects in the two
  models, and a composition theorem in one model does not immediately
  translate to a composition theorem in the other.}

Concrete security~\cite{bellare1997concrete} addresses the issues of
reductions and parameters being hidden in $O$\=/notation by requiring
them to be explicit. Theses works consider distinguishing advantages
(or game winning probabilities) as a function of the allowed
complexity or running time of the distinguisher, and aim at proving as
tight statements a possible.  In such an approach, one would have to
define a precise computational model. This, however, is avoided,
meaning that any model in a certain class of meaningful models is
considered equivalent. This unavoidably means that the security
statements are asymptotic, at least with an unspecified linear or
sublinear term. In contrast, the objects we consider, including
distinguishers, are discrete systems and are directly composed as
such, without need for considering a computational model for
implementing the systems.

In the classical case, a model of discrete systems that may be used
for finite security is \emph{random
  systems}~\cite{maurer2002indistinguishability,maurer2007indistinguishability}.
Generalizations to the quantum case have been proposed by Gutoski and
Watrous~\cite{GW07,Gut12} \--- and called \emph{quantum strategies}
\--- by Chiribella, D'Ariano and Perinotti~\cite{chiribella2009combs}
\--- called \emph{quantum combs} \--- and by
Hardy~\cite{Har11,Har12,Har15} \--- \emph{operator tensors}. A model
for discrete quantum systems that can additionally model time and
superpositions of causal structures is the \emph{causal boxes}
framework~\cite{portmann2017causalboxes}.

None of the previous works on computational security of quantum
message transmission satisfy any of the three criteria outlined in
\autoref{sec:intro.desideratum}. These criteria are however standard
by now for quantum key distribution~\cite{portmann2014qkd,TL17}. In
the classical case, they have also been used for computational
security, e.g., \cite{MRT12,CMT13}.

\subsection{Structure of this Paper}
\label{sec:intro.structure}

In \autoref{sec:ac} we introduce the elements needed from
AC~\cite{maurer2011abstract,maurer2012constructive,maurer2016specifications},
and from the discrete system model with which we instantiate AC,
namely quantum combs~\cite{chiribella2009combs}. This allows us to
define the notion of a finite construction of a resource (e.g., a
secure channel) from another resource (e.g., an insecure channel and a
key). In \autoref{sec:constructions} we first define the channels and
other resources needed in this work. Then we give protocols and prove
that they construct various confidential and secure channels, as
outlined in \autoref{sec:intro.constructions}. Finally, in
\autoref{sec:relations} we compare our security definitions to some
game\-/based ones from the literature~\cite{alagic2018unforgeable} and
prove the results described in \autoref{sec:intro.games}.

    \section{Abstract \& Constructive Cryptography}
\label{sec:ac}

In this section we give a brief introduction to the Abstract and
Constructive Cryptography (AC) framework. This framework views
cryptography as a resource theory. Players may share certain resources
\--- e.g., secret key, an authentic channel, a public\-/key
infrastructure, common reference strings, etc. \--- and use these to
construct other resources \--- e.g., an authentic channel, a secure
channel, secret key, a bit commitment resource, an idealization of a
multipartite function, etc. A protocol is thus a map between
resources. We give an overview of this in \autoref{sec:ac.theory} and
refer to
\cite{maurer2011abstract,maurer2012constructive,portmann2014qkd,maurer2016specifications}
for further reading.

To illustrate how this framework is used, we model known examples from
the literature in \autoref{sec:example.it} and
\autoref{sec:example.computational}, namely one\-/time authentication
and encryption of quantum messages in the cases of both
information\-/theoretic security and computational security,
respectively. In \aref{app:notation} we define the (quantum) notation
that we use throughout this paper.

The resource theory from \autoref{sec:ac.theory} does not depend on
the model of systems considered \--- it may be instantiated with
classical or quantum systems, synchronous or asynchronous
communication, sequential or timed scheduling. In this work we
consider asynchronous, sequential, quantum communication. This type of
system has been studied by Gutoski and Watrous~\cite{GW07,Gut12},
Chiribella, D'Ariano and Perinotti~\cite{chiribella2009combs}, and
Hardy~\cite{Har11,Har12,Har15}, and we refer to it in the following
using the term from \cite{chiribella2009combs}, namely \emph{quantum
  combs}. Quantum combs are a generalization of \emph{random
  systems}~\cite{maurer2002indistinguishability,maurer2007indistinguishability}
to quantum information theory. In \autoref{sec:ac.instantiation} we
give an brief overview of quantum combs, and refer to the literature
above for more details.

\subsection{A Resource Theory}
\label{sec:ac.theory}

As mentioned at the beginning of this section, the AC framework views
cryptography as a resource theory in which a protocol is a
transformation between resources. More precisely, a protocol $\pi$
uses some resource $\sys R$ (the \emph{assumed} resource) to construct
some other resource $\sys S$ (the \emph{constructed} resource) within
$\eps$, where $\eps$ may be thought of as the error of the
construction. We denote this \begin{equation} \label{eq:construction}
  \sys R \xrightarrow{\pi,\eps} \sys S.\end{equation} A formal
definition of \eqnref{eq:construction} is given below in
\autoref{def:security}. Such a security statement is
\emph{composable}, because if $\pi_1$ constructs $\sys S$ from
$\sys R$ within $\eps_1$ and $\pi_2$ constructs $\sys T$ from $\sys S$
within $\eps_2$, the composition of the two protocols, $\pi_2\pi_1$,
constructs $\sys T$ from $\sys R$ within $\eps_1+\eps_2$, i.e.,
\begin{equation} \label{eq:ac.composition1}
\left.\begin{aligned}
& \sys R \xrightarrow{\pi_1,\eps_1} \sys S \\
& \sys S \xrightarrow{\pi_2,\eps_2} \sys T
\end{aligned}\right\} \implies \sys R \xrightarrow{\pi_2\pi_1,\eps_1+\eps_2} \sys T.
\end{equation}
Furthermore, if $\pi$ constructs $\sys S$ from
$\sys R$ within $\eps$, then when some new resource $\sys U$ is
accessible to the players, the construction still holds, albeit with a
new error $\eps'$, since the computational power of $\sys U$ may be
used by the dishonest parties to improve their attack, i.e.,
\begin{equation} \label{eq:ac.composition2}
\sys R \xrightarrow{\pi,\eps} \sys S \implies \left[ \sys R , \sys U \right]
\xrightarrow{\pi,\eps'} \left[ \sys S , \sys U\right].
\end{equation}
A proof of these two statements may be found in
\aref{app:composition}.

In order to formalize \eqnref{eq:construction}, we need to define the
different elements, namely resources $\sys R,\sys S$, a protocol $\pi$
and the measure of error yielding $\eps$. We do this briefly in the
following, and refer to 
\cite{maurer2011abstract,maurer2012constructive,portmann2014qkd,maurer2016specifications}
for more details.

\paragraph{Resources.} In a setting with $N$ parties, a
\emph{resource} is an interactive system with $N$ interfaces \--- in
the three party setting, the interfaces are usually denoted $A$, $B$,
and $E$, for Alice, Bob, and Eve \--- allowing different parties to
interact with their interface, i.e., provide inputs and receive
outputs. In the following sections, resources are instantiated with
quantum combs \cite{chiribella2009combs}, but the security definition
is independent of the instantiation, so we defer discussing quantum
combs to \autoref{sec:ac.instantiation}. The only property of
resources that we need in this section is that many resources
$\sys R_1,\dotsc,\sys R_n$ taken together form a new resource, which
we write $\left[\sys R_1,\dotsc,\sys R_n\right]$. A party interacting
at their interface of such a composed resource has simultaneous access
to the corresponding interfaces of the different $\sys R_i$, and may
provide inputs or receive outputs at any of these.

An example of a system with multiple resources is given in
\autoref{sec:example.it}, where we model the security of one-time
authentication and encryption of quantum messages. The scheme, which
is illustrated in \autoref{fig:auth.real}, involves four resources,
$\key^\mu$, $\iqc^{1,n}$, $\qc^{1,m,n}_A$, and $\qc^{1,m,n}_B$, drawn
with square corners. In such a figure, we always put Alice on the
left, Bob on the right, and Eve below. Each party has access to the
inputs and outputs (illustrated by arrows) at their respective sides
of the drawing (at their interfaces). The exact definition of these
resources is provided in \autoref{sec:constructions}, after quantum
combs have been introduced.

\paragraph{Converters.} A \emph{converter} may be thought of as local
operations performed by some party. In \autoref{fig:auth.real}
there are two converters, $\pi_A^\qauth$ and $\pi_B^\qauth$, that
correspond to Alice's and Bob's part of the encryption and
authentication protocol. Converters are always drawn with rounded
corners. Each converter is associated to one interface, written as
subscript, which denotes the party applying the converter. A protocol
is then a tuple of converters, one for each honest party, e.g.,
$\pi^\qauth_{AB} = (\pi_A^\qauth,\pi_B^\qauth)$.

Formally, a converter is defined as a function mapping a resource to
another resource.  Composition of converters is defined as the
composition of the corresponding functions, and we usually write
$\pi'\pi$ instead of $\pi' \circ \pi$. Furthermore, converters acting
at different interfaces always commute, i.e.,
$\pi_A\pi_B = \pi_B\pi_A = \pi_{AB}$. The system drawn in
\autoref{fig:auth.real} is thus a resource given by
\[\sys R = \pi^\qauth_{AB}\left[\key^\mu,\iqc^{1,n},\qc^{1,m,n}_A,\qc^{1,m,n}_B\right].\]

\begin{remark}[Explicit memory and computational requirements]
\label{rem:computation}
In the spirit of making explicit statements \--- as opposed to hiding
parameters in $O$\-/notation or poly\-/time statements \--- in the
following we instantiate converters with systems that are memory-less
and incapable of performing any operation on their inputs, i.e., they
only forward messages between different resources and their outside
interface. This forces any memory and computation capabilities of
parties to be explicitly modeled as resources, e.g., the ``quantum
computer'' resources $\qc^{1,m,n}_A$ and $\qc^{1,m,n}_B$ in
\autoref{fig:auth.real}, which are required for performing the
encryption and decryption operations. We note that one could
equivalently ``absorb'' $\qc^{1,m,n}_A$ into $\pi^\qauth_A$ and $\qc^{1,m,n}_B$
into $\pi^\qauth_B$, which would result in exactly the same security
statement, but not have the computational requirements spelled
out. 
\end{remark}

\paragraph{Pseudo-metric.} The final component needed before we can
formalize \eqnref{eq:construction} is a pseudo\-/metric,\footnote{A
  pseudo\-/metric $d(\cdot,\cdot)$ is a function which is
  non\-/negative, symmetric, satisfies the triangle inequality, and
  for any $\sys R$, $d(\sys R, \sys R)=0$. If additionally
  $d(\sys R,\sys S) = 0 \implies \sys R = \sys S$, then
  $d(\cdot,\cdot)$ is a metric.} which is used to define the error
$\eps$. We do this in the standard way by defining a
\emph{distinguisher} $\sys D$ to be a system that interacts with a
resource, and outputs a bit. Let $\sys D[\sys R]$ be the random
variable corresponding to the distinguisher's output when interacting
with $\sys R$. Then the functions
\begin{align*}
\Delta^{\sys D}(\sys R, \sys S) & \df \left| \pr{}{\sys D [\sys R] = 0} - \pr{}{\sys D[\sys S] = 0} \right| \\
\intertext{and}
d^{\cD}(\sys R, \sys S) & \df \sup_{\sys D \in \cD} \Delta^{\sys D}(\sys R, \sys S) 
\end{align*}
are pseudo\-/metrics for any set of distinguishers $\cD$. We define
the error of a construction using one particular set $\cD$, namely the
set of distinguishers obtained from some distinguisher $\sys D$ by adding
or removing converters between $\sys D$ and the measured resources. As
stated in \autoref{rem:computation}, converters have no memory or
computational power, so this results in a class of ``equivalent''
distinguishers. Thus, for any distinguisher $\sys D$, we define the class
\begin{equation} \label{eq:distinguisher.class}
\cB(\sys D) \coloneqq \left\{ \sys D' \middle| \exists \alpha \text{ such that }
    \sys D\alpha = \sys D' \text{ or } \sys D'\alpha = \sys D \right\},
\end{equation} where $\Delta^{\sys D\alpha}(\sys R,\sys S) = \Delta^{\sys D}(\alpha
\sys R,\alpha \sys S)$. Abusing somewhat notation, we often write $\sys D$
instead of $\cB(\sys D)$. In the following, $d^{\sys D}(\cdot,\cdot)$
always refers to the pseudo\-/metric using the class of distinguishers
generated from $\sys D$ as in \eqnref{eq:distinguisher.class}. 

\paragraph{Cryptographic Security.} We now formalize the notion of
(secure) resource construction in the three party setting, with honest
Alice and Bob and dishonest Eve.

\begin{definition}[Cryptographic security \cite{maurer2011abstract}]
\label{def:security}
Let $\eps$ be a function from distinguishers to real numbers. We say
that a protocol $\pi_{AB} = (\pi_A,\pi_B)$ constructs a resource
$\sys S$ from a resource $\sys R$ within $\eps$ if there exists a
converter $\simul_E$ (called a \emph{simulator}) such that for all $\sys D$,
\[ d^{\sys D}(\pi_{AB} \sys R,\simul_E \sys S) \leq \eps(\sys D).\] If this
holds, then we write
\[\sys R \xrightarrow{\pi,\eps} \sys S.\] When the
resources $\sys R,\sys S$ are clear from the context, we say that
$\pi$ is $\eps$\-/secure.
\end{definition}

$\pi_{AB} \sys R$ is often referred to as the \emph{real} system, and
$\simul_E \sys S$ as the \emph{ideal} one. We emphasis that an ideal (or
\emph{constructed}) resource $\sys S$ will be used as the real (or
\emph{assumed}) resource in the next construction, so the terms
\emph{real} and \emph{ideal} are relative.

\begin{remark}[Efficiency \& negligibility]
\label{rem:efficiency}
Note that \autoref{def:security} does not contain any notion of
efficiency such as poly\-/time, nor does it mention that the error
must be negligible. This is because concepts like efficiency and
negligibility are only defined \emph{asymptotically}. As explained in
\autoref{sec:intro.finite}, we define \emph{finite} security in this
work. In this setting, \emph{efficient} and \emph{negligible} are
ill\-/defined, and cannot be used \--- nor are they needed. Asymptotic
security statements may be recovered by considering sequences of
finite security statements, and taking the limit. This is further
explained in \autoref{sec:example.computational}.
\end{remark}

\subsection{Finite Information-Theoretic Security of One-Time
  Quantum Message Authentication and Encryption}
\label{sec:example.it}

\begin{figure}[tb]
\begin{centering}

\begin{tikzpicture}[
resourceLong/.style={draw,thick,minimum width=3.5cm,minimum height=1cm},
resource/.style={draw,thick,minimum width=1cm,minimum height=1cm},
sArrow/.style={->,>=stealth,thick},
sLine/.style={-,thick},
protocol/.style={draw,thick,minimum width=1.6cm,minimum height=4cm,rounded corners},
pnode/.style={minimum width=1cm,minimum height=1cm}]

\small

\def\t{4.6} 
\def\a{3.05} 
\def\v{1.5}
\def\w{.45}
\def\x{1.25}
\def\z{2.75} 

\node[resourceLong] (channel) at (0,-\v) {};
\node[yshift=-1.5,above right] at (channel.north west) {\footnotesize
  Insecure channel $\iqc^{1,n}$};
\node[resource] (qa) at (-\x,0) {};
\node[yshift=-1.5,above] at (qa.north) {\footnotesize $\qc^{1,m,n}_A$};
\node[resource] (qb) at (\x,0) {};
\node[yshift=-1.5,above] at (qb.north) {\footnotesize $\qc^{1,m,n}_B$};
\node[resourceLong] (key) at (0,\v) {};
\node[yshift=-1.5,above right] at (key.north west) {\footnotesize
  Secret key $\key^\mu$};
\node[protocol] (protA) at (-\a,0) {};
\node[yshift=-1.5,above right] at (protA.north west) {\footnotesize $\pi^\qauth_A$};
\node[pnode] (a1) at (-\a,\v) {};
\node[pnode] (a2) at (-\a,0) {};
\node[pnode] (a3) at (-\a,-\v) {};
\node[protocol] (protB) at (\a,0) {};
\node[yshift=-1.5,above left] at (protB.north east) {\footnotesize $\pi^\qauth_B$};
\node[pnode] (b1) at (\a,\v) {};
\node[pnode] (b2) at (\a,0) {};
\node[pnode] (b3) at (\a,-\v) {};

\node (aliceMiddle) at (-\t,0) {};
\node (bobMiddle) at (\t,0) {};
\node (eveLeft) at (-\w,-\z) {};
\node (eveRight) at (\w,-\z) {};

\draw[sArrow] (aliceMiddle) to node[auto,pos=.4] {$\rho^M$} (a2);
\draw[sArrow] (a3) to (eveLeft |- a3) to node[auto,swap,pos=.75,xshift=2] {$\sigma^C$} (eveLeft);
\draw[sArrow] (eveRight) to node[auto,swap,pos=.25,xshift=-2] {$\tilde{\sigma}^C$} (eveRight |- b3) to (b3);
\draw[sArrow] (b2) to node[auto,pos=1] {$\tilde{\rho}^M,\bot$} (bobMiddle);


\node[draw] (key) at (0,\v) {key};
\draw[sArrow,bend left=15] (key) to node[auto,swap,pos=.85,yshift=-2] {$k$} (a1);
\draw[sArrow,bend right=15] (key) to node[auto,pos=.85,yshift=-2] {$k$} (b1);
\node[pnode] (kLeft) at (-\x+.3,\v) {};
\node[pnode] (kRight) at (\x-.3,\v) {};
\draw[sArrow,bend left=25] (a1) to node[auto,pos=.5] {\footnotesize req.} (kLeft);
\draw[sArrow,bend right=25] (b1) to node[auto,swap,pos=.5] {\footnotesize req.} (kRight);

\node[pnode] (qLeft) at (-\x+.3,0) {};
\node[pnode] (qRight) at (\x-.3,0) {};
\draw[sArrow,bend left=25] (a2) to (qLeft);
\draw[sArrow,bend left=25] (qLeft) to (a2);
\draw[sArrow,bend right=25] (b2) to (qRight);
\draw[sArrow,bend right=25] (qRight) to (b2);
\end{tikzpicture}

\end{centering}
\caption[Real system for quantum
authentication]{\label{fig:auth.real}The real system for quantum
  authentication consists of Alice's and Bob's parts of the protocol,
  $\pi^\qauth_A$ and $\pi^\qauth_B$, and the following four resources:
  a shared secret key $\key^\mu$, a (one-time use) insecure quantum
  channel $\iqc^{1,n}$, and Alice's and Bob's local (one-time use) quantum
  computers $\qc^{1,m,n}_A$ and $\qc^{1,m,n}_B$. Upon receiving a
  message $\rho^M$ at its outer interface, $\pi^\qauth_A$ stores it in
  $\qc^{1,m,n}_A$, then requests a key $k$ from $\key^\mu$ that is
  passed on to $\qc^{1,m,n}_A$, which performs the encryption. Finally
  the ciphertext $\sigma^C$ received from $\qc^{1,m,n}_A$ is input to
  the channel $\iqc^{1,n}$. Upon receiving a ciphertext $\tilde{\sigma}^C$ from
  the channel $\iqc^{1,n}$, $\pi^{\qauth}_B$ stores it in $\qc^{1,m,n}_B$,
  then requests a key $k$ from $\key^\mu$ that is passed on to
  $\qc^{1,m,n}_B$, which performs the decryption. Finally, the result
  of this decryption \--- either the message $\tilde{\rho}^M$ or an error
  message $\bot$ \--- is output at the outer interface of
  $\pi^{\qauth}_B$.}
\end{figure}
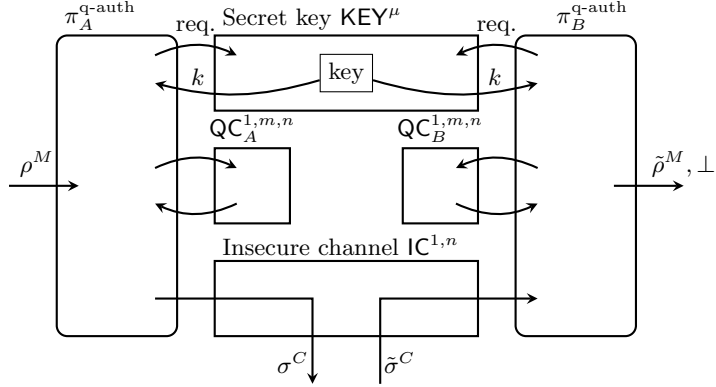

To illustrate \autoref{def:security} we model the security of
one-time encryption and authentication of quantum messages \--- taken
from \cite{portmann2017qauth} \--- which will be used as a subroutine
in \autoref{sec:constructions}. In this scheme, Alice wishes to
send a quantum message to Bob in such a way that an eavesdropper, Eve,
can neither read it nor change it. To do this, they share a uniform
secret key, unknown to Eve, and an insecure channel, which allows Eve
to intercept the message from Alice and send one of her own to
Bob.\footnote{Eve does not have to wait for Alice to use the insecure
  channel to send her own message to Bob. She may first send him a
  chosen ciphertext, and only later intercept the ciphertext generated
  by Alice.} These two resources are denoted $\key^\mu$ and
$\iqc^{1,n}$, and are illustrated in
\autoref{fig:auth.real}. $\mu$ is the length of the secret
key, the $1$ as superscript of $\iqc^{1,n}$ denotes that the channel
is used (at most) once and $n$ is the size (in qubits) of the message
that can be transmitted. Formal definitions of these are provided in
\autoref{sec:constructions.channels}. Two more resources are
needed for Alice and Bob to run their protocol, namely devices that
can perform the encryption and decryption operations. These are
denoted $\qc^{1,m,n}_A$ and $\qc^{1,m,n}_B$ \--- the superscript $1$
specifies that they can each perform one computation, $m$ and $n$
specify the size of the plaintext and ciphertext, i.e., a message of
$m$ qubits is encrypted into a ciphertext of $n$ qubits. Alice's part
of the protocol, $\pi^\qauth_A$, requests $\qc^{1,m,n}_A$ to perform
the encryption and inputs the resulting ciphertext to the channel
$\iqc^{1,n}$. Bob's part of the protocol, $\pi^{1,m,n}_B$, uses
$\qc^{1,m,n}_B$ to perform the decryption, and outputs the
result. This is described in more detail in the caption of
\autoref{fig:auth.real}.

\begin{figure}[tb]
\begin{centering}

\begin{tikzpicture}[
resource/.style={draw,thick,minimum width=3.2cm,minimum height=1.6cm},
rnode/.style={minimum width=.8cm,minimum height=1cm},
sArrow/.style={->,>=stealth,thick},
sLine/.style={-,thick},
sLeak/.style={->,>=stealth,thick,dashed},
simulator/.style={draw,thick,minimum width=7.9cm,minimum height=1cm,rounded corners}]

\small

\def\t{2.6} 
\def\v{.3}
\def\w{1.9} 
\def\s{.4}
\def\x{5.5}
\def\z{2.9} 

\node[simulator] (sim) at (\x/2,-\w) {};
\node[xshift=-1.5,below right] at (sim.north east) {\footnotesize $\simul^\qauth_E$};

\node[resource] (ch) at (0,0) {};
\node[yshift=-1.5,above right] at (ch.north west) {\footnotesize
  Secure channel $\sqc^{1,m}$};
\node (alice) at (-\t,0) {};
\node (bob) at (\t,0) {};

\node[resource] (qc) at (\x,0) {}; 
\node[yshift=-1.5,above right] at (qc.north west) {\footnotesize
  Q.\ Computer $\qc^{2,m,n}_E$};
\node[rnode] (qc1) at (\x-.4,-.3) {};
\node[rnode] (sim1) at (\x-.4,-\w) {};
\node[rnode] (qc2) at (\x+.4,-.3) {};
\node[rnode] (sim2) at (\x+.4,-\w) {};

\node (up) at (0,\v) {};
\node (down) at (0,-\v) {};

\node (leakPoint) at (-\s,0) {};
\node (eveLeft) at (\x/2-\s,-\z) {};
\node (eveRight) at (\x/2+\s,-\z) {};

\node at (eveLeft |- sim) {};

\draw[sLine] (alice) to node[auto,pos=.15] {$\rho^M$} (0,0) to node (handle) {} (330:2*\s);
\draw[sArrow] (2*\s,0) to node[auto,pos=.85] {$\rho^M,\bot$} (bob);
\draw[sLeak] (leakPoint.center) to (leakPoint |- sim.north);
\draw[sArrow] (handle.center |- sim.north) to node[auto,swap,pos=.15,xshift=-1.5] {$0,1$} (handle.center);

\draw[sArrow] (eveLeft |- sim.south) to node[auto,pos=.5,xshift=-1] {$\sigma^C$} (eveLeft.center);
\draw[sArrow] (eveRight.center) to node[auto,pos=.5,xshift=-1,swap] {$\tilde{\sigma}^C$} (eveRight |- sim.south);
\draw[sArrow, bend left=15] (qc2) to (sim2);
\draw[sArrow, bend left=15] (sim1) to (qc1);

\end{tikzpicture}

\end{centering}
\caption[Ideal system for quantum
authentication]{\label{fig:auth.ideal}The ideal quantum authentication
  system consists of a (one-time use) secure channel $\sqc^{1,m}$,
  Eve's quantum computer $\qc^{2,m,n}_E$ (that may perform one
  encryption and one decryption), and the simulator
  $\sigma^\qauth_E$. Typically, upon receiving a notification from
  $\sqc^{1,m}$ that Alice sent a message, $\simul^\qauth_E$ will ask
  $\qc^{2,m,n}_E$ to generate a ciphertext $\sigma^C$, which it then
  outputs at Eve's interface. Upon receiving a ciphertext
  $\tilde{\sigma}^C$ at its outer interface, it will ask
  $\qc^{2,m,n}_E$ to check its validity, and forward the result to
  $\sqc^{1,m}$, which then either releases Alice's message $\rho^M$ or
  outputs an error $\bot$ \--- if $\simul^\qauth_E$ receives a
  ciphertext $\tilde{\sigma}^C$ before any message $\rho^M$ is input
  at Alice's interface, then the ciphertext is declared invalid and
  $\sqc^{1,m}$ always outputs $\bot$.}
\end{figure}
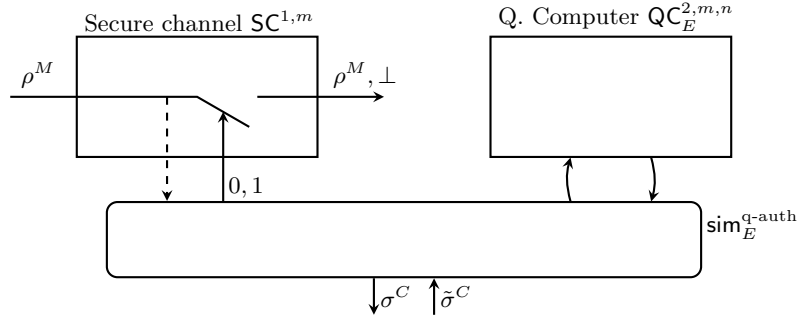

We now need to design the ideal system. The goal of this protocol is
to construct a secure channel, i.e., one which does not leak any
information about the message to Eve \--- except the message length
\--- and does not allow her to change the message either, nor insert
one of her own. In the real world, Eve may either prevent any
ciphertext from reaching Bob or jumble the ciphertext resulting in Bob
receiving garbage (and outputting an error), so this must also be
reflected in the ideal world. Thus, the ideal resource constructed
\--- the secure channel $\sqc^{1,m}$ illustrated in
\autoref{fig:auth.ideal} \--- notifies Eve when a message is
input at Alice's interface (denoted by the dashed
arrow),\footnote{Since the construction is parameterized by the
  message length $m$, there is no need to leak this information at
  Eve's interface. In a construction where $m$ is unknown to Eve, a
  secure channel may additionally leak (an upper bound on) the message
  size to Eve.} and waits for a bit at Eve's interface that tells it
whether to output the original message at Bob's interface or an error
message $\bot$.\footnote{If Eve does not provide this bit, the
  channel does not output anything.} The superscript $1$ specifies
that this channel may be used once and $m$ denotes the size of the
message it transmits.

If such a protocol is used as a subroutine in a larger context, Eve
could use the computational power of Alice and Bob to her advantage,
e.g., if she needs the encryption or decryption operations to be
applied to some quantum state, she can provide this state to the
honest players and intercept their output. Such a construction thus
provides some computation power to Eve, namely, it also constructs a
resource $\qc^{2,m,n}_E$ which can perform one encryption and one
decryption for a randomly chosen key.\footnote{For some protocols this
  may be providing Eve with unnecessary power, e.g., if the ciphertext
  looks like a full mixed state, it may be sufficient for
  $\qc^{2,m,n}_E$ to generate a purification of such a state and not
  need the ability to perform an encryption.}$^{\text{,}}$\footnote{In
  an asymptotic setting one would not care about this, since the
  protocol is efficient this would vanish in some poly\-/time
  statement. But in a finite setting where we keep track of all
  computation, one should model this explicitly. See also
  \autoref{rem:computation}.}


The security definition, \autoref{def:security}, requires there
to exist a simulator such that the real and ideal worlds are
indistinguishable. The exact simulator $\simul^{\qauth}_E$ will depend
on the protocol analyzed, but \--- as described in more detail in the
caption of \autoref{fig:auth.ideal} \--- it will typically ask
$\qc^{2,m,n}_E$ to generate a ciphertext to simulate that of Alice,
then ask $\qc^{2,m,n}_E$ to perform a decryption to check the validity
of the ciphertext received from Eve. The result of this decryption
will decide whether Alice's message is released to Bob or not.

For a specific protocol $\pi^{\qauth}_{AB}$, a security proof in the
information\-/theoretic setting consists in showing that this is
secure for all distinguishers, i.e., one must find a
$\simul^{\qauth}_E$ and some \emph{constant function} $\eps$ such that the real and ideal
systems (\autoref{fig:auth.real} and \ref{fig:auth.ideal}) are
$\eps$-close, e.g., \autoref{lem:qauth} here below.

\begin{lemma}{\cite{portmann2017qauth}}
\label{lem:qauth}
Let $\pi^{\qauth}_{AB}$ be any member of a family of protocols that
encrypt the message using (weak) purity testing
codes~\cite{barnum2002qauth,portmann2017qauth} from a set $\cC$ of
size $\log |\cC| = \nu$ with error $\eps \in \R$. Then there exists a
$\simul^{\qauth}_E$ such that for any $\sys D$,
\begin{multline*}
  d^{\sys D}\left(\pi^{\qauth}_{AB}\left[\key^{\nu+m+n},\iqc^{1,n},\qc^{1,m,n}_A,\qc^{1,m,n}_B\right],\simul^{\qauth}_E\left[\sqc^{1,m},\qc^{2,m,n}_E\right]\right)
  \\\leq \max\{2^{m-n},\eps\},\end{multline*} i.e.,
\[\left[\key^{\nu+2m+n},\iqc^{1,n},\qc^{1,m,n}_A,\qc^{1,m,n}_B\right]
  \xrightarrow{\pi^{\qauth}_{AB},\max\{2^{m-n},\eps\}}
  \left[\sqc^{1,m},\qc^{2,m,n}_E\right].\]
\end{lemma}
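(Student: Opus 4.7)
The plan is to construct the simulator $\simul^\qauth_E$ in the natural way suggested by the caption of \autoref{fig:auth.ideal}, and then reduce indistinguishability to the information\-/theoretic analysis of (weak) purity testing codes carried out in \cite{barnum2002qauth,portmann2017qauth}. Concretely, on receiving the notification from $\sqc^{1,m}$ that Alice has input a message, $\simul^\qauth_E$ instructs $\qc^{2,m,n}_E$ to produce a ciphertext by encrypting an arbitrary dummy state under a freshly sampled key (equivalently, by outputting a fully mixed state on $n$ qubits, using that the quantum one\-/time pad component of the protocol hides the plaintext perfectly) and forwards it at Eve's outer interface. When Eve returns a possibly modified ciphertext $\tilde{\sigma}^C$, $\simul^\qauth_E$ asks $\qc^{2,m,n}_E$ to run the decryption routine under the \emph{same} key, and forwards the resulting accept/reject verdict to $\sqc^{1,m}$, which then either releases Alice's message $\rho^M$ or outputs $\bot$.

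With this simulator fixed, I would bound the distinguishing advantage by separately handling the two points at which the real and ideal worlds could differ. First, the ciphertext at Eve's interface has identical marginal distribution in both worlds: in the real world, the Pauli twirl induced by averaging over the one\-/time pad key renders the ciphertext register fully mixed even conditioned on the plaintext and on Eve's purifying side information, and by construction the simulator outputs exactly a fully mixed state. Second, the accept/reject bit fed to $\sqc^{1,m}$, together with the message it subsequently releases, must match. If Eve leaves the ciphertext untouched then both worlds output $\rho^M$; otherwise, the weak purity testing property guarantees that any non\-/identity Pauli attack on the codeword is detected except with probability at most $\eps$, while any substitution of the ciphertext by an uncorrelated state decodes to a valid plaintext for the correct code with probability at most $2^{m-n}$. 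These two failure modes together yield the claimed bound $\max\{2^{m-n},\eps\}$.

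The main obstacle is to formalise that a \emph{quantum} distinguisher $\sys D$, which may entangle Alice's plaintext input, Eve's side information, and a purifying reference system, cannot exploit these correlations beyond what the Pauli twirl already controls. The standard remedy, which I would import essentially verbatim from \cite{portmann2017qauth}, is to pass to a Stinespring dilation of Eve's tampering map and expand its action on the ciphertext register in the Pauli basis; after averaging over the one\-/time pad key, all off\-/diagonal Pauli cross\-/terms vanish, leaving only the identity component (which reproduces the ideal behaviour exactly) and non\-/identity components (which trigger rejection with the stated probability). The edge case in which Eve injects a fresh state whose random Pauli decomposition happens to fall entirely in the code subspace of the correct message is precisely what the $2^{m-n}$ term absorbs, and assembling these observations through the triangle inequality yields the advertised bound uniformly over all $\sys D$, which by \autoref{def:security} is the desired construction statement.
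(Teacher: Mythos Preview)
The paper does not prove this lemma; it is stated with the citation \cite{portmann2017qauth} and no argument is given here, so there is no in-paper proof to compare your proposal against. Your sketch is essentially the argument from the cited reference: the simulator samples its own key, encrypts a dummy plaintext (the Pauli one-time-pad component renders the ciphertext fully mixed, so this is indistinguishable from the real ciphertext), runs the genuine verification on whatever Eve returns, and forwards the accept/reject verdict to $\sqc^{1,m}$; the Pauli-basis expansion of Eve's tampering map followed by the twirl over the one-time-pad key is precisely the mechanism in \cite{portmann2017qauth}.

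One small point of attribution: the $2^{m-n}$ term is not really a separate ``uncorrelated substitution'' failure mode alongside the purity-testing error $\eps$. Once Eve's attack is decomposed in the Pauli basis and twirled, \emph{every} attack reduces to a mixture of Pauli errors, and the weak-purity-testing parameter $\eps$ already bounds the probability that any non-identity Pauli evades detection. The $\max\{2^{m-n},\eps\}$ in the statement reflects that $2^{m-n}$ is a generic lower bound on the achievable $\eps$ for codes of these parameters (and handles the edge case where Eve injects before Alice sends), not an independent error source to be added. This does not affect the correctness of your overall strategy.
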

Note that since we only consider constant functions $\eps$ in
information\-/theoretic security, we write $\eps$ both for the
function and the value $\eps(\sys D) \in \R$.

Plugging in a specific instantiation of a (weak) purity testing code
with $\nu = 3n$ and $\eps = 2^{m-n}$ \--- the unitary $2$\-/design
construction~\cite{portmann2017qauth} \--- one then obtains the
following.

\begin{corollary}{\cite{portmann2017qauth}}
\label{cor:qauth}
Let $\pi^{\qauth}_{AB}$ be the unitary $2$\-/design construction, then
\[\left[\key^{m+4n},\iqc^{1,n},\qc^{1,m,n}_A,\qc^{1,m,n}_B\right]
  \xrightarrow{\pi^{\qauth}_{AB},2^{m-n}}
  \left[\sqc^{1,m},\qc^{2,m,n}_E\right].\]
\end{corollary}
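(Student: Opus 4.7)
The plan is to derive the corollary as a direct instantiation of \autoref{lem:qauth}. The only non-trivial input needed is a bound on the size and error of the unitary $2$-design construction viewed as a weak purity testing code. Concretely, the first step is to invoke the result of \cite{portmann2017qauth,barnum2002qauth} showing that the unitary $2$-design construction yields a family of weak purity testing codes indexed by a set $\cC$ of size $|\cC| = 2^{3n}$ (so $\nu = 3n$) with error parameter $\eps = 2^{m-n}$.

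With these two numbers in hand, the rest is a pure substitution into \autoref{lem:qauth}. The key length required by the lemma is $\nu + m + n$, which evaluates here to $3n + m + n = m + 4n$, matching the resource $\key^{m+4n}$ in the statement. The distinguishing bound is
\[ \max\{2^{m-n},\eps\} \;=\; \max\{2^{m-n}, 2^{m-n}\} \;=\; 2^{m-n}. \]
The simulator $\simul^{\qauth}_E$ whose existence is asserted in the corollary is exactly the one produced by the proof of \autoref{lem:qauth}; no new construction or further analysis is required, since the lemma provides one uniformly over the family of weak purity testing codes.

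There is no real obstacle at the level of the corollary itself: all of the serious work lies either in \autoref{lem:qauth} (which exhibits the simulator and bounds the distinguishing advantage given \emph{any} weak purity testing code) or in the analysis of unitary $2$\-/designs as weak purity testing codes with the quoted parameters (a moment/variance argument on Haar-averaged quantities carried out in the cited references). If anything delicate arises, it is only bookkeeping the contributions to the key length \--- namely the $\nu = 3n$ bits used to select a unitary from the $2$\-/design together with the additional $m+n$ bits charged by \autoref{lem:qauth} \--- and verifying that they sum to exactly the $m+4n$ bits appearing in the statement. Once this is checked, the construction
\[ \bigl[\key^{m+4n},\iqc^{1,n},\qc^{1,m,n}_A,\qc^{1,m,n}_B\bigr] \xrightarrow{\pi^{\qauth}_{AB},\,2^{m-n}} \bigl[\sqc^{1,m},\qc^{2,m,n}_E\bigr] \]
follows immediately from \autoref{def:security}.
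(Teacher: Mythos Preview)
Your proposal is correct and matches the paper's approach exactly: the paper derives the corollary by simply plugging the parameters $\nu = 3n$ and $\eps = 2^{m-n}$ of the unitary $2$-design construction into \autoref{lem:qauth}, yielding key length $\nu + m + n = m + 4n$ and error $\max\{2^{m-n},2^{m-n}\} = 2^{m-n}$. There is no additional argument in the paper beyond this substitution.
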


One may easily obtain an asymptotic security statement
from \autoref{cor:qauth} by taking a sequence of constructions
parameterized by, e.g., $r=n-m$, and taking the limit as
$r \to \infty$.

\subsection{Finite Computational Security of One-Time Quantum Message
  Authentication and Encryption}
\label{sec:example.computational}

In \autoref{sec:example.it} we gave an example of a construction
that has \emph{information\-/theoretic} security: the distance between
the real and ideal systems is bounded for every distinguisher, i.e.,
$\eps$ is a constant function. If one considers \emph{computational}
security, then one is not interested in security in all contexts, but
only when the distinguisher cannot solve problems considered to be
``hard''. A security proof is then a reduction to one of these hard
problems. In the case of finite security we consider reductions to
finite problems.

\begin{figure}[tb]
\begin{centering}

\begin{tikzpicture}[
resourceLong/.style={draw,thick,minimum width=3.5cm,minimum height=1cm},
resource/.style={draw,thick,minimum width=1cm,minimum height=1cm},
sArrow/.style={->,>=stealth,thick},
sLine/.style={-,thick},
protocol/.style={draw,thick,minimum width=1.6cm,minimum height=4cm,rounded corners},
pnode/.style={minimum width=1cm,minimum height=1cm}]

\small

\def\t{4.9} 
\def\a{3.05} 
\def\v{1.5}
\def\w{.45}
\def\x{1.25}
\def\z{3.05} 

\node[resourceLong] (channel) at (0,-\v) {};
\node[yshift=-1.5,above right] at (channel.north west) {\footnotesize
  Insecure channel $\iqc^{1,n}$};
\node[resource] (qa) at (-\x,0) {};
\node[yshift=-1.5,above] at (qa.north) {\footnotesize $\qc^{1,m,n}_A$};
\node[resource] (qb) at (\x,0) {};
\node[yshift=-1.5,above] at (qb.north) {\footnotesize $\qc^{1,m,n}_B$};
\node[resourceLong] (key) at (0,\v) {};
\node[yshift=-1.5,above right] at (key.north west) {\footnotesize
  Secret key $\prg^{r,\mu}$};
\node[protocol] (protA) at (-\a,0) {};
\node[yshift=-1.5,above right] at (protA.north west) {\footnotesize $\pi^\qauth_A$};
\node[pnode] (a1) at (-\a,\v) {};
\node[pnode] (a2) at (-\a,0) {};
\node[pnode] (a3) at (-\a,-\v) {};
\node[protocol] (protB) at (\a,0) {};
\node[yshift=-1.5,above left] at (protB.north east) {\footnotesize $\pi^\qauth_B$};
\node[pnode] (b1) at (\a,\v) {};
\node[pnode] (b2) at (\a,0) {};
\node[pnode] (b3) at (\a,-\v) {};

\node (aliceMiddle) at (-\t,0) {};
\node (bobMiddle) at (\t,0) {};
\node (eveLeft) at (-\w,-\z) {};
\node (eveRight) at (\w,-\z) {};

\draw[sArrow] (aliceMiddle) to node[auto,pos=.3] {$\rho^M$} (a2);
\draw[sArrow] (a3) to (eveLeft |- a3) to node[auto,swap,pos=.8,xshift=2] {$\sigma^C$} (eveLeft);
\draw[sArrow] (eveRight) to node[auto,swap,pos=.2,xshift=-2] {$\tilde{\sigma}^C$} (eveRight |- b3) to (b3);
\draw[sArrow] (b2) to node[auto,pos=.9] {$\tilde{\rho}^M,\bot$} (bobMiddle);


\node[draw] (key) at (0,\v) {key};
\draw[sArrow,bend left=15] (key) to node[auto,swap,pos=.5,yshift=-2] {$k$} (a1);
\draw[sArrow,bend right=15] (key) to node[auto,pos=.5,yshift=-2] {$k$} (b1);
\node[pnode] (kLeft) at (-\x+.3,\v) {};
\node[pnode] (kRight) at (\x-.3,\v) {};
\draw[sArrow,bend left=25] (a1) to node[auto,pos=.5] {\footnotesize req.} (kLeft);
\draw[sArrow,bend right=25] (b1) to node[auto,swap,pos=.5] {\footnotesize req.} (kRight);

\node[pnode] (qLeft) at (-\x+.3,0) {};
\node[pnode] (qRight) at (\x-.3,0) {};
\draw[sArrow,bend left=25] (a2) to (qLeft);
\draw[sArrow,bend left=25] (qLeft) to (a2);
\draw[sArrow,bend right=25] (b2) to (qRight);
\draw[sArrow,bend right=25] (qRight) to (b2);

\draw[thick,dashed] (-4.05,2.5) -- (-2,2.5) -- (-2,.75) -- (2,.75) --
(2,2.5) -- (4.05,2.5) -- (4.05,-2.2) -- (-4.05,-2.2) -- cycle;
\node at (4.3,2.25) {$\sys C$};
\end{tikzpicture}

\end{centering}
\caption[Real system for quantum authentication with
PRG]{\label{fig:auth.prg}The real system for quantum authentication
  with a PRG. It is identical to \autoref{fig:auth.real},
  except for the uniform key resource $\key^\mu$ which has been
  replaced by a pseudo\-/random key resource $\prg^{r,\mu}$ which
  produces a key of length $\mu$ from a uniform seed of length
  $r$. The system within the dashed line is labeled $\sys C$, and is
  relevant in the reduction.}
\end{figure}
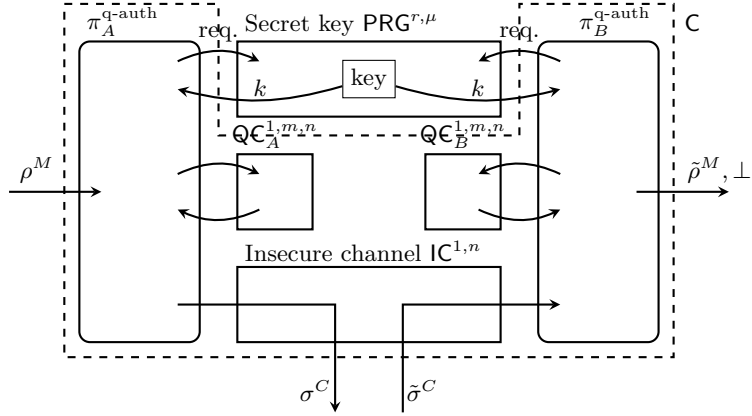

To illustrate this we will model a computational version of the
quantum authentication and encryption protocol from
\autoref{sec:example.it}. In fact, we will analyze exactly the
same protocol, but just replace the uniform key resource $\key^\mu$
with pseudo\-/random key resource $\prg^{r,\mu}$, where $r$ is the
length of the (uniform) seed of the pseudo\-/random generator (PRG)
\--- assumed to be hard\-/coded in the PRG \--- and $\mu$ is the
length of the key it produces. The real world now looks exactly like
\autoref{fig:auth.real}, except for the swap of the key for a
PRG, which we have drawn in \autoref{fig:auth.prg}.

The ideal world is identical to \autoref{fig:auth.ideal} as our goal
is still to construct the same ideal channel resource
$\sqc^{1,m}$. What changes is the error function $\eps$ which will be
small for some distinguishers and large for others. More precisely, we
show that if $\sys D$ can distinguish the real from the ideal systems,
then there exists an (explicit) distinguisher $\sys D'$ that can
distinguish $\prg^{r,\mu}$ from $\key^\mu$.

\begin{lemma}
\label{lem:qauth.computational}
Let $\pi^\qauth_{AB}$, $\simul^\qauth_E$ and $\eps$ be such that for
any $\sys D$,
\[d^{\sys D}\left(\pi^{\qauth}_{AB}\left[\key^{\mu},\iqc^{1,n},\qc^{1,m,n}_A,\qc^{1,m,n}_B\right],\simul^{\qauth}_E\left[\sqc^{1,m},\qc^{2,m,n}_E\right]\right)
  \leq \eps(\sys D).\] Then for any $\sys D$ we have
\begin{multline} \label{eq:qauth.computational.1}
  d^{\sys D}\left(\pi^{\qauth}_{AB}\left[\prg^{r,\mu},\iqc^{1,n},\qc^{1,m,n}_A,\qc^{1,m,n}_B\right],\simul^{\qauth}_E\left[\sqc^{1,m},\qc^{2,m,n}_E\right]\right) \\
\leq d^{\sys{DC}}\left(\prg^{r,\mu},\key^{\mu}\right) + \eps(\sys D),
\end{multline}
where $\sys C$ is the system drawn in dashed lines in
\autoref{fig:auth.prg}, and $\sys{DC}$ is a distinguisher
obtained by composing $\sys D$ and $\sys C$ in the obvious way.
\end{lemma}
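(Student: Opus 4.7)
The plan is a straightforward reduction via the triangle inequality, with the key step being to identify the protocol machinery around the key as a converter that turns a distinguisher of the two real systems into a distinguisher of $\prg^{r,\mu}$ from $\key^\mu$.

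First I would introduce, for any $\sys D$, the hybrid system $\pi^{\qauth}_{AB}\left[\key^{\mu},\iqc^{1,n},\qc^{1,m,n}_A,\qc^{1,m,n}_B\right]$, obtained from the real system of \autoref{fig:auth.prg} by replacing the pseudo-random key resource by a uniform one. Applying the triangle inequality to $d^{\sys D}(\cdot,\cdot)$ splits the left-hand side of \eqref{eq:qauth.computational.1} into two terms: the distance between the PRG-based real system and this hybrid, and the distance between the hybrid and the ideal system $\simul^{\qauth}_E\left[\sqc^{1,m},\qc^{2,m,n}_E\right]$. The second term is immediately bounded by $\eps(\sys D)$ by the lemma's hypothesis.

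For the first term, the observation is that the two systems being compared differ only in the key resource: everything else (the protocol converters $\pi^{\qauth}_A,\pi^{\qauth}_B$, the insecure channel $\iqc^{1,n}$, and the computer resources $\qc^{1,m,n}_A,\qc^{1,m,n}_B$), together with $\sys D$ itself, forms a fixed context around the key. This is precisely the system $\sys C$ highlighted by the dashed line in \autoref{fig:auth.prg}. Formally, $\sys{DC}$ is a distinguisher that, when attached to either $\prg^{r,\mu}$ or $\key^\mu$, reproduces exactly $\sys{D}$ interacting with the corresponding system of the first term. By the definition of the pseudo-metric, we therefore have
\[
d^{\sys D}\!\left(\pi^{\qauth}_{AB}[\prg^{r,\mu},\iqc^{1,n},\qc^{1,m,n}_A,\qc^{1,m,n}_B],\ \pi^{\qauth}_{AB}[\key^{\mu},\iqc^{1,n},\qc^{1,m,n}_A,\qc^{1,m,n}_B]\right) \leq d^{\sys{DC}}(\prg^{r,\mu},\key^\mu),
\]
where the inequality is really an equality (up to the fact that $\cB(\sys D)$ and $\cB(\sys{DC})$ may differ) by the relation $\Delta^{\sys D\alpha}(\sys R,\sys S) = \Delta^{\sys D}(\alpha\sys R,\alpha\sys S)$ used in the definition \eqref{eq:distinguisher.class} of the distinguisher class.

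The only mildly delicate point, and what I expect to be the main subtlety rather than obstacle, is checking that the ``obvious'' composition $\sys{DC}$ is indeed a well-defined distinguisher in the relevant class and that its interaction with $\prg^{r,\mu}$ (respectively $\key^\mu$) really reproduces the composed system \textbf{as a whole} \--- this just amounts to verifying that converters acting at disjoint interfaces commute and associate correctly, which follows from the properties of converters recalled in \autoref{sec:ac.theory}. Combining the two bounds via the triangle inequality gives \eqref{eq:qauth.computational.1}.
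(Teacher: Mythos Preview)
Your proposal is correct and matches the paper's own proof, which is simply the one-line observation that the result ``follows from the triangle inequality and $d^{\sys D}(\sys{CR},\sys{CS}) \leq d^{\sys{DC}}(\sys R,\sys S)$.'' Your write-up just unpacks these two ingredients explicitly.
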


\begin{proof} Follows from the triangle inequality and 
  $d^{\sys D}(\sys{CR},\sys{CS}) \leq
  d^{\sys{DC}}(R,S)$. 
\end{proof}

\begin{remark}
\label{rem:finite}
The security statement in \eqnref{eq:qauth.computational.1} is a
\emph{reduction}. It states that if it is hard to distinguish
$\prg^{r,\mu}$ from $\key^\mu$, then the scheme must be secure. Note
that this is a \emph{finite} reduction. We do not need to care whether
$\prg^{r,\mu}$ and $\key^\mu$ are asymptotically
indistinguishable,\footnote{Asymptotic indistinguishability means that
  one sets $\mu = p(r)$ for some polynomial $p$ and takes the limit as
  $r \to \infty$.} but whether the finite instantiation for some fixed
$r$ and $\mu$ is indistinguishable \--- which is the relevant
statement for an implementation. Unlike the asymptotic case, where one
does not need to define the reduction system $\sys C$ or error $\eps$
precisely but just make a statement about their limit, here, the exact
reduction $\sys C$ and parameter $\eps$ are crucial for the statement
to be complete.
\end{remark}

Suppose now that one has a bound $\eps'$ on the distinguishability of
the PRG from a uniform key. Then one may plug this into
\autoref{lem:qauth.computational} to get a (direct) security statement
for the construction.\footnote{\autoref{cor:qauth.computational} is a
  special case of the composition theorem: security of the encryption
  scheme with a PRG follows the security of the PRG and the security
  of the encryption scheme.}

\begin{corollary}
\label{cor:qauth.computational}
Let $\eps'$ be a function such that for all $\sys D$,
$d^{\sys D}(\prg^{r,\mu},\key^\mu) \leq \eps'(\sys D)$. Then
\begin{equation}
\label{eq:qauth.computational.2}
\left[\prg^{r,\mu},\iqc^{1,n},\qc^{1,m,n}_A,\qc^{1,m,n}_B\right]\xrightarrow{\pi^{\qauth}_{AB},\eps(\cdot)+\eps'(
  \cdot \sys C)}\left[\sqc^{1,m},\qc^{2,m,n}_E\right],\end{equation}
where $\eps'(\cdot \sys C)$ is the function $\sys D \mapsto
\eps'(\sys{DC})$.
\end{corollary}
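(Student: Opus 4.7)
The plan is to derive this corollary directly from Lemma~\ref{lem:qauth.computational} combined with the assumed bound on the PRG, with no further technical work beyond invoking the definition of cryptographic security.

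First, I would observe that Lemma~\ref{lem:qauth.computational} already provides, for every distinguisher $\sys D$, the bound
\[
d^{\sys D}\left(\pi^{\qauth}_{AB}\left[\prg^{r,\mu},\iqc^{1,n},\qc^{1,m,n}_A,\qc^{1,m,n}_B\right],\simul^{\qauth}_E\left[\sqc^{1,m},\qc^{2,m,n}_E\right]\right)\leq d^{\sys{DC}}\!\left(\prg^{r,\mu},\key^{\mu}\right)+\eps(\sys D),
\]
where $\sys C$ is the fixed reduction system depicted in \autoref{fig:auth.prg} (everything except the key resource, whose interfaces into $\sys D$ are just the PRG/key output on one side and the channel's Eve-interface on the other). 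This reduction system is the same for every $\sys D$, so applying the hypothesis $d^{\sys D'}(\prg^{r,\mu},\key^{\mu}) \leq \eps'(\sys D')$ at the specific choice $\sys D' = \sys{DC}$ gives $d^{\sys{DC}}(\prg^{r,\mu},\key^{\mu}) \leq \eps'(\sys{DC})$.

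Plugging this into the previous inequality yields
\[
d^{\sys D}\left(\pi^{\qauth}_{AB}\left[\prg^{r,\mu},\iqc^{1,n},\qc^{1,m,n}_A,\qc^{1,m,n}_B\right],\simul^{\qauth}_E\left[\sqc^{1,m},\qc^{2,m,n}_E\right]\right)\leq \eps(\sys D)+\eps'(\sys{DC}),
\]
which is exactly the bound $(\eps(\cdot)+\eps'(\cdot\,\sys C))(\sys D)$. Since this holds for every $\sys D$ with the \emph{same} simulator $\simul^{\qauth}_E$ as in the information-theoretic statement underlying Lemma~\ref{lem:qauth.computational}, \autoref{def:security} then gives the claimed construction.

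The only subtlety, and so the one step I would take care to spell out, is the fact that $\sys C$ does not depend on $\sys D$: it is the same real-world reduction context in both invocations of the distinguishing pseudo-metric, which is what allows us to interpret $\eps(\cdot)+\eps'(\cdot\,\sys C)$ as a single well-defined function of the distinguisher. There is no obstacle beyond careful bookkeeping; the corollary is really just a packaging of Lemma~\ref{lem:qauth.computational} together with the PRG assumption into the finite-security notation of \autoref{def:security}.
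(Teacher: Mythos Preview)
Your proposal is correct and matches the paper's own approach: the paper does not spell out a proof but simply says one plugs the PRG bound into Lemma~\ref{lem:qauth.computational} (with a footnote observing this is also a special case of the composition theorem), which is exactly what you do. Your remark that $\sys C$ is fixed independently of $\sys D$ is the right point to make explicit.
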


\paragraph{Asymptotic security.} \eqnsref{eq:qauth.computational.1} and
\eqref{eq:qauth.computational.2} are finite security statements for
the protocol $\pi^{\qauth}_{AB}$. Asymptotic statements may be
obtained as a corollary if one takes sequences of finite statements
and takes the limit. The (generic) construction presented so far has
multiple parameters, namely $\eps,m,n,r,\mu$. To define a sequence and
take the limit, one has to simplify this. We do so by
plugging in specific constructions. Suppose that we have a PRG with
$\mu = p(r)$ for some polynomial $p$, and we take the unitary
$2$\-/design construction from \autoref{cor:qauth}. Then
\eqnref{eq:qauth.computational.1} becomes
\begin{equation*}
    \label{eq:qauth.computational.3}
    d^{\sys
    D^{m,r}}\left(\pi^{\qauth}_{AB} \sys R^{m,r},
    \simul^{\qauth}_E \sys S^{m,r} \right) 
  \leq
  d^{\sys{D}^{m,r}\sys{C}^{m,r}}\left(\prg^{r,p(r)},\key^{p(r)}\right)
  + 2^{m-p(r)},
\end{equation*}
where we have set
$\sys R^{m,r} \df
\left[\prg^{r,p(r)},\iqc^{1,p(r)},\qc^{1,m,p(r)}_A,\qc^{1,m,p(r)}_B\right]$
and $\sys S^{m,r} \df \left[\sqc^{1,m},\qc^{2,m,p(r)}_E\right]$, and
parameterized $\sys D$ and $\sys C$ by $m$ and $r$ for
consistency. Let us view $m$ as fixed and take sequences with
$r \in \N$.\footnote{One may alternatively set $m = q(r)$ for some appropriate
polynomial $q$.}

Since $\sys C^{m,r}$ is poly\-/time \--- $\sys C^{m,r}$ basically runs
the protocol, which is efficient \--- the error $2^{m-p(r)}$ is
negligible and we believe that
$d^{\tilde{\sys D}^{m,r}}\left(\prg^{r,p(r)},\key^{p(r)}\right)$
is negligible for any poly\-/time $\tilde{\sys D}^{m,r}$ (in
particular for $\tilde{\sys D}^{m,r} = \sys{D}^{m,r}\sys{C}^{m,r}$),
then we can recover the standard asymptotic statement from
\eqnref{eq:qauth.computational.3}. More precisely, it then follows
that for any poly\-/time $\sys D^{m,r}$,
\begin{equation}
\label{eq:qauth.computational.4}
\Delta^{\sys
    D^{m,r}}\left(\pi^{\qauth}_{AB} \sys R^{m,r},
    \simul^{\qauth}_E \sys S^{m,r}\right)
  \leq \eps_{m,r}
\end{equation}
for some negligible $\eps_{m,r}$.


\subsection{Instantiating AC}
\label{sec:ac.instantiation}

We consider \emph{sequential scheduling} in this work, i.e., a system
receives a message, then produces an output, receives another input,
generates another output, etc. Here, the messages sent and received
may be quantum, but the party to whom a message is sent is given by
classical information.\footnote{An alternative model of systems that
  allows quantum scheduling \--- messages can be in a superposition of
  sent and not sent, or a superposition of sent to player $A$ and sent
  to player $B$ \--- is the Causal Boxes
  model~\cite{portmann2017causalboxes}, which can additionally capture
  relativistic causal constraints
  \cite{vilasini2017relativisitic}. For the current work however,
  (classical) sequential scheduling is sufficient, so we use the
  simpler quantum combs~\cite{chiribella2009combs}.}

\paragraph{Quantum Combs.} 
Let $\Hi_A$ be the input Hilbert space of a system $\sys R$, $\Hi_B$
be the output Hilbert space and $\Hi_M$ be some internal memory.
$\sys R$ is fully defined by a completely positive, trace\-/preserving
(CPTP) map
\begin{equation} \label{eq:sys.1} \cE : \lo{AM} \to
  \lo{BM},\end{equation} which specifies what output is produced and
how the internal memory is updated as a function of the input and
current internal memory. Each time an input $\rho \in \lo{A}$ is
received (which may be part of a larger entangled system), the map
$\cE$ is applied, and the output $\sigma \in \lo{B}$ is sent to the
appropriate party.

This is however a redundant description of a system, since different
maps $\cE$ may produce exactly the same output behavior. The works on
quantum combs \cite{GW07,Gut12,chiribella2009combs,Har11,Har12,Har15}
define such a system for $n$ inputs as a CPTP map
\begin{equation} \label{eq:sys.2} \Phi^n : \cL\left(\Hi^{\otimes
      n}_A\right) \to \cL\left(\Hi^{\otimes n}_B\right)\end{equation}
subject to a causality constraint that enforces that an output in
position $i$ can only depend on inputs in positions $\leq i$. Such a
system may have various descriptions, e.g., as the (unique) Choi
matrix corresponding to the map $\Phi^n$, as a map $\cE$ as in
\eqnref{eq:sys.1}, or as pseudo\-/code that describes the behavior of
the system.

In this work we often use pseudo\-/code to describe resources. Though
each time it corresponds to a specific quantum comb that is uniquely
defined by a map as in \eqnref{eq:sys.2}.

\paragraph{Resources as Combs.}
As stated at the beginning of \autoref{sec:ac.instantiation}, in our
model the sender and receiver of messages are given by classical
information. Formally, this means the input and output Hilbert space
of a resource has the form $\cH_A = \bigoplus_i \cH_{A_i}$, and the
first operation of a resource is a projection on the different
subspaces $A_i$ to determine this classical information. For example,
the different $i$ correspond to the different interfaces of a resource
or different ports at an interface.

When two resources $\sys R$ and $\sys S$ with input (or output) spaces
$\cH_A$ and $\cH_B$ accessible in parallel, the resulting resource
$\sys T = \left[\sys R, \sys S\right]$ is a new quantum comb with
input (respectively, output) space given by $\cH_A \oplus \cH_B$, and
the initial measurement determines which sub\-/resource ($\sys R$ or
$\sys S$) receives and processes the input. Similarly, plugging a
converter $\alpha$ into a resource $\sys R$, results in a new resource
$\sys U = \alpha \sys R$, whose corresponding quantum comb is computed
from the comb of $\sys R$ and the pairs of ports connected by
$\alpha$.

More generally, the works on quantum combs
\cite{GW07,Gut12,chiribella2009combs,Har11,Har12,Har15} show how such
objects may be composed to form networks \--- which are again objects
of the same type, namely quantum combs \--- and how one defines a
(distinguisher\-/based) pseudo\-/metric on combs. We refer to them for
more details.

    \section{Constructing Quantum Cryptographic Channels}
\label{sec:constructions}

In \autoref{sec:constructions.keys} and
\autoref{sec:constructions.channels} we formalize the resources used
in our constructions.  Then, starting from the insecure quantum
channel $\iqc$, a shared secret key $\key$ and local pseudo random
function $\prf$, we show how to construct (1) the ordered secure
quantum channel $\osqc$ in \autoref{sec:constructions.osc} and (2) the
Pauli-malleable confidential quantum channel $\xqc$ in
\autoref{sec:constructions.xcc}.  A construction of the ordered secure
quantum channel $\osqc$ from one which is secure but not ordered
($\sqc$) is also presented in \aref{app:osc}.

\subsection{Key Resources}	
\label{sec:constructions.keys}

A (shared) secret key resource corresponds to a system that provides a
key $k$ to the honest players, but nothing to the adversary.

\begin{definition}[Symmetric (Classical) Key $\key$]
    The resource $\key$ is associated with a probability distribution
    $P_K$ for (classical) key space $\Key$. A key $k \in \Key$ is
    drawn according to $P_K$ and stored in the resource.
    \begin{itemize}
        \item {\bf Interface $\iA$:} On input $\lGetKey$, $k$ is
          output at interface $\iA$.
        \item {\bf Interface $\iB$:} On input $\lGetKey$, $k$ is
          output at interface $\iB$.
        \item {\bf Interface $\iE$:} Inactive.
    \end{itemize}
  \end{definition}

  In the computational setting, instead of sharing a long key, players
  often share a short key which is used as seed in a local key
  expansion scheme. On such key expansion scheme which we use in this
  work is a so\-/called \emph{pseudo random function}. It is
  essentially a family of functions which looks random.

\begin{definition}[Pseudo Random Function $\prf^{r,\nu,\mu}$]
  The resource $\prf^{r,\nu,\mu}$ is associated to a family of
  functions
  $\left\{f_k:\{0,1\}^{\nu} \to \{0,1\}^{\mu}\middle| k \in
    \{0,1\}^r\right\}$ and has an internal variable $\vSeed$ of length
  $r$. The functions in the family have input length $\nu$ and output
  length $\mu$. The resource is local to one party only. 
  Let this party's interface be labeled $\iface X$. 
    \begin{itemize}
        \item {\bf Interface $\iface X$:}
        	\begin{itemize} 
        	\item On input $\lSeed(s)$, set variable $\vSeed$ to $s$.
        	\item On input $\lInput(x)$, output $f_\vSeed(x)$ at interface $\iface X$.
        	\end{itemize}
    \end{itemize}
\end{definition}

The above definition of a $\prf$ does not contain any criterion for what
it means to ``look random''. This is defined in a second step as
distinguishability from a uniform random function.

\begin{definition}[Uniform Random Function $\urf^{\nu, \mu}$]
  The resource $\urf^{\nu, \mu}$ picks a function $f$ from all
  functions $\{0,1\}^{\nu} \to \{0,1\}^{\mu}$ uniformly at random. 
    \begin{itemize}
        \item {\bf Interface $\iface A$:} On input $\lInput(x)$, output $f(x)$ at interface $\iA$.
        \item {\bf Interface $\iface B$:} On input $\lInput(x)$, output $f(x)$ at interface $\iB$.
        \item {\bf Interface $\iface E$:} Inactive.
    \end{itemize}
\end{definition}

Let $\pi^\prf$ be the trivial protocol which uses a (short) shared key
(from a $\key$ resource) and plugs it as seed in a $\prf$ resource,
and let $\eps^\prf(\sys D)$ be the advantage the distinguisher $\sys
D$ has in distinguishing such a construction from a $\urf$, i.e., for all $\sys D$
\[d^{\sys D}(\pi^\prf[\key^r, \prf_A^{r,\nu,\mu}, \prf_B^{r,\nu,\mu}],
  \urf^{\nu,\mu}) \leq \epsilon^\prf(\sys D),\] where
$d^{\sys D}(\cdot, \cdot)$ is the distinguisher pseudo\-/metric as
defined in \autoref{sec:ac.theory}. In terms of AC construction, this means that
\begin{equation}\label{eq:prf2urf}
\left[\key^r, \prf_A^{r,\nu,\mu}, \prf_B^{r,\nu,\mu}\right] 
\xrightarrow{\pi^{\prf},\epsilon^\prf}
  \urf^{\nu,\mu} .
\end{equation}
Concrete constructions of PRFs proven secure in the presence of
quantum adversaries may be found in \cite{Zha12pqPRF}.

\subsection{Channel resources}
\label{sec:constructions.channels}

We consider three-party channels in this work: the sending party Alice
has access to interface $\iA$, the receiving party Bob to interface
$\iB$, and the adversary Eve to interface $\iE$.  We model all our
channels in the following way: upon an input at interface $\iA$, an
output is generated at interface $\iE$, while upon an input at
interface $\iE$, an output is generated at interface $\iB$.  Moreover,
we consider multi-message channels parameterized by $\ell$, that is,
Alice and Eve can provide at most $\ell$ inputs at their respective
interfaces. These inputs can be entangled with each
other. We model quantum channels, therefore inputs and outputs
to and from the channels' interfaces are quantum systems. The channels
are also parameterized by $m$, the size of each message in qubits.

In the following we introduce the formal description of the channels considered in this work by specifying the behavior they assume upon inputs at their $\iA$ and $\iE$ interfaces.
First, we consider the weakest possible channel, that is, the \emph{insecure} one, which gives full control to the adversary Eve. Eve receives all the message that Alice inputs to the channel. Bob receives all the messages that Eve inputs to the channel. 

\begin{definition}[Insecure Quantum Channel $\iqc^{\ell,m}$]
    \begin{itemize}
    \item {\bf Interface $\iA$:} On receiving an input system in some
      state $\state$, perform an identity map and output the same
      system at interface $\iE$.
    \item {\bf Interface $\iE$:} On receiving an input system in some
      state $\state'$, perform an identity map and output the same
      system at interface $\iB$.
    \end{itemize}
    Interface $\iA$ and $\iE$ will receive at most $\ell$ inputs and
    ignore the rest. The quantum systems input at interface $\iA$ and
    $\iE$ and output at interface $\iB$ have length $m$ in qubits.
\end{definition}

Next, we enhance the insecure channel by providing some form of
confidentiality on the states input by Alice.  More precisely, we
allow Eve to only get a notification that a new message has arrived in
interface $\iA$, but still, Eve will retain the capability to
\emph{modify} each input $\state^{\rA_i}$ (held in register $\rA_i$).


Here, one may consider different ways in which Eve is allowed to
modify the messages. The first channel we consider grants Eve the
power to insert fully mixed states on the channel, as well as
performing Pauli operators (bit flips and phase flips) on Alice's
message and decide when each message gets delivered. This is modeled
by keeping registers $\rA_i$ for each new input at interface $\iA$,
and allowing Eve to input indices specifying which register should be
modified and output at interface $\iB$. Along with the index, Eve also
inputs a string of length $2m$, indicating on which qubits of the
message to apply Pauli operators. If Eve wants a fully mixed state to
be output at Bob's, she inputs $\bot$ at her interface and the channel
generates the corresponding
state. 
\begin{definition}[Pauli-Malleable Confidential Quantum Channel $\xqc^{\ell,m}$]
\label{def:xqc}
	The channel keeps registers $\rA_1, \rA_2, \dots, \rA_\ell$, initially set to $\bot$.
    \begin{itemize}
        \item {\bf Interface $\iA$:} Upon receiving the $i$-th input
          in some state $\rho$, this system is stored in register $\rA_i$, and
          $\lNewMsg$ is output at interface $\iE$.
        \item {\bf Interface $\iE$:} 
        \begin{itemize}
        \item On input $(j, k)\in [l] \times \{0,1\}^{2m}$, output
          system in state $P_k\rho^{\rA_j}P_k$ at interface $\iB$,
          where $\rho^{\rA_j}$ is the state of the system held in
          register $\rA_j$ and $P_k$ is the Pauli operator defined by
          the string $k$ \--- see \aref{app:notation}. If the tuple is
          invalid or $\rho^{\rA_j}$ is $\bot$, the input is considered
          as $\bot$. After the output, the state in register $\rA_j$
          becomes $\bot$.
        \item On input $\bot$, output a fully mixed stated
          $\frac{1}{2^m}I_{2^m}$ at interface $\iB$. 
       \end{itemize}
    \end{itemize}
    Interface $\iA$ and $\iE$ will receive at most $\ell$ inputs and
    ignore the rest. The quantum systems input at interface $\iA$ and
    output at interface $\iB$ always have length $m$ in qubits.
\end{definition}

Another type of confidential channel we consider is obtained by
removing Eve's capability to modify Alice's messages, while giving her
the ability to \emph{inject} any system (instead of only systems in
the fully mixed state).

\begin{definition}[Non-Malleable Confidential Quantum Channel $\nmqc^{\ell,m}$]
	The channel keeps registers $\rA_1, \rA_2, \dots, \rA_\ell$, initially set to $\bot$.
    \begin{itemize}
        \item {\bf Interface $\iA$:} Upon receiving the $i$-th input
          in some state $\rho$, this system is stored in register $\rA_i$, and
          $\lNewMsg$ is output at interface $\iE$.
        \item {\bf Interface $\iE$:} 
        	\begin{itemize}
        		\item On receiving an input system in some
                          state $\state'$, perform an identity map and
                          output the same system at interface $\iB$.
        		\item On input index $j\in[\ell]$, output the
                          system in state $\state^{\rA_j}$ held in
                          register $\rA_j$ at interface $\iB$. After
                          the output, the state of register $\rA_j$
                          becomes $\bot$.
			\end{itemize}        	 
    \end{itemize}
    Interface $\iA$ and $\iE$ will receive at most $\ell$ inputs and
    ignore the rest. The quantum systems input at interface $\iA$ and
    output at interface $\iB$ always have length $m$ in qubits.
\end{definition}


The next property to consider is authenticity: recall that in the
quantum setting, authenticity implies confidentiality, thus it does
not make sense to consider a ``non-confidential authentic channel'',
since a state cannot be cloned to be given to both Bob and Eve. An
authentic channel will automatically also be a confidential
one \cite{barnum2002qauth}. Therefore, as a next channel we directly consider the
\emph{secure} one \--- by secure we mean both authentic and
confidential. Eve only knows a new message has arrived but cannot
read, modify, nor inject messages. Eve still has the power to block and
reorder Alice's message.

\begin{definition}[Secure Quantum Channel $\sqc^{\ell,m}$]
	The channel keeps registers $\rA_1, \rA_2, \dots, \rA_\ell$, initially set to $\bot$.
    \begin{itemize}
        \item {\bf Interface $\iA$:} Upon receiving the $i$-th input
          in some state $\rho$, this system is stored in register $\rA_i$, and
          $\lNewMsg$ is output at interface $\iE$.
        \item {\bf Interface $\iE$:} On input index $j\in[\ell]$,
          output the system in state $\state^{\rA_j}$ held in register
          $\rA_j$ at interface $\iB$. After the output, the state in
          register $\rA_j$ becomes $\bot$.
    \end{itemize}
    Interface $\iA$ and $\iE$ will receive at most $\ell$ inputs and
    ignore the rest. The quantum systems input at interface $\iA$ and
    output at interface $\iB$ always have length $m$ in qubits.
\end{definition}

Finally, we consider an even stronger version of the secure channel which preserves the \emph{order} of the transmitted messages.
In particular, the adversary now only retains the power to delete messages, but cannot change the order in which they are transmitted.
This is enforced by replacing the capability to input indices by the ability of only inputting either $\lSend$ or $\lSkip$. 

\begin{definition}[Ordered Secure Quantum Channel $\osqc^{\ell,m}$]
	The channel keeps registers $\rA_1, \rA_2, \dots, \rA_\ell$, initially set to $\bot$.
    \begin{itemize}
    \item {\bf Interface $\iA$:} Upon receiving the $i$-th input
          in some state $\rho$, this system is stored in register $\rA_i$, and
          $\lNewMsg$ is output at interface $\iE$.
    \item {\bf Interface $\iE$:} On $i$-th input $\lSend$ or $\lSkip$:
      If the input is $\lSend$, output the system in state
      $\state^{\rA_i}$ held in register $\rA_i$ at interface $\iB$. If
      the input is $\lSkip$, then output $\bot$ at interface
      $\iB$. After the output, the state in register $\rA_i$ becomes
      $\bot$.
    \end{itemize}
    Interface $\iA$ and $\iE$ will receive at most $\ell$ inputs and ignore the rest. The quantum systems input at interface $\iA$ and output at interface $\iB$ always have length $m$ in qubits.
\end{definition}

\subsection{Constructing an Ordered Secure Quantum Channel}
\label{sec:constructions.osc}

As stated in \autoref{lem:qauth} from \autoref{sec:example.it}, there
is a construction of one time secure quantum channel from one time
insecure quantum channel resource and a uniform key resource within
$\epsilon^\qauth$, i.e.
\[\left[\iqc^{1,n}, \key^{\mu}, \qc^{1,m,n}_A,\qc^{1,m,n}_B\right] 
  \xrightarrow{\pi^\qauth_{AB},\epsilon^\qauth}
  \left[\sqc^{1,m},\qc^{2,m,n}_E\right].\] Here, $\iqc$, $\sqc$ and
$\key$ are channel and key resources, as defined above. $\qc_{A/B/E}$
denote a resource that does quantum computation for Alice, Bob or Eve,
and allows them to perform encryption and decryption operations (we
informally refer to such resources as \emph{quantum computers} in the
following). These appear in the construction statement since for
finite security one makes all computational operations explicit \---
see \autoref{sec:ac} for more details.

We denote the encoding and decoding CPTP maps in this construction by
$\enc^\qauth:\Key\x\cL(\Hi_A) \rightarrow \cL(\Hi_C)$ and
$\dec^\qauth:\Key\x\cL(\Hi_{\tilde{C}}) \rightarrow \cL(\Hi_B \oplus
\proj{\bot})$.  We also denote by $\mathcal{E}$ the CPTP map that
always discards the state and replaces it with error state
$\proj{\bot}$. In this section, we build on top of these encoding and
decoding maps to construct a multi-message ordered secure quantum
channel from a multi-message insecure quantum channel, with a shared
uniform random function resource $\urf^{\log\ell,\mu}$.  The real
system is drawn in \autoref{fig:ic.osc.real} and the components are
described in \autoref{fig:ic.osc.protocol}. The ideal system one
wishes to build is depicted in \autoref{fig:ic.osc.ideal}.

\begin{figure}[tb]
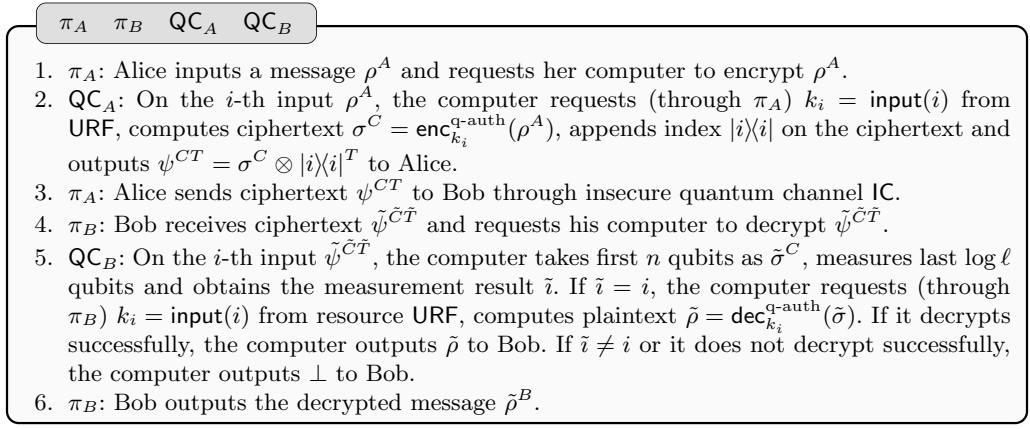

    \centering
    \begin{convbox}{\textwidth}{$\pi_A \quad \pi_B \quad \qc_A \quad \qc_B$}
		\begin{enumerate}
			\item $\pi_A$: Alice inputs a message $\state^{\rA}$ and requests her computer to encrypt $\state^{\rA}$.
			\item $\qc_A$: On the $i$-th input $\state^{\rA}$, the computer requests (through $\pi_A$) $k_i = \lInput (i)$ from $\urf$, computes ciphertext $\sigma^C = \enc^\qauth_{k_i}(\state^{A})$, appends index $\proj{i}$ on the ciphertext and outputs $\psi^{CT}= \sigma^C \otimes \proj{i}^T$ to Alice.
			\item $\pi_A$: Alice sends ciphertext $\psi^{CT}$ to Bob through insecure quantum channel $\iqc$.
			\item $\pi_B$: Bob receives ciphertext $\tilde{\psi}^{\tilde{C}\tilde{T}}$ and requests his computer to decrypt $\tilde{\psi}^{\tilde{C}\tilde{T}}$.
			\item $\qc_B$: On the $i$-th input
                          $\tilde{\psi}^{\tilde{C}\tilde{T}}$, the computer takes first
                          $n$ qubits as  $\tilde{\sigma}^C$, measures
                          last $\log\ell$ qubits and obtains the
                          measurement result $\tilde{\imath}$. If
                          $\tilde{\imath} = i$, the computer requests (through $\pi_B$)
                          $k_i = \lInput (i)$ from resource $\urf$,
                          computes plaintext $\tilde{\state} =
                          \dec^\qauth_{k_i}(\tilde{\sigma})$. 
                          If it decrypts successfully,
                          the computer outputs $\tilde{\state}$ to
                          Bob. If $\tilde{\imath} \neq i$ or it
                          does not decrypt successfully, the computer outputs $\bot$ to Bob.
          \item $\pi_B$: Bob outputs the decrypted message $\tilde{\rho}^B$.
		\end{enumerate}
  
    \end{convbox}
    \caption{Converters and computing resources to construct
      $\osqc^{\ell,m}$ from $\iqc^{\ell,n+\log\ell}$.  $\qc^{\ell, m, n+\log\ell}_A$ and $\qc^{\ell, m, n+\log\ell}_B$ will be queried $\ell$ times. The plaintext has length $m$ and the ciphertext has length $n + \log \ell$. $\urf^{\log\ell, \mu}$ has input length $\log\ell$ and output length $\mu$.} 
    \label{fig:ic.osc.protocol}
  \end{figure}

  \begin{figure}[tb]

    \begin{centering}

      \begin{tikzpicture}[
resourceLong/.style={draw,thick,minimum width=3cm,minimum height=1cm},
resource/.style={draw,thick,minimum width=1cm,minimum height=1cm},
sArrow/.style={->,>=stealth,thick},
sLine/.style={-,thick},
protocol/.style={draw,thick,minimum width=1cm,minimum height=4cm,rounded corners},
pnode/.style={minimum width=1cm,minimum height=1cm}]

\small

\def\t{4.9} 
\def\a{3.05} 
\def\v{1.5}
\def\w{.45}
\def\x{1}
\def\z{3.05} 

\node[resourceLong] (channel) at (0,-\v) {};
\node at (channel) {\footnotesize $\iqc$};
\node[resource] (qa) at (-\x,0) {};
\node at (qa) {\footnotesize $\qc_A$};
\node[resource] (qb) at (\x,0) {};
\node at (qb) {\footnotesize $\qc_B$};
\node[resourceLong] (key) at (0,\v) {};
\node at (key) {\footnotesize $\urf$};
\node[protocol] (protA) at (-\a,0) {};
\node[below right] at (protA.north west) {\footnotesize $\pi_A$};
\node[pnode] (a1) at (-\a,\v) {};
\node[pnode] (a2) at (-\a,0) {};
\node[pnode] (a3) at (-\a,-\v) {};
\node[protocol] (protB) at (\a,0) {};
\node[below left] at (protB.north east) {\footnotesize $\pi_B$};
\node[pnode] (b1) at (\a,\v) {};
\node[pnode] (b2) at (\a,0) {};
\node[pnode] (b3) at (\a,-\v) {};

\node (aliceMiddle) at (-\t,0) {};
\node (bobMiddle) at (\t,0) {};
\node (eveLeft) at (-\w,-\z) {};
\node (eveRight) at (\w,-\z) {};

\draw[sArrow] (aliceMiddle) to node[auto,pos=.3] {$\rho^{\rA}$} (a2);
\draw[sArrow] (a3) to (eveLeft |- a3) to node[auto,swap,pos=.8,xshift=2] {$\psi^{CT}$} (eveLeft);
\draw[sArrow] (eveRight) to node[auto,swap,pos=.2,xshift=-2] {$\tilde{\psi}^{\tilde{C}\tilde{T}}$} (eveRight |- b3) to (b3);
\draw[sArrow] (b2) to node[auto,pos=.9] {$\tilde{\rho}^{\rB}/\bot$} (bobMiddle);

\node[pnode] (kLeft) at (-\x,\v) {};
\node[pnode] (kRight) at (\x,\v) {};
\draw[sArrow,bend left=15] (kLeft) to node[auto,pos=.5] {$k_i$} (a1);
\draw[sArrow,bend right=15] (kRight) to node[auto,swap, pos=.5] {$k_i$} (b1);
\draw[sArrow,bend left=25] (a1) to node[auto,pos=.5] {\footnotesize $i$} (kLeft);
\draw[sArrow,bend right=25] (b1) to node[auto,swap,pos=.5] {\footnotesize $i$} (kRight);

\node[pnode] (qLeft) at (-\x,0) {};
\node[pnode] (qRight) at (\x,0) {};
\draw[sArrow,bend left=25] (a2) to node[auto,pos=.5, yshift=-3]{$\rho^{\rA}, k_i$} (qLeft);
\draw[sArrow,bend left=25] (qLeft) to node[auto,pos=.5]{$\psi^{CT}$} (a2);
\draw[sArrow,bend right=25] (b2) to  node[auto, swap, pos=.5,yshift=-3] {$\tilde{\psi}^{\tilde{C}\tilde{T}},k_i$}  (qRight);
\draw[sArrow,bend right=25] (qRight) to node[auto, swap, pos=.5] {$\tilde{\rho}^{\rB}/\bot$} (b2);

\end{tikzpicture}

\end{centering}
\caption[Constructing $\osqc$ from $\iqc$: Real System]{\label{fig:ic.osc.real} 
The real system consisting of the shared resources $\iqc^{\ell,n+\log
\ell}$ and $\urf^{\log \ell,\mu}$, Alice and Bob's computing resources
$\qc^{\ell,m,n+\log \ell}_A $$\qc^{\ell,m,n+\log \ell}_B$, and the
protocol converters $\pi_A$ and $\pi_B$.}
\end{figure}

  \begin{figure}[tb]

\begin{centering}

\begin{tikzpicture}[
resource/.style={draw,thick,minimum width=3.5cm,minimum height=1.6cm},
sresource/.style={draw,thick,minimum width=0.4cm,minimum height=0.4cm},
rnode/.style={minimum width=.8cm,minimum height=1cm},
sArrow/.style={->,>=stealth,thick},
sLine/.style={-,thick},
sLeak/.style={->,>=stealth,thick,dashed},
simulator/.style={draw,thick,minimum width=8.5cm,minimum height=1cm,rounded corners}]

\small

\def\t{2.6} 
\def\v{.3}
\def\w{1.9} 
\def\s{.4}
\def\x{5.5}
\def\z{2.9} 

\node[simulator] (sim) at (\x/2,-\w) {};
\node[xshift=-1.5,below right] at (sim.south west) {\footnotesize $\simul_E$};

\node[resource] (ch) at (0,0) {};
\node[yshift=-1.5,above right] at (ch.north west) {\footnotesize $\osqc$};
\node (alice) at (-\t,0) {};
\node (bob) at (\t,0) {};

\node[resource] (qc) at (\x,0) {}; 
\node[yshift=-1.5,above right] at (qc.north west) {\footnotesize $\qc_E$};
\node[rnode] (qc1) at (\x-.9,-.3) {};
\node[rnode] (qc2) at (\x+.9,-.3) {};
\node[rnode] (qc3) at (\x-1,-.3) {};
\node[rnode] (qc4) at (\x+1,-.3) {};

\node[rnode] (sim1) at (\x-.9,-\w) {};
\node[rnode] (sim2) at (\x+.9,-\w) {};
\node[rnode] (sim3) at (\x-1,-\w) {};
\node[rnode] (sim4) at (\x+1,-\w) {};

\def\h{-.2}
\node at (\x, \h+0.5) {$\qc_i^\qauth$} ;
\node[sresource] (qcauth1) at (\x-1,\h) {};
\node[sresource] (qcauth2) at (\x-0.5,\h) {};
\node[sresource] (qcauth3) at (\x,\h) {};
\node at (\x+0.5, \h) {$\ldots$};
\node[sresource] (qcauth4) at (\x+1,\h) {};

\node (up) at (0,\v) {};
\node (down) at (0,-\v) {};

\node (leakPoint) at (-\s,0) {};
\node (eveLeft) at (\x/2-\s,-\z) {};
\node (eveRight) at (\x/2+\s,-\z) {};

\node at (eveLeft |- sim) {};

\draw[sLine] (alice) to node[auto,pos=.15] {$\rho^{\rA_i}$} (0,0) to node (handle) {} (330:2*\s);
\draw[sArrow] (2*\s,0) to node[auto,pos=.85] {$\rho^{\rA_i}/\bot$} (bob);

\draw[sArrow] (eveLeft |- sim.south) to node[auto,pos=.5,xshift=-1] {$\psi^C$} (eveLeft.center);
\draw[sArrow] (eveRight.center) to node[auto,pos=.5,xshift=-1,swap] {$\tilde{\psi}^C$} (eveRight |- sim.south);

\draw[sLeak] (leakPoint.center) to node[auto, swap, pos=.85]{$\lNewMsg$} (leakPoint |- sim.north); 
\draw[sArrow] (handle.center |- sim.north) to node[auto,swap,pos=.15] {$\lSkip/\lSend$} (handle.center);
\draw[sArrow, bend left=15] (sim2) to node[auto, pos=.45]{$\tilde{\psi}^C$} (qc2);
\draw[sArrow, bend left=15] (qc1) to node[auto, pos=.55]{${\psi}^C$} (sim1);
\draw[sArrow, bend left=15] (sim3) to node[auto, pos=.45]{$\lNewMsg$} (qc3);
\draw[sArrow, bend left=15] (qc4) to node[auto, pos=.55]{$\lSkip/\lSend$} (sim4);

\end{tikzpicture}

\end{centering}
\caption[Constructing $\osqc$ from $\iqc$: Ideal
System]{\label{fig:ic.osc.ideal} The ideal system consisting of
  $\osqc^{\ell,m}, \qc_E^{2\ell,m,n+\log\ell}$ and
  $\simul_E$. $\qc_E^{2\ell,m,n+\log\ell}$ makes use of $\ell$
  instances of $\qc^\qauth$ internally, while $\simul_E$ only receives
  and forwards messages and does no computation.}
\end{figure}

\begin{theorem} \label{thm:ic.osqc} Let
  $\pi_{AB}=(\pi_A, \pi_B), \qc^{\ell, m, n+\log\ell}_A
  ,\qc^{\ell, m, n+\log\ell}_B$ and $\urf^{\log\ell,\mu}$denote converters and computing
  resources as described in  \autoref{fig:ic.osc.protocol},  corresponding to Alice and Bob both applying the following CPTP maps with increasing index $i$:  
$$\Lambda_i^{A \rightarrow CT}(\cdot) = \enc^\qauth_{k_i}(\cdot) \otimes \proj{i}^T$$
$$\Lambda_i^{\tilde{C}\tilde{T} \rightarrow B}(\cdot) = \dec^{\qauth}_{k_i} \lp (I^{\tilde{C}} \otimes \bra{i}^{\tilde{T}})(\cdot)(I^{\tilde{C}} \otimes \ket{i}^{\tilde{T}}) \rp
+ \mathcal{E}\lp \bar P_i\reg {\tilde{T}} (\cdot) \bar P_i\reg {\tilde{T}}) \rp,$$
    where $\bar P_i = I-\proj{i}$, and $k_i$ is the output of $\urf^{\log\ell,\mu}$ with input $i$. Let $\qc_E^{2\ell, m, n+\log\ell}$ be
  the computing resource of Eve capable of doing $\ell$ encryption
  operations and $\ell$ decryption operations.  Let
  $\epsilon^\qauth$ be the upper bound on the distinguishing advantage of the one
  time secure quantum channel construction. Then, 
\begin{multline*}
\left[\iqc^{\ell,n+\log\ell}, \urf^{\log\ell,\mu}, \qc^{\ell,m,n+log\ell}_A, \qc^{\ell,m,n+log\ell}_B \right] \\
  \xrightarrow{\pi_{AB},\ell\epsilon^{\qauth}}
  \left[\osqc^{\ell, m}, \qc^{2\ell,m,n+\log\ell}_E \right].
\end{multline*}
\end{theorem}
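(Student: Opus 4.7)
The plan is to prove this theorem via a standard hybrid argument, leveraging the single-use security from \autoref{lem:qauth}. The key observation is that since the distinct indices $1, 2, \ldots, \ell$ are used as inputs to the $\urf^{\log\ell,\mu}$ resource, and a uniform random function produces independent uniform outputs on distinct inputs, each key $k_i$ used by the protocol is effectively an independent uniform one-time secret key. Hence the multi-message scheme decomposes into $\ell$ parallel instances of the one-time $\qauth$ construction, linked only by the index tag $\proj{i}^T$ appended to the ciphertext.

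First, I would define a sequence of hybrid systems $\sys H_0, \sys H_1, \ldots, \sys H_\ell$, where $\sys H_0$ is the real system $\pi_{AB}[\iqc, \urf, \qc_A, \qc_B]$ and $\sys H_\ell$ is the ideal system $\simul_E[\osqc, \qc_E]$. In hybrid $\sys H_j$, the first $j$ positions are handled by an ideal one-time secure channel (together with the corresponding simulator $\simul_E^{\qauth}$ producing a simulated tagged ciphertext and verifying the ciphertext received from Eve), while the remaining $\ell-j$ positions use the real protocol with independently sampled keys $k_{j+1},\ldots,k_\ell$. This is well defined precisely because $\urf$ supplies independent uniform values on distinct indices, and because tags $\proj{i}^T$ keep the position-$i$ traffic cleanly separated.

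The main reduction step bounds $\Delta^{\sys D}(\sys H_j, \sys H_{j+1})$ by $\epsilon^{\qauth}(\sys D_j)$ for a suitable reduction $\sys D_j$. Given $\sys D$, I would build $\sys D_j$ which embeds the $\qauth$ challenge (either $\pi^{\qauth}_{AB}[\key^{\mu},\iqc^{1,n},\qc^{1,m,n}_A,\qc^{1,m,n}_B]$ or $\simul^{\qauth}_E[\sqc^{1,m},\qc^{2,m,n}_E]$) into slot $j+1$, attaches the tag $\proj{j+1}^T$ on any ciphertext going out of that slot and strips/verifies the tag on any incoming ciphertext, runs the first $j$ slots with their $\simul_E^{\qauth}$ simulators using an ideal one-time $\sqc$, and emulates the last $\ell-j-1$ slots honestly with locally sampled independent keys. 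Because Bob only decrypts the $i$-th incoming ciphertext under key $k_i$ and rejects it with $\bot$ if the tag is not $i$, the different slots cannot interfere: any cross-slot reshuffling by Eve already yields $\bot$ at Bob, matching the $\lSkip$ branch of $\osqc$. Summing over the $\ell$ hybrids by the triangle inequality gives the claimed bound $\ell\,\epsilon^{\qauth}$.

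The main obstacle will be the bookkeeping in defining the overall simulator $\simul_E$ for $\sys H_\ell$ and showing it reproduces the correct ordered-channel behavior. Concretely, $\simul_E$ maintains $\ell$ independent instances of the one-time simulator $\simul_E^{\qauth}$ implemented via the $\ell$ encryption and $\ell$ decryption slots of $\qc_E^{2\ell,m,n+\log\ell}$; upon each $\lNewMsg$ from $\osqc$ in position $i$, it invokes the $i$-th encryption instance to produce a simulated ciphertext, appends the tag $\proj{i}^T$, and outputs it at $\iE$; upon receiving the $i$-th ciphertext from Eve, it measures the tag, invokes the matching decryption instance if the tag equals $i$ (forwarding $\lSend$ or $\lSkip$ to $\osqc$ according to whether the one-time simulator accepts), and feeds $\lSkip$ directly to $\osqc$ otherwise. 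The remaining task is then to verify that the composition of these local pieces, after collapsing the telescoping sum of hybrids, reproduces exactly the ideal system in \autoref{fig:ic.osc.ideal}, which follows from the observation that the $\lSend/\lSkip$ interface of $\osqc$ is in one-to-one correspondence with acceptance/rejection in each one-time $\sqc^{1,m}$ of the hybrid chain.
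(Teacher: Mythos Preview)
Your proposal is correct and follows essentially the same approach as the paper: both define the simulator by running $\ell$ independent instances of the one-time $\qauth$ simulator indexed by the tag, and both bound the distinguishing advantage by a standard hybrid argument that replaces one slot at a time, reducing each step to the one-time bound $\epsilon^{\qauth}$ and summing via the triangle inequality. The paper packages the decomposition slightly more explicitly by introducing a converter $\alpha$ that routes messages by tag so that $\sys R^\ell \equiv \alpha[\sys R_1,\ldots,\sys R_\ell]$ and $\sys S^\ell \equiv \alpha[\sys S_1,\ldots,\sys S_\ell]$, but this is just a cleaner way of stating the same ``tags keep the slots separated'' observation you make.
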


\begin{proof}
    In order to prove the statement, we need to find a simulator $\simul_E$ and describe what Eve's quantum computer $\qc_E^{2\ell, m,n+\log\ell}$ does such that real system, as illustrated in \autoref{fig:ic.osc.real} and the ideal system, as illustrated in \autoref{fig:ic.osc.ideal} are indistinguishable within $\ell\epsilon^\qauth$, where $\epsilon^\qauth$ is the constant upper bound on the error
    of the one message authentication scheme.

    The simulator $\simul_E$ works as following. On input $\lNewMsg$ from the ideal resource $\osqc^{\ell,m}$, it requests $\qc_E^{2\ell, m, n+\log\ell}$ to provide a quantum state $\psi$ and outputs $\psi$ at interface $\iE$.  On input $\tilde{\psi}$ at interface $\iE$, it requests $\qc_E^{2\ell, m, n+\log\ell}$ to provide $\lSend$ or $\lSkip$ signal.  The signal is then forwarded to ideal channel $\osqc^{\ell,m}$.
    
    Eve's computing resource $\qc_E^{2\ell,m,n+\log\ell}$ will use $\qc_E^{2,m,n}$ of the one time secure quantum channel construction as a subroutine. We refer to this subroutine computer resource as $\qc^\qauth$. $\qc_E^{2\ell,m,n+\log\ell}$ works as following,
\begin{itemize}
\item On the $i$-th input $\lNewMsg$, the computer instantiates a new $\qc^\qauth_i$, sends $\lNewMsg$ to $\qc^\qauth_i$.  After getting the ciphertext $\sigma$ from $\qc_i^\qauth$, the computer appends index $i$ to $\sigma$ and output $\psi = \sigma \otimes \proj{i}$ to $\simul_E$. After output $\psi$, the computer updates internal counter $\vNextIndex=i$. 
\item On input $\tilde{\psi}$, the computer takes the first $n$ qubits as ciphertext $\tilde{\sigma}$, measures the last $\log\ell$ qubits and obtains the measurement result $\tilde{\imath}$. If $\vNextIndex \neq \tilde{\imath}$, the computer outputs $\lSkip$ to $\simul_E$. If $\vNextIndex=\tilde{\imath}$, the computer forwards ciphertext $\tilde{\sigma}$ to $\qc^\qauth_{\tilde{\imath}}$. If $\qc_{\tilde{\imath}}^\qauth$ outputs $\lAcc$, the computer outputs $\lSend$ to $\simul_E$. If $\qc_{\tilde{\imath}}^\qauth$ outputs $\lRej$, the computer outputs $\lSkip$ to $\simul_E$.
\end{itemize}    
We define $\sys R^{\ell} \df \pi_{AB}\left[\iqc^{\ell,n+\log\ell},\urf^{\log\ell, \mu}, \qc^{\ell,m,n+\log\ell}_A, \qc^{\ell,m,n+\log\ell}_B\right]$ and 
    $\sys S^{\ell} \df \simul_E\left[\osqc^{\ell,m}, \qc_{E}^{2\ell,m,n+\log\ell}\right]$.  Note that we can consider the channel of $\ell$ messages as $\ell$ different 1 message channels with an obvious converter $\alpha$ that directs the message according to its tag into the corresponding channel, i.e., $\iqc^{\ell,n+\log\ell} \equiv \alpha\left[\iqc^{1,n}_1, \iqc^{1,n}_2, \ldots, \iqc^{1,n}_\ell\right]$. Similarly $\urf^{\log\ell, \mu}$ can be consider as $\ell$ different $\key^\mu$ resources and channel $\osqc^{\ell,m}$ and  quantum computers $\qc^{\ell,m,n+\log\ell}_A$, $\qc^{\ell,m,n+\log\ell}_B$, $\qc_{E}^{2\ell,m,n+\log\ell}$ can all be considered as $\ell$ different 1 message resources with the same converter $\alpha$. We further define $\sys R_i \df \pi_{AB}^{\qauth}\left[\iqc^{1,n}_i,\key^\mu_i, \qc^{1,m,n}_{A_i}, \qc^{1,m,n}_{B_i}\right]$ and $\sys S_i = \simul^\qauth\left[\osqc^{1,m}_i, \qc_{E_i}^{2,m,n}\right]$. We can  see that $\sys R^\ell \equiv \alpha\left[\sys R_1, \sys R_2, \ldots, \sys R_\ell\right]$ and $\sys S^\ell \equiv \alpha\left[\sys S_1, \sys S_2, \ldots, \sys S_\ell\right]$ since for each message the protocol
    $\pi_{AB}$ and $\simul_E$ works
    exactly the same as $\pi_{AB}^\qauth$ and $\simul^\qauth$ in the one time secure quantum channel
    construction. For any distinguisher $\sys D$,
    \begin{align*}
     \Delta^{\sys D}\left(\sys R^\ell, \sys S^\ell\right) 
	& =      \Delta^{\sys D\alpha}\left( \left[\sys R_1, \sys R_2, \ldots, \sys R_\ell\right], \left[\sys S_1, \sys S_2, \ldots, \sys S_\ell\right]\right) \nonumber \\
	& \leq \sum_{i=1}^\ell \Delta^{\sys D\alpha} \left( \left[\sys S_1, \ldots, \sys S_{i-1}, \sys R_i, \sys R_{i+1}, \ldots, \sys R_\ell\right], \right. \nonumber \\
	& \qquad \qquad \qquad \qquad \qquad \left. \left[\sys S_1, \ldots, \sys S_{i-1}, \sys S_i, \sys R_{i+1}, \ldots, \sys R_\ell\right]\right)\nonumber \\
    & \leq \sum_{i=1}^\ell \Delta^{\sys {D\alpha C}_i}\left(\sys R_i, \sys S_i\right)  \nonumber \\
    & \leq  \sum^\ell_{i = 1} \epsilon^\qauth(\sys {D\alpha C}_i) \leq \ell\epsilon^{\qauth}.
    \end{align*}

    This analysis follows from
    a standard hybrid argument.  For a distinguisher $\sys D$ to distinguish in the
    multi-message case,  the new distinguishers $\sys {D\alpha C_i}$,  attached by a converter $\alpha$ and $C_i = \left[\sys S_1, \ldots, \sys S_{i-1}, \cdot , \sys R_{i+1}, \ldots, \sys R_\ell\right]$,  has to distinguish in one of the one
    message cases. The final inequality follows from the
    information\-/theoretic bound in the one message case, where we
    have used $\epsilon^{\qauth}$ both for the function and its upper bound.
\end{proof}

\begin{remark*}
  \autoref{thm:ic.osqc} is meaningful only if the protocol is also
  \emph{correct}, i.e., if the distinguisher always puts back the same
  ciphertext on the insecure channel in the right order, then Bob
  always successfully decrypts. This follows trivially from the
  correctness of the underlying quantum authentication protocol, so we
  omit a formal discussion of it.
\end{remark*}

Suppose now that one has a $\prf$ resource and a bound $\epsilon^\prf$ satisfying \eqnref{eq:prf2urf},
that is, indistinguishable from $\urf$ within $\epsilon^\prf$, the following corollary follows trivially from the composition theorem \autoref{thm:composition} in \aref{app:composition}.
\begin{corollary}
\begin{multline*}
\left[\iqc^{\ell,n+\log\ell}, \key^r, \prf_A^{r,\log\ell,\mu}, \prf_B^{r,\log\ell,\mu}, \qc^{\ell,m,n+log\ell}_A, \qc^{\ell,m,n+log\ell}_B \right] \\
  \xrightarrow{\pi'_{AB},\epsilon}
  \left[\osqc^{\ell, m}, \qc^{2\ell,m,n+\log\ell}_E \right],
\end{multline*}
where $\pi'_{AB} =(\pi_{AB}, \pi^{\prf}), \epsilon(\sys D) = \epsilon^\prf(\sys D \sys C)+\ell\epsilon^{\qauth}$ and $\sys C$ is the system including $\pi_{AB}$, $\iqc^{\ell,n+\log\ell}$, $\qc^{\ell,m,n+log\ell}_A$, $\qc^{\ell,m,n+log\ell}_B$.

\end{corollary}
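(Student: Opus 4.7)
The plan is to derive this corollary as a direct consequence of the composition theorem (\autoref{thm:composition} in \aref{app:composition}), combining \eqnref{eq:prf2urf} with \autoref{thm:ic.osqc}. No new cryptographic reasoning is required; the argument is simply a sequential composition of two already-established constructions, together with a context-extension step, followed by bookkeeping to recover the exact error expression stated.

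First I would invoke the context form of composability, \eqnref{eq:ac.composition2}, to enlarge the PRF-to-URF construction by the additional resources $\sys U \df [\iqc^{\ell,n+\log\ell}, \qc^{\ell,m,n+\log\ell}_A, \qc^{\ell,m,n+\log\ell}_B]$. This yields
\[ [\key^r, \prf_A^{r,\log\ell,\mu}, \prf_B^{r,\log\ell,\mu}, \sys U] \xrightarrow{\pi^\prf, \tilde\epsilon^\prf} [\urf^{\log\ell,\mu}, \sys U], \]
where $\tilde\epsilon^\prf(\sys D) = \epsilon^\prf(\sys D \sys C)$ and $\sys C$ is the converter that, when attached to a distinguisher for the larger system, produces a distinguisher for the bare PRF/URF construction. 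Concretely, $\sys C$ wraps $\pi_{AB}$ together with $\iqc^{\ell,n+\log\ell}$, $\qc^{\ell,m,n+\log\ell}_A$ and $\qc^{\ell,m,n+\log\ell}_B$, exactly as stated in the corollary. Next I would apply \autoref{thm:ic.osqc}, which constructs $[\osqc^{\ell,m}, \qc^{2\ell,m,n+\log\ell}_E]$ from $[\urf^{\log\ell,\mu}, \sys U]$ via $\pi_{AB}$ with error $\ell \epsilon^\qauth$. Composing the two constructions via \eqnref{eq:ac.composition1} gives the combined protocol $\pi'_{AB} = (\pi_{AB}, \pi^\prf)$ with total error $\epsilon(\sys D) = \epsilon^\prf(\sys D \sys C) + \ell \epsilon^\qauth$, as claimed.

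The only point requiring care is the precise identification of $\sys C$ as a converter rather than as an independent computation: since converters in our AC instantiation are memory-less routing devices (see \autoref{rem:computation}) and all memory or computation is made explicit through $\qc$ and $\urf$ resources, appending $\sys C$ to the distinguisher is unambiguous. I do not anticipate any real obstacle here \--- the main ``work'' is simply translating the generic composition theorem into the specific system labels used in the statement.
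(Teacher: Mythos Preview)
Your proposal is correct and matches the paper's approach exactly: the paper states only that the corollary ``follows trivially from the composition theorem \autoref{thm:composition} in \aref{app:composition},'' and you have spelled out precisely how that application goes, namely context extension by $\sys U$ followed by sequential composition with \autoref{thm:ic.osqc}. Your identification of $\sys C$ as $\pi_{AB}$ together with $\iqc^{\ell,n+\log\ell}$, $\qc^{\ell,m,n+\log\ell}_A$, $\qc^{\ell,m,n+\log\ell}_B$ is exactly what the corollary claims.
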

\subsection{Constructing a Pauli-Malleable Confidential Quantum Channel}
\label{sec:constructions.xcc}

In this section, we construct a Pauli-malleable confidential quantum
channel $\xqc^{\ell,m}$ from an insecure quantum channel
$\iqc^{\ell,m+\nu}$.  In the Pauli-malleable confidential channel, the
adversary can only get a notification of a new message arriving but
has no access to the message. The adversary has the ability to block,
reorder and modify the message via Pauli operators (bit flip and phase flip),
as well as ask the channel to output a fully mixed state at Bob's
interface, as defined in \autoref{def:xqc}.

Now we present the protocol in the multi\-/message case, described in
\autoref{fig:ic.xqc.protocol}. 
In the protocol, Alice's computer will generate a new random string
$x$ of length $\nu$ for each message different from previous random
strings and input it to $\urf^{\nu,2m}$, a key $k$ is returned by
$\urf^{\nu,2m}$ , the Pauli-operator $P_k$ is applied to the message
and $x$ is appended to the ciphertext.  Bob's computer will do the
measurement on the last $\nu$ qubits to get $\tilde{x}$, which is
input to $\urf^{\nu,2m}$, from which $\tilde{k}$ is obtained and
finally the Pauli operator $P_{\tilde{k}}$ is applied to the
ciphertext. The real system is drawn in \autoref{fig:ic.xqc.real}, and
the ideal system we construct is illustrated in
\autoref{fig:ic.xqc.ideal}.

\begin{figure}[tb]
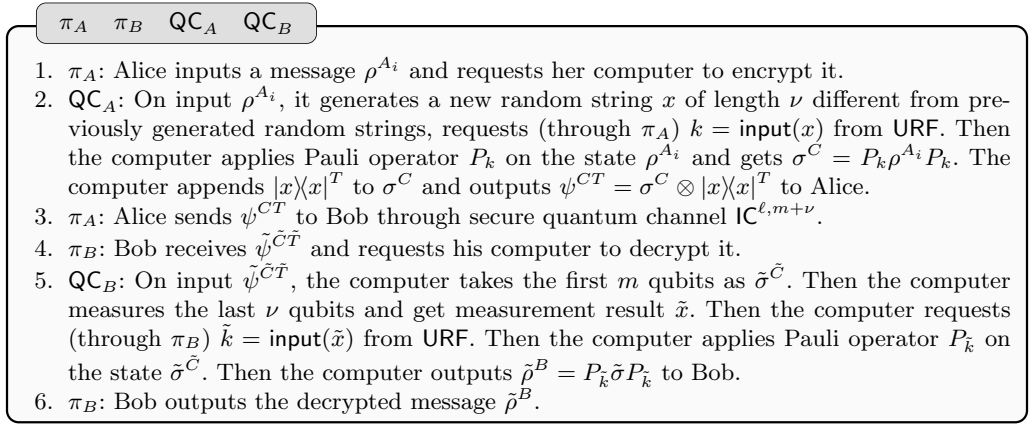

    \centering
    \begin{convbox}{\textwidth}{$\pi_A \quad \pi_B \quad \qc_A \quad \qc_B$}
		\begin{enumerate}
			\item $\pi_A$: Alice inputs a message $\state^{\rA_i}$ and requests her computer to encrypt it.
			\item $\qc_A$: On input
                          $\state^{\rA_i}$, it generates a new random string
                          $x$ of length $\nu$ different from
                          previously generated random strings, requests (through $\pi_A$) $k=\lInput(x)$ from $\urf$. Then the computer applies Pauli operator $P_k$ on the state $\state^{\rA_i}$  and gets $\sigma^C =P_k\state^{\rA_i} P_k $. The computer appends $\proj{x}^T$ to $\sigma^C$ and outputs $\psi^{CT}=\sigma^C  \otimes \proj{x}^T$ to Alice.
			\item $\pi_A$: Alice sends $\psi^{CT}$ to Bob through secure quantum channel $\iqc^{\ell,m+\nu}$.
			\item $\pi_B$: Bob receives $\tilde{\psi}^{\tilde{C}\tilde{T}}$ and requests his computer to decrypt it.
			\item $\qc_B$:  On input $\tilde{\psi}^{\tilde{C}\tilde{T}}$, the computer takes the first $m$ qubits as $\tilde{\sigma}^{\tilde{C}}$. Then the computer measures the last $\nu$ qubits and get measurement result $\tilde{x}$.  Then the computer requests (through $\pi_B$) $\tilde{k}=\lInput(\tilde{x})$ from $\urf$. Then the computer applies Pauli operator $P_{\tilde{k}}$ on the state $\tilde{\sigma}^{\tilde{C}}$.  Then the computer outputs $\tilde{\rho}^B =  P_{\tilde{k}} \tilde{\sigma} P_{\tilde{k}} $ to Bob.
			 \item $\pi_B$: Bob outputs the decrypted message $\tilde{\rho}^B$.
		\end{enumerate}
  
    \end{convbox}
    \caption{Converters and computer resources to construct $\xqc^{\ell,m}$ from $\iqc^{\ell,m+\nu}$.  $\qc^{\ell, m, m+\nu}_A$ and $\qc^{\ell, m, m+\nu}_B$ will be queried $\ell$ times. The plaintext has length $m$ and ciphertext has length $m+\nu$. $\urf^{\nu, 2m}$ has input length $\nu$ and output length $2m$.}
    \label{fig:ic.xqc.protocol}
\end{figure}

\begin{figure}[tb]
  \begin{centering}
\begin{tikzpicture}[
resourceLong/.style={draw,thick,minimum width=3cm,minimum height=1cm},
resource/.style={draw,thick,minimum width=1cm,minimum height=1cm},
sArrow/.style={->,>=stealth,thick},
sLine/.style={-,thick},
protocol/.style={draw,thick,minimum width=1cm,minimum height=4cm,rounded corners},
pnode/.style={minimum width=1cm,minimum height=1cm}]

\small

\def\t{4.9} 
\def\a{3.05} 
\def\v{1.5}
\def\w{.45}
\def\x{1}
\def\z{3.05} 

\node[resourceLong] (channel) at (0,-\v) {};
\node at (channel) {\footnotesize $\iqc$};
\node[resource] (qa) at (-\x,0) {};
\node at (qa) {\footnotesize $\qc_A$};
\node[resource] (qb) at (\x,0) {};
\node at (qb) {\footnotesize $\qc_B$};
\node[resourceLong] (key) at (0,\v) {};
\node at (key) {\footnotesize $\urf$};
\node[protocol] (protA) at (-\a,0) {};
\node[below right] at (protA.north west) {\footnotesize $\pi_A$};
\node[pnode] (a1) at (-\a,\v) {};
\node[pnode] (a2) at (-\a,0) {};
\node[pnode] (a3) at (-\a,-\v) {};
\node[protocol] (protB) at (\a,0) {};
\node[below left] at (protB.north east) {\footnotesize $\pi_B$};
\node[pnode] (b1) at (\a,\v) {};
\node[pnode] (b2) at (\a,0) {};
\node[pnode] (b3) at (\a,-\v) {};

\node (aliceMiddle) at (-\t,0) {};
\node (bobMiddle) at (\t,0) {};
\node (eveLeft) at (-\w,-\z) {};
\node (eveRight) at (\w,-\z) {};

\draw[sArrow] (aliceMiddle) to node[auto,pos=.3] {$\rho^{\rA_i}$} (a2);
\draw[sArrow] (a3) to (eveLeft |- a3) to node[auto,swap,pos=.8,xshift=2] {$\psi^{CT}$} (eveLeft);
\draw[sArrow] (eveRight) to node[auto,swap,pos=.2,xshift=-2] {$\tilde{\psi}^{\tilde{C}\tilde{T}}$} (eveRight |- b3) to (b3);
\draw[sArrow] (b2) to node[auto,pos=.9] {$\tilde{\rho}^{\rB}$} (bobMiddle);

\node[pnode] (kLeft) at (-\x,\v) {};
\node[pnode] (kRight) at (\x,\v) {};
\draw[sArrow,bend left=15] (kLeft) to node[auto,pos=.5] {$k$} (a1);
\draw[sArrow,bend right=15] (kRight) to node[auto,swap, pos=.5] {$\tilde{k}$} (b1);
\draw[sArrow,bend left=25] (a1) to node[auto,pos=.5] {\footnotesize $x$} (kLeft);
\draw[sArrow,bend right=25] (b1) to node[auto,swap,pos=.5] {\footnotesize $\tilde{x}$} (kRight);

\node[pnode] (qLeft) at (-\x,0) {};
\node[pnode] (qRight) at (\x,0) {};
\draw[sArrow,bend left=25] (a2) to node[auto,pos=.5, yshift=-3]{$\rho^{\rA_i}, k$} (qLeft);
\draw[sArrow,bend left=25] (qLeft) to node[auto,pos=.5]{$\psi^{CT}$} (a2);
\draw[sArrow,bend right=25] (b2) to  node[auto, swap, pos=.5,yshift=-3] {$\tilde{\psi}^{\tilde{C}\tilde{T}},\tilde{k}$}  (qRight);
\draw[sArrow,bend right=25] (qRight) to node[auto, swap, pos=.5] {$\tilde{\rho}^{\rB}$} (b2);

\end{tikzpicture}

\end{centering}
    \caption{The real system consisting of shared resources
      $\iqc^{\ell,m+\nu}$ and $\urf^{\nu,2m}$, Alice and
      Bob's computing resources $\qc^{\ell, m, m+\nu}_A$ and
      $\qc^{\ell, m, m+\nu}_B$, and the protocol converters $\pi_A$
      and $\pi_B$.}
    \label{fig:ic.xqc.real}
  \end{figure}
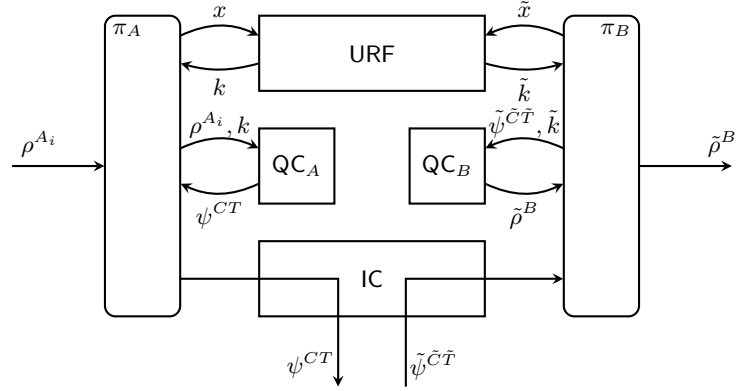

  \begin{figure}[tb]
\begin{centering}

\begin{tikzpicture}[
resource/.style={draw,thick,minimum width=3.5cm,minimum height=1.6cm},
sresource/.style={draw,thick,minimum width=0.4cm,minimum height=0.4cm},
rnode/.style={minimum width=.8cm,minimum height=1cm},
sArrow/.style={->,>=stealth,thick},
sLine/.style={-,thick},
sLeak/.style={->,>=stealth,thick,dashed},
simulator/.style={draw,thick,minimum width=8.5cm,minimum height=1cm,rounded corners}]

\small

\def\t{2.6} 
\def\v{.3}
\def\w{1.9} 
\def\s{.4}
\def\x{5.5}
\def\z{2.9} 

\node[simulator] (sim) at (\x/2,-\w) {};
\node[xshift=-1.5,below right] at (sim.south west) {\footnotesize $\simul_E$};

\node[resource] (ch) at (0,0) {};
\node[yshift=-1.5,above right] at (ch.north west) {\footnotesize $\xqc$};
\node (alice) at (-\t,0) {};
\node (bob) at (\t,0) {};

\node[resource] (qc) at (\x,0) {}; 
\node[yshift=-1.5,above right] at (qc.north west) {\footnotesize $\qc_E$};
\node[rnode] (qc1) at (\x-.9,-.3) {};
\node[rnode] (qc2) at (\x+.9,-.3) {};
\node[rnode] (qc3) at (\x-1,-.3) {};
\node[rnode] (qc4) at (\x+1,-.3) {};

\node[rnode] (sim1) at (\x-.9,-\w) {};
\node[rnode] (sim2) at (\x+.9,-\w) {};
\node[rnode] (sim3) at (\x-1,-\w) {};
\node[rnode] (sim4) at (\x+1,-\w) {};

\def\h{-.2}
\node at (\x, \h+0.5) {$\qc_i^\qconf$} ;
\node[sresource] (qcauth1) at (\x-1,\h) {};
\node[sresource] (qcauth2) at (\x-0.5,\h) {};
\node[sresource] (qcauth3) at (\x,\h) {};
\node at (\x+0.5, \h) {$\ldots$};
\node[sresource] (qcauth4) at (\x+1,\h) {};

\node (up) at (0,\v) {};
\node (down) at (0,-\v) {};

\node (leakPoint) at (-\s,0) {};
\node (eveLeft) at (\x/2-\s,-\z) {};
\node (eveRight) at (\x/2+\s,-\z) {};

\node at (eveLeft |- sim) {};

\node[sresource] (pauli) at (0.5,0) {P};
\draw[sLine] (alice) to node[auto,pos=.15] {$\rho^{\rA_i}$} (0,0) to  (pauli);
\draw[sArrow] (pauli |- sim.north) to node[auto,swap,pos=.15] {$(j, k)/\bot$} (pauli);
\draw[sArrow] (0.7,0) to node[auto,pos=1] {$\frac{1}{2^m}I_{2^m}$} (bob);
\node[above, yshift = 9] at (bob.north) {$P_k\rho^{\rA_j}P_k/$};

\draw[sArrow] (eveLeft |- sim.south) to node[auto,pos=.5,xshift=-1] {$\psi^C$} (eveLeft.center);
\draw[sArrow] (eveRight.center) to node[auto,pos=.5,xshift=-1,swap] {$\tilde{\psi}^C$} (eveRight |- sim.south);

\draw[sLeak] (leakPoint.center) to node[auto, swap, pos=.85]{$\lNewMsg$} (leakPoint |- sim.north); 
\draw[sArrow, bend left=15] (sim2) to node[auto, pos=.45]{$\tilde{\psi}^C$} (qc2);
\draw[sArrow, bend left=15] (qc1) to node[auto, pos=.55]{${\psi}^C$} (sim1);
\draw[sArrow, bend left=15] (sim3) to node[auto, pos=.45]{$\lNewMsg$} (qc3);
\draw[sArrow, bend left=15] (qc4) to node[auto, pos=.55]{$(j, k)/\bot$} (sim4);

\end{tikzpicture}

\end{centering}
\caption[Ideal system for constructing $\xqc$ from $\iqc$]{\label{fig:ic.xqc.ideal}
The ideal system consisting of $\xqc^{\ell,m}, \qc_E^{2\ell,m,m+\nu}$
and $\simul_E$. $\qc_E^{2\ell,m,m+\nu}$ makes use of $\ell$ instances of $\qc^\qconf$ internally while  $\simul_E$ only receives and forwards messages and does no computation.
}
\end{figure}

\begin{theorem} \label{thm:ic.xqc}
Let $\pi_{AB}=(\pi_A, \pi_B), \qc^{\ell, m, m+\nu}_A$ and $\qc^{\ell, m, m+\nu}_B$ denote converters and computing resources, described in \autoref{fig:ic.xqc.protocol}, corresponding to Alice and Bob applying the following CPTP maps,
\begin{align*}
  \Lambda^{A_i \rightarrow CT}(\cdot) & = \frac{1}{2^\nu}\sum_x
  P_{k_x}(\cdot)P_{k_x} \otimes \proj{x}^T \\
\Lambda^{\tilde{C}\tilde{T} \rightarrow B}(\cdot) & = \sum_x  (P_{k_x} \otimes \bra{x}^{\tilde{T}})(\cdot)(P_{k_x} \otimes \ket{x}^{\tilde{T}}) ,\end{align*}
    where $k_x$ is the output of $\urf^{\nu,2m}$ with input $x$. Let $\qc_E^{2\ell, m, m+\nu}$ be the computing resource of Eve capable of doing $\ell$ encryption operations and $\ell$ decryption operations. Then $\pi_{AB}$ constructs a Pauli-malleable confidential quantum channel $\xqc^{\ell, m}$ from an insecure quantum channel resource $\iqc^{\ell,m+\nu}$, a shared uniform random function resource $\urf^{\nu, 2m}$ within $\ell^2 \cdot 2^{-\nu} $, i.e., 
\begin{equation*}
\left[\iqc^{\ell,m+\nu}, \urf^{\nu,2m}, \qc^{\ell,m,m+\nu}_A, \qc^{\ell,m,m+\nu}_B \right] 
  \xrightarrow{\pi_{AB},\ell^2 2^{-\nu}}
  \left[\xqc^{\ell, m}, \qc^{2\ell,m,m+\nu}_E \right].
\end{equation*}
\end{theorem}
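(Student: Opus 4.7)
My plan is to prove the theorem by combining a ``bad tag collision'' event with the standard Pauli twirling identity for the quantum one-time pad: conditional on the complement of this event, the real and ideal systems produce identical joint distributions, so the distinguishing advantage is bounded by the collision probability. I first specify the ideal-side components. The simulator $\simul_E$ merely routes messages between $\xqc^{\ell,m}$, Eve, and $\qc_E^{2\ell,m,m+\nu}$. On each notification $\lNewMsg$ arriving from $\xqc$, Eve's computing resource $\qc_E^{2\ell,m,m+\nu}$ samples a fresh classical tag $x_i$ uniformly over $\{0,1\}^\nu$ minus the previously drawn tags, prepares a fresh maximally entangled state $\epr^{C_iM_i}$ on $2m$ qubits, outputs the $C_i$ register together with $\proj{x_i}^{T}$ to $\simul_E$, and keeps $M_i$ internally. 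On Eve's returned ciphertext $\tilde\sigma^{\tilde{C}\tilde{T}}$, $\qc_E$ measures $\tilde{T}$ to obtain $\tilde x$: if $\tilde x=x_j$ for some $j$ whose $\xqc$ register is still live, $\qc_E$ performs a Bell measurement on $(\tilde\sigma^{\tilde{C}},M_j)$, obtains a Pauli index $a\in\{0,1\}^{2m}$, and submits $(j,a)$ to $\xqc$; otherwise it submits $\bot$.

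Next I would introduce the bad event $E$: at some step in the interaction a fresh tag drawn by the protocol coincides with a tag already appearing in the transcript (either an earlier protocol-drawn tag or one of Eve's previously-seen ciphertext tags). Alice-side tags are distinct by construction and each fresh draw is uniform over a set of size at least $2^\nu-\ell$, so a union bound over the $O(\ell^2)$ relevant pairs gives $\Pr[E]\le\ell^2/2^\nu$. Conditioning on $\neg E$ guarantees that every query to the underlying $\urf$ during the run is on a fresh input, so the corresponding uniform random key is independent of everything Eve has observed so far.

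Finally, I would show that conditional on $\neg E$ the real and ideal systems produce identical joint distributions on all honest outputs and on Eve's internal register, whence the total advantage is bounded by $\Pr[E]$. Fix any of Eve's queries $(\tilde\sigma,\tilde x)$. If $\tilde x=x_j$ for some Alice tag, purify Eve's intervention on the ciphertext to a unitary $U^{CE}$ with initial auxiliary $\eta^E$, decompose $U=\sum_b P_b^{C}\otimes V_b^{E}$ in the Pauli basis, and apply the QOTP twirling calculation: averaging $(P_{k_j}\otimes I)\,U(P_{k_j}\rho^{\rA_j}P_{k_j}\otimes\eta)U^\dagger\,(P_{k_j}\otimes I)$ over uniform $k_j=f(x_j)$ collapses the cross terms and yields the joint state $\sum_b P_b\rho^{\rA_j}P_b\otimes V_b\eta V_b^\dagger$ on Bob's output and Eve's auxiliary. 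The ideal world's Bell-measurement procedure produces exactly this state: applying $U$ to one half of $\epr^{C_jM_j}$ and Bell-measuring it against $M_j$ yields outcome $b$ with the matching probability and leaves Eve's auxiliary in $V_b\eta V_b^\dagger$, after which $\xqc$ outputs $P_b\rho^{\rA_j}P_b$ to Bob. If instead $\tilde x$ is not an Alice tag, then under $\neg E$ the value $f(\tilde x)$ is fresh uniform randomness, so Bob's real output $P_{f(\tilde x)}\tilde\sigma P_{f(\tilde x)}$ averages to $I/2^m$, matching $\xqc$'s $\bot$ behaviour. The main technical obstacle is the QOTP twirling step together with its EPR-Bell counterpart, which requires carefully tracking Eve's arbitrary purified intervention and verifying that both procedures produce the same classical-quantum state on Bob and Eve; once the Pauli decomposition of $U$ is set up, the computation reduces to a character sum using $P_k P_b P_k=(-1)^{\langle k,b\rangle}P_b$ and the Choi--Jamio\l kowski identity.
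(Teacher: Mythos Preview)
Your overall strategy matches the paper's: isolate a tag-collision event, bound it by $\ell^2/2^\nu$, and then argue that absent collisions the real and ideal systems coincide via the quantum one-time-pad / Bell-measurement equivalence. The paper packages the last step as a separate one-message lemma (proved in the appendix) plus a hybrid over the $\ell$ messages; you do the computation inline, but the simulator, Eve's computer, and the twirling calculation you describe are the same as the paper's.

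There is, however, a genuine gap in your argument (and the paper's ``standard hybrid argument'' glosses over the same point). Your claim that ``conditioning on $\neg E$ guarantees that every query to the underlying $\urf$ during the run is on a fresh input'' is false: $E$ only constrains the tags \emph{Alice} draws, but Bob queries the $\urf$ on whatever tag Eve attaches to her ciphertext, and nothing in $\neg E$ (or in the protocol) prevents Eve from reusing the same tag $\tilde x$ on several of her $\ell$ inputs. When she does, Bob applies the \emph{same} Pauli $P_{f(\tilde x)}$ each time, so his outputs are correlated through that single key, whereas your ideal system hands each such query to $\xqc$ as $\bot$ and receives a fresh independent $I/2^m$. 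A concrete separator: with no inputs at interface $A$, the distinguisher sends $\proj{0}\otimes\proj{x_0}$ twice at interface $E$; in the real system Bob twice outputs the same computational-basis state $\proj{a}$ (with $a$ the bit-flip part of $f(x_0)$), while in the ideal system the two outputs are independent maximally mixed states. Your per-query twirling computation is correct in isolation but does not establish the required \emph{joint} equality across queries sharing a $\urf$ input; closing this would require either having Bob reject repeated tags or giving the simulator a way to reuse a consistent key per tag, neither of which the stated protocol and ideal channel $\xqc$ provide.
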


In \aref{app:example.otp} we analyze the one\-/message case, which is
then used in the following proof.

\begin{proof}
    We present the simulator $\simul_E$ and $\qc_E^{2\ell,m,m+\nu}$ in
    the multi-message case and prove that ${\sys R}^{\ell}$ cannot be
    distinguished  from $\sys S^{\ell}$ with significant advantage for
    any distinguisher $\sys D$, where we define the real system
    illustrated in \autoref{fig:ic.xqc.real} as ${\sys R}^{\ell} \df
    \pi_{AB}\left[\iqc^{\ell,m+\nu},\urf^{\nu,2m},
      \qc^{\ell,m,m+\nu}_A, \qc^{\ell,m,m+\nu}_B \right]$, and the
    ideal system drawn in \autoref{fig:ic.xqc.ideal} is $\sys S^{\ell} \df \simul_E\left[\xqc^{\ell,m}, \qc_E^{2\ell,m,m+\nu}\right]$.
    
    The simulator $\simul_E$ works exactly the same as the one message
    case analyzed in \aref{app:example.otp}. On input $\lNewMsg$ from the ideal resource $\xqc^{\ell,m}$, it requests $\qc_E^{2\ell, m, m+\nu}$ to provide a quantum state $\psi$ and outputs $\psi$ at interface $\iE$.  On input $\tilde{\psi}$ at interface $\iE$, it requests $\qc_E^{2\ell, m, m+\nu}$ to provide a tuple $(i,k)$ (or $\bot$).  The tuple (or $\bot$) is then forwarded to ideal channel $\xqc^{\ell,m}$.		
    Eve's computing resource $\qc_E^{2\ell,m, m+\nu}$ will use the computing resource $\qc^\qconf$ of the one message Pauli-malleable confidential quantum channel construction as a subroutine. $\qc_E^{2\ell,m,m+\nu}$ works as following. 
\begin{itemize}
\item On the $i$-th input $\lNewMsg$ from $\simul_E$, the computer instantiates a new $\qc^\qconf_i$ instance with index $i$ and sends $\lNewMsg$ to $\qc^\qconf_i$.  After getting the state $\sigma$ from $\qc_i^\qconf$, the computer generates a random $x$ of length $\nu$,  stores tuple $(i,x)$ internally, appends $x$ to state $\sigma$ and outputs $\psi = \sigma \otimes \proj{x}$ to $\simul_E$.
\item On input $\tilde{\psi}$ from interface $\iE$, the computer takes the first $m$ qubits of $\tilde{\psi}$ as state $\tilde{\sigma}$, measures the last $\nu$ qubits of $\tilde{\psi}$ and obtains the measurement result $\tilde{x}$. If $\tilde{x}$ is not stored internally, the computer outputs $\bot$ to $\simul_E$. If $\tilde{x}$ is store internally, the computer retrieves tuple $(\tilde{\imath}, \tilde{x})$,  then it sends state $\tilde{\sigma}$ to $\qc^\qconf_{\tilde{\imath}}$.  After getting output $\tilde{k}$ (or $\bot$) from $\qc^\qconf_{\tilde{\imath}}$, the computer sends tuple $(\tilde{\imath}, \tilde{k})$ (or $\bot$) to $\simul_E$.  
\end{itemize}    
    Let us consider $\overline{\sys R}^{\ell} \df  \pi_{AB}\left[\overline{\iqc}^{\ell,m+\nu},\urf^{\nu,2m}, \qc^{\ell,m,m+\nu}_A, \qc^{\ell,m,m+\nu}_B \right]$ and $\overline{\sys S}^{\ell} \df \simul_E\left[\overline{\xqc}^{\ell,m}, \qc_E^{2\ell,m,m+\nu}\right]$ where the channels with overline $\overline{\iqc}$ and $\overline{\xqc}$ will reject any input from interface $\iA$ if it contains the same tag $x$ as previous input from interface $\iE$. Therefore, $\overline{\sys R}^\ell$ and $\sys R^\ell$,  $\overline{\sys S}^\ell$ and $\sys S^\ell$ are equivalent to any distinguisher up to the point that the distinguisher successfully guesses the string $x$ that is randomly chosen by Alice's computer.   In this case, in the real world the distinguisher inputs $\sigma \otimes \proj{x}$ at interface $\iE$, state $\sigma$ is flipped by some Pauli operator and is output at interface $\iB$. The distinguisher then forwards the output from interface $\iB$ to interface $\iA$. Alice's computer happens to choose the same $x$ for this input, thus flips the state back with the same Pauli operator and outputs same message $\sigma \otimes \proj{x}$ at interface $\iE$. While in the ideal world,  a fully mixed state with an appended tag will be output at interface $\iE$.   
    
    Let $x_1, x_2, \ldots, x_\ell$ be the random string of length $\nu$ randomly chosen by Alice,  the probability that distinguisher guesses one successfully in $\ell$ times is 
    $$1-(1- \frac{\ell}{2^\nu})^\ell \leq \frac{\ell^2}{2^\nu}.$$ 
    
    Therefore, for any distinguisher $\sys D$, we have     
    \begin{equation}
    d^{\sys D} \left({\sys R}^{\ell},  \sys S^{\ell}\right) 
    \leq d^{\sys D} \left({\overline{\sys R}}^{\ell},  {\overline{\sys S}}^{\ell}\right)  + \frac{\ell^2}{2^\nu}.   
    \end{equation}
   
    Following from a standard hybrid argument, if a distinguisher $\sys D$ can distinguish $\overline{\sys R}^\ell$ and $\overline{\sys S}^\ell$,
    it can be used to construct a distinguisher in the one message scenario, distinguishing $\overline{R}^1 = \pi_{AB}\left[\overline{\iqc}^{1,m},\key^{2m},	 \qc^{1,m,m}_A, \qc^{1,m,m}_B\right]$ from $\overline{S}^1  = \simul^\qconf\left[\overline{\xqc}^{1,m}, \qc^\qconf\right]$. Yet from 
    \autoref{lem:ic.xqc.one} we know that $\overline{R}^1$ and $\overline{S}^1$  are equivalent. Therefore for any distinguisher $\sys D$, $d^{\sys D} \left({\overline{\sys R}}^{\ell},  {\overline{\sys S}}^{\ell}\right) = 0 $.
    
	Overall, we have proved that for any distinguisher $\sys D$,
	$$d^{\sys D} \left({\sys R}^{\ell},  \sys S^{\ell}\right) \leq    \frac{\ell^2}{2^\nu}    ,$$    
	which concludes the construction proof.
\end{proof} 

\begin{remark*}
The protocol given in \autoref{thm:ic.xqc} also has to satisfy
correctness, i.e., when the distinguisher always puts back the same
state Bob should decrypt correctly. One can easily see that this
holds, since in the real world, the state will be flipped on Alice's side and be flipped back on Bob side, thus the distinguisher will get the same state back at interface $\iB$.  
\end{remark*}

Suppose now that one has a $\prf$ resource and a bound $\epsilon^\prf$ satisfying \eqnref{eq:prf2urf},
that is, indistinguishable from $\urf$ within $\epsilon^\prf$, the following corollary follows trivially from the composition theorem \autoref{thm:composition} in \aref{app:composition}.    
\begin{corollary}
\begin{multline*}
\left[\iqc^{\ell,m+\nu}, \key^{r}, \prf^{r,\nu,2m}, \prf^{r,\nu,2m},\qc^{\ell,m,m+\nu}_A, \qc^{\ell,m,m+\nu}_B \right] \\
  \xrightarrow{\pi'_{AB},\epsilon}
  \left[\xqc^{\ell, m}, \qc^{2\ell,m,m+\nu}_E \right].
\end{multline*}
where $\pi'_{AB} =(\pi_{AB}, \pi^{\prf}), \epsilon(\sys D) = \epsilon^\prf(\sys D \sys C)+\ell^2  2^{-\nu}$ and $\sys C$ is the system including $
  \pi_{AB}, \iqc^{\ell,m+\nu}, \qc^{\ell,m,m+\nu}_A, \qc^{\ell,m,m+\nu}_B$.

\end{corollary}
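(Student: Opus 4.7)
The plan is to apply the composition theorem (\autoref{thm:composition}) directly to the two prior constructions already at our disposal. From \eqnref{eq:prf2urf} I have
\[ \left[\key^r, \prf_A^{r,\nu,2m}, \prf_B^{r,\nu,2m}\right] \xrightarrow{\pi^\prf, \epsilon^\prf} \urf^{\nu,2m}, \]
and from \autoref{thm:ic.xqc} I have
\[ \left[\iqc^{\ell,m+\nu}, \urf^{\nu,2m}, \qc_A^{\ell,m,m+\nu}, \qc_B^{\ell,m,m+\nu}\right] \xrightarrow{\pi_{AB}, \ell^2 2^{-\nu}} \left[\xqc^{\ell,m}, \qc_E^{2\ell,m,m+\nu}\right]. \]
The goal is simply to chain these two statements by substituting the $\prf$-based construction of $\urf$ into the second.

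The first step is to invoke parallel composition (\eqnref{eq:ac.composition2}) to lift the $\prf$-to-$\urf$ construction so that the spare resources $\iqc^{\ell,m+\nu}, \qc_A^{\ell,m,m+\nu}, \qc_B^{\ell,m,m+\nu}$ are made available on the side in both the assumed and the constructed systems; the error function remains $\epsilon^\prf$ composed with the trivial converter that grants the distinguisher access to those extra resources. The second step is sequential composition (\eqnref{eq:ac.composition1}) with \autoref{thm:ic.xqc}: the two protocols combine as $\pi'_{AB} = (\pi_{AB}, \pi^\prf)$, and the two error terms add.

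The only bookkeeping that requires attention is identifying the precise reduction system $\sys C$ that appears inside $\epsilon^\prf(\sys D \sys C)$. When we turn a distinguisher against the overall construction into one against the $\prf$ versus $\urf$ distinction, everything sitting between $\sys D$ and the $\prf$/$\urf$ subsystem must be absorbed into the reduction: this is exactly $\pi_{AB}$ wrapped around $\iqc^{\ell,m+\nu}, \qc_A^{\ell,m,m+\nu}, \qc_B^{\ell,m,m+\nu}$, which is the system $\sys C$ declared in the corollary. Using the elementary inequality $d^{\sys D}(\sys C \sys R, \sys C \sys S) \leq d^{\sys D \sys C}(\sys R, \sys S)$ and the triangle inequality then yields the bound $\epsilon(\sys D) = \epsilon^\prf(\sys D \sys C) + \ell^2 2^{-\nu}$. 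There is no genuinely hard step; the argument is a one-line invocation of composition once this reduction system is spelled out, and it is analogous to the $\prf$-lifting already carried out for the $\osqc$ construction in the preceding section.
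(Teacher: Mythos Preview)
Your proposal is correct and matches the paper's approach exactly: the paper merely states that the corollary ``follows trivially from the composition theorem \autoref{thm:composition},'' and you have spelled out precisely that invocation (parallel composition to add the spare resources, then sequential composition with \autoref{thm:ic.xqc}), including the identification of the reduction system $\sys C$.
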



    \section{Relations to Game-Based Security Definitions}
\label{sec:relations}

In this section we explore the relations between our constructive security definitions and two game based security definitions for (specific protocols making use of) symmetric quantum encryption schemes, both introduced in \cite{alagic2018unforgeable}.
The two notions we consider are those of \emph{quantum ciphertexts indistinguishability under adaptive chosen-ciphertext attack} ($\AGM\QCCA2$) and \emph{quantum authenticated encryption} ($\QAE$).
Both definitions are inspired by classical security notions which intrinsically require the ability to copy data, which in \cite{alagic2018unforgeable} were successfully translated into quantum analogue by circumventing the no-cloning theorem.

We will first show that $\QAE$ security exactly implies the constructive cryptography security notion of \emph{constructing a secure channel from an insecure one and a shared secret key}, which we call $\QSEC$ (but is actually stronger, and thus we also show a separation).
Secondly, we will relate the $\AGM\QCCA2$ security definition to the constructive cryptography security notion of \emph{constructing a confidential channel from an insecure one and a shared secret key}, which we call $\QCNF$, but the implication will be less direct.
In fact, we introduce two new (intermediate) game-based security definitions, $\RRC\QCCA2$ and $\RRO\QCCA2$, and show that:
\begin{enumerate}
    \item The classical versions of $\AGM\QCCA2$ and $\RRC\QCCA2$ are asymptotically equivalent;
    \item For a restricted class of schemes, $\RRC\QCCA2$ implies $\RRO\QCCA2$ (they are actually equivalent);
    \item $\RRO\QCCA2$ implies $\QCNF$ (but is actually stronger).
\end{enumerate}
We leave open the question whether it is possible to generalize (2.) to general schemes.
Throughout this section we will assume that the plaintext and the ciphertext spaces comprise elements of the same length, an thus ignore the corresponding superscripts for channels and quantum computers.

\subsection{Background and Notation}

%

In \cite{barnum2002qauth}, a characterization of any \emph{symmetric quantum encryption schemes} (\sqes) was given, which states that encryption works by attaching some (possibly) key-dependent auxiliary state, and applying a unitary operator, and decryption undoes the unitary, and then checks whether the support of the state in the auxiliary register has changed.
Thus, as pointed out in \cite{alagic2018unforgeable}, for key-generation function $\Gen$ (inducing a probability distribution over some key-space $\Key$), encryption function $\Enc$, and decryption function $\Dec$, we can characterize a \sqes\ $\sch\df(\Gen,\Enc,\Dec)$ as follows.
\begin{lemma}[\protect{\cite[Corollary 1]{alagic2018unforgeable}}]
    Let $\sch=(\Gen,\Enc,\Dec)$ be a \sqes.
    Then for every $k\in\Key$ there exists a probability distribution $p_k:\Rnd\to[0,1]$ and a family of quantum states $\{\ket{\psi_{k,r}}\reg T\}_{r\in\Rnd}$, with $\Pi_{k,r}\reg T\df\ketbra{\psi_{k,r}}{\psi_{k,r}}\reg T$, such that:
    \begin{itemize}
        \item $\Enc_k(\msg\reg M)\df V_k\lp\msg\reg M\otimes\Pi_{k,r}\reg T\rp V_k^\dag$, where $r$ is sampled according to $p_k$;
        \item $\Dec_k(\ctx\reg C)\df\Tr_T\lp P_{\aux_k}\reg T(V_k^\dag\ctx\reg CV_k)P_{\aux_k}\reg T\rp+\hat D_k\lp\bar P_{\aux_k}\reg T(V_k^\dag\ctx\reg CV_k)\bar P_{\aux_k}\reg T\rp$;
    \end{itemize}
    where $P_{\aux_k}\reg T$ and $\bar P_{\aux_k}\reg T$ are the orthogonal projectors onto the support of
    \[\aux_k\reg T\df\sum_{r\in\Rnd}p_k(r)\cdot\Pi_{k,r}\reg T=\sum_{r\in\Rnd}p_k(r)\cdot\ketbra{\psi_{k,r}}{\psi_{k,r}}\reg T.\]
\end{lemma}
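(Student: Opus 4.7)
The plan is to extract the claimed normal form from the Stinespring dilation of the encryption CPTP map, combined with the correctness requirement $\Dec_k\circ\Enc_k=\mathrm{id}_M$. First, for each fixed key $k\in\Key$, Stinespring's theorem provides an ancilla register $T$ (with $\Hi_C\cong\Hi_M\otimes\Hi_T$), a unitary $V_k$ on $\Hi_M\otimes\Hi_T$, and an ancilla state $\aux_k^T\in\D(\Hi_T)$ such that $\Enc_k(\msg)=V_k(\msg^M\otimes\aux_k^T)V_k^\dag$; unitarity (as opposed to a generic isometry) is forced by the fact that decryption must be able to invert encryption on the image. Taking any convex decomposition $\aux_k^T=\sum_{r\in\Rnd}p_k(r)\ketbra{\psi_{k,r}}{\psi_{k,r}}^T$ (e.g., the spectral decomposition), linearity of unitary conjugation yields $\Enc_k(\msg)=\sum_r p_k(r)\,V_k(\msg^M\otimes\Pi_{k,r}^T)V_k^\dag$, which is exactly the encryption form claimed: sample $r$ according to $p_k$, then apply the fixed unitary $V_k$ to $\msg\otimes\Pi_{k,r}$.

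For decryption, the key observation is that every honest ciphertext $\Enc_k(\msg)$ lies in the image under $V_k$ of the subspace $\Hi_M\otimes\supp(\aux_k)$. So $\Dec_k$ may be described as: first undo the encryption unitary via $V_k^\dag$, obtaining a state on $\Hi_M\otimes\Hi_T$; then measure $T$ with the binary PVM $\{P_{\aux_k}^T,\bar P_{\aux_k}^T\}$. The ``accept'' branch lies in $\Hi_M\otimes\supp(\aux_k)$, so tracing out $T$ recovers a valid message; on an honest ciphertext this branch carries all the weight and the output equals $\msg$, verifying correctness. The ``reject'' branch, which vanishes on honest ciphertexts, is processed by an arbitrary CPTP map $\hat D_k$ producing an error output (e.g., $\proj{\bot}$), and different schemes may instantiate $\hat D_k$ differently.

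The only real subtlety — and the step I expect to be the main obstacle — is justifying that \emph{every} valid $\Dec_k$ admits this incoherent ``measure-and-branch'' structure, rather than producing coherent interference between the accept and reject subspaces. This is handled by the standard trick of enlarging $T$ to include a classical record of the $\{P_{\aux_k}^T,\bar P_{\aux_k}^T\}$ outcome before applying $V_k^\dag$, and absorbing the enlargement back into $V_k$ and $\aux_k$: the enlargement leaves $\Enc_k$ unchanged while forcing $\Dec_k$ into the stated form without loss of generality, since any off-diagonal coherence between the two branches is invisible to a subsequent partial trace over $T$. With this dephasing in place, the two summands in the statement's expression for $\Dec_k(\ctx^C)$ are exactly the accept and reject outcomes of the measure-and-branch procedure, completing the characterization.
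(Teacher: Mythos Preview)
The paper does not prove this lemma at all: it is stated as a quotation of \cite[Corollary~1]{alagic2018unforgeable}, which in turn rests on the characterization of quantum authentication schemes from \cite{barnum2002qauth}. There is therefore no ``paper's own proof'' to compare against; the lemma is imported wholesale as background.

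That said, your sketch follows the standard route taken in those references. The encryption normal form via Stinespring plus a convex decomposition of the ancilla is exactly right, and identifying the incoherent accept/reject split as the only nontrivial point in the decryption characterization is the correct diagnosis. Your resolution, however, is phrased a bit loosely: you cannot literally ``enlarge $T$ to include a classical record of the $\{P_{\aux_k}^T,\bar P_{\aux_k}^T\}$ outcome before applying $V_k^\dag$'', since the projectors live on the post-$V_k^\dag$ space. The cleaner argument (and the one used in \cite{barnum2002qauth,alagic2018unforgeable}) is that perfect correctness $\Dec_k\circ\Enc_k=\mathrm{id}_M$, required to hold even when $M$ is entangled with a reference, forces any Stinespring dilation of $\Dec_k$ to act as an isometry of the form $V_k^\dag$-then-something on the honest-ciphertext subspace; the information--disturbance tradeoff then guarantees that cross terms between $P_{\aux_k}^T$ and $\bar P_{\aux_k}^T$ cannot contribute to the $M$ output, which is what licenses the two-branch decomposition with a free $\hat D_k$ on the reject branch. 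Your dephasing intuition is morally this argument, but as written it conflates modifying the scheme with analyzing the given $\Dec_k$.
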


For a \sqes\ $\sch$, we define a security notion $\mathsf{XXX}$ in
terms of the advantage $\Adv{\mathrm{xxx}}\sch{\sys D}$ of a
distinguisher $\sys D$ in solving some (usually distinction) problem
involving $\sch$.  In the asymptotic setting, security of $\sch$
according to notion $\mathsf{XXX}$ should be interpreted as
$\Adv{\mathrm{xxx}}\sch{\sys D}$ being negligible for every $\sys D$
from some class $\bD$ of distinguishers (usually, efficient
distinguishers). Following the finite security approach, here we are
just interested in relating advantages of different notions, making
use of black-box reductions.  Therefore, for a second notion
$\mathsf{YYY}$, we say that \emph{$\mathsf{XXX}$ (security) implies
  $\mathsf{YYY}$ (security)} if and only if
$\Adv{\mathrm{yyy}}\sch{\sys D}\leq c\cdot\Adv{\mathrm{xxx}}\sch{\sys D\sys
  C}$, for some $c\geq1$, where $\sys C$ denotes the black-box reduction that uses the
distinguisher $\sys D$ for $\mathsf{YYY}$ to make a new distinguisher
$\sys D\sys C$ for $\mathsf{XXX}$.

When describing experiments involving interaction between a distinguisher\footnote{We understand the distinguisher $\sys D$ as stateful, which can therefore be invoked multiple times (without making explicit the various updated states).} $\sys D$ and a game system $\sys G$, we use pseudo-code from $\sys G$'s perspective, that is, the ${\bf return}$ statement indicates what is output by the latter.
Note that this implies that for distinction problems we always make the game system output the bit output by the distinguisher.
In this case we use the expression $\sys D[\sys G]$ to denote the bit output by $\sys D$ after interacting with $\sys G$.
On the other hand, if the output bit is decided by $\sys G$ (as is the case for the $\AGM\QCCA2$ definition, restated in \aref{app:agm}, which is \emph{not} a distinction problem), we use the expression $\sys G[\sys D]$.
Moreover, we use both expressions not only for the returned value, but also for denoting the whole random experiments.
When specifying that a distinguisher $\sys D$ has access to a list of oracles, e.g. ${\bf O}_1(\cdot)$ and ${\bf O}_2(\cdot)$, we write $x\gets\sys D^{{\bf O}_1(\cdot),{\bf O}_2(\cdot)}$, where the variable $x$ holds the value output by $\sys D$ after the interaction with the oracles.
We denote the application of a two-outcome projective measurement,
e.g. $\{P_{\aux_k}\reg T,\One-P_{\aux_k}\reg T\}$, as
$\{P_{\aux_k}\reg T,\One-P_{\aux_k}\reg T\}\Out b$, where
$b\in\{0,1\}$ is the result of the measurement (we associate $0$ to
the the first outcome and $1$ to the second).  The state $\epr$ is the
EPR pair (one of the Bell state), to which we associate the
two-outcome projective measurement $\{\Pi_+,\One-\Pi_+\}$.
Furthermore, by $XY\gets\epr$ we mean that the EPR pair has been
prepared on registers $XY$, and we use $\tau\reg X$ as a shorthand for
the reduced state in register $X$, that is, half of a
maximally-entangled state.  Finally, we use sans-serif font for
boolean labels, e.g. $\lCheat$, and by the expression $\ite\lCheat xy$
we mean the value $x$ if $\lCheat$ is true ($1$), and the value $y$
otherwise (false, $0$).

\subsection{Relating $\QAE$ and $\QSEC$}

In this section we first present the quantum authenticated encryption security definition introduced in \cite{alagic2018unforgeable}, and then show that it directly implies our constructive security notion $\QSEC$ of constructing a secure channel from an insecure one and a shared secret key.

\subsubsection{$\QAE$ Security Definition (\cite{alagic2018unforgeable}).}

We begin by restating what it means for a \sqes\ $\sch$ to be secure in the $\QAE$ sense according to \cite{alagic2018unforgeable}.
On a high level, a distinguisher $\sys D$ must not be able to distinguish between two scenarios: in the first (the real one), it has access to regular encryption and decryption oracles, whereas in the second (the ideal one), it has access to an encryption oracle which replaces its queried plaintexts by random ones (half of a maximally-entangled state), and a decryption oracle that normally decrypts ciphertext not returned by the encryption oracle, but answers with the originally queried plaintexts otherwise (thus not really performing correct decryption).
Note that this security notion, as shown in \cite{alagic2018unforgeable}, when phrased classically is equivalent to the canonical notion of authenticated encryption (dubbed $\IND\CCA3$ by Shrimpton in \cite{shrimpton2004characterization}).
The only difference with the latter, is that the decryption oracle returns $\bot$ when queried on ciphertexts previously returned by the encryption oracle.
But crucially, this detail is what would not make it possible to adapt $\IND\CCA3$ into a quantum definition: returning $\bot$ would require the game to \emph{copy data} (store the ciphertexts returned by the encryption oracle, and then compare them to each query to the decryption oracle), which is not allowed in general in the quantum world.
Nevertheless, the formulation of $\QAE$ introduced in \cite{alagic2018unforgeable} works quantumly because, intuitively, \emph{``it is possible to compare random states generated as half of a maximally-entangled state''}: the trick consists of first ignoring (but storing) each plaintext submitted by the adversary to the encryption oracle, and then, for each plaintext, prepare an EPR pair $\epr$, encrypt just half of it, and store the other half (as well as the involved randomness) together with the original plaintext submitted by the distinguisher; then the decryption oracle normally decrypts each ciphertext, and subsequently applies a projective measurement on the support of $\epr$ to the obtained plaintext against each stored half, and the associated original plaintext can thus be easily retrieved.
We now restate the definition from \cite{alagic2018unforgeable} (Definition 10 therein), adapted to our notation, and in the concrete setting (as opposed to the asymptotic one).

\begin{definition}[$\QAE$ Security \cite{alagic2018unforgeable}]
    \label{def:qae}
    For \sqes\ $\sch\df(\Gen,\Enc,\Dec)$ (implicit in all defined systems) we define the \emph{$\QAE$-advantage} of $\sch$ for distinguisher $\sys D$ as
    \[\Adv\qae\sch{\sys D}\df\pr{}{\sys D[\sys G^\qaer]=1}-\pr{}{\sys D[\sys G^\qaei]=1},\]
    where the interactions of $\sys D$ with game systems $\sys G^\qaer$ and $\sys G^\qaei$ are defined in \autoref{fig:qae}.
\end{definition}
\begin{figure}[tb]
    \begin{tcbraster}[raster columns=1, sidebyside, lefthand width=.28\textwidth]
        \begin{algobox}{\textwidth}{Experiments ${\sys D[\sys G^\qaer]}$ and ${\sys D[\sys G^\qaei]}$ for \sqes\ ${\sch\df(\Gen,\Enc,\Dec)}$}
            \begin{algorithmic}
                \State $k\gets\Gen()$
                \State\Return $\sys D^{\Enc_k(\cdot),\Dec_k(\cdot)}$
            \end{algorithmic}
            \tcblower
            \begin{algorithmic}
                \State $k\gets\Gen()$
                \State $\cM\gets\es$
                \State\Return $\sys D^{{\bf Enc}(\cdot),{\bf Dec}(\cdot)}$
                \vspace{1mm}\hrule\vspace{1mm}
                \Orac[Enc]{$\msg\reg M$}
                    \State $\hat r\samp\Rnd$
                    \State $\hat M\tilde M\gets\epr$
                    \State $\hat\ctx\reg{\hat C}\gets V_k(\hat{\msg}\reg{\hat M}\otimes\Pi_{k,\hat r}\reg T)V_k^\dag$\cmt{Ignore $\msg\reg M$}
                    \State $\cM\gets\cM\cup\{(\hat r,\tilde M,M)\}$
                    \State\Return $\hat\ctx\reg{\hat C}$
                \EndOrac
                    \vspace{1mm}\hrule\vspace{1mm}
                \Orac[Dec]{$\ctx\reg C$}
                    \State $\hat MT\gets V_k^\dag\ctx\reg CV_k$
                    \For{\Each$(\hat r,\tilde M,M)\in\mc M$}
                        \If{$\{\Pi_{k,\hat r},\One-\Pi_{k,\hat r}\}(\aux\reg T)\Out0$}
                            \If{$\{\Pi_+,\One-\Pi_+\}(\varphi\reg{\hat M\tilde M})\Out0$}
                                \State\Return $\msg\reg M$
                            \EndIf
                        \EndIf
                    \EndFor
                    \State\Return $\Bot$
                \EndOrac
            \end{algorithmic}
        \end{algobox}
    \end{tcbraster}
    \caption{$\QAE$ security games $\sys G^\qaer$ ({\bf left}) and $\sys G^\qaei$ ({\bf right}).}
    \label{fig:qae}
\end{figure}

\subsubsection{$\QAE$ Implies $\QSEC$.}

Here we denote by $\sys G^{\qaer,\ell}$ and $\sys G^{\qaei,\ell}$ the games $\sys G^{\qaer}$ and $\sys G^{\qaei}$ where the distinguisher is allowed to make at most $\ell$ queries to each oracle (and analogously for $\Adv{\qae,\ell}\sch{\sys D}$).

\begin{figure}[tb]
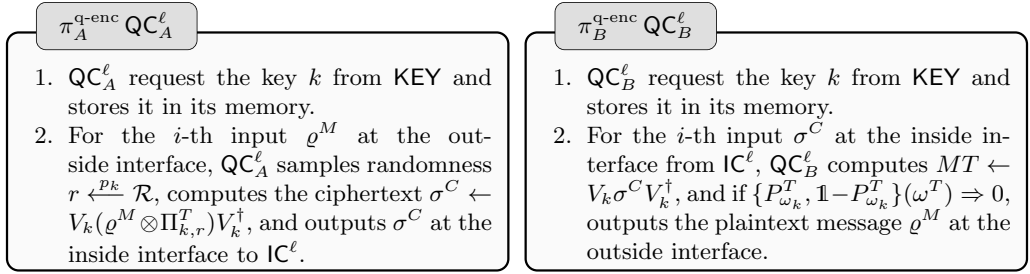

    \begin{tcbraster}[raster columns=2, raster equal height=rows]
        \begin{convbox}{\textwidth}{$\pi^\qenc_A\,\qc^\ell_A$}
            \begin{enumerate}
                \item $\qc^\ell_A$ request the key $k$ from $\key$ and stores it in its memory.
                \item For the $i$-th input $\msg\reg M$ at the outside interface, $\qc^\ell_A$ samples randomness $r\samp\Rnd$, computes the ciphertext $\ctx\reg C\gets V_k(\msg\reg M\otimes\Pi_{k,r}\reg T)V_k^\dag$, and outputs $\ctx\reg C$ at the inside interface to $\iqc^\ell$.
            \end{enumerate}
        \end{convbox}
        \begin{convbox}{\textwidth}{$\pi^\qenc_B\,\qc^\ell_B$}
            \begin{enumerate}
                \item $\qc^\ell_B$ request the key $k$ from $\key$ and stores it in its memory.
                \item For the $i$-th input $\ctx\reg C$ at the inside interface from $\iqc^\ell$, $\qc^\ell_B$ computes $MT\gets V_k\ctx\reg CV_k^\dag$, and if $\{P_{\aux_k}\reg T,\One-P_{\aux_k}\reg T\}(\aux\reg T)\Out0$, outputs the plaintext message $\msg\reg M$ at the outside interface.
            \end{enumerate}
        \end{convbox}
    \end{tcbraster}
    \caption{Encryption and decryption protocols.}
    \label{fig:enc-dec}
\end{figure}
\newcommand\solidrule[1][1cm]{\rule[0.5ex]{#1}{.4pt}}
\newcommand\dashedrule{\mbox{\solidrule[2mm]\hspace{2mm}\solidrule[2mm]\hspace{2mm}\solidrule[2mm]}}
\begin{figure}[tb]
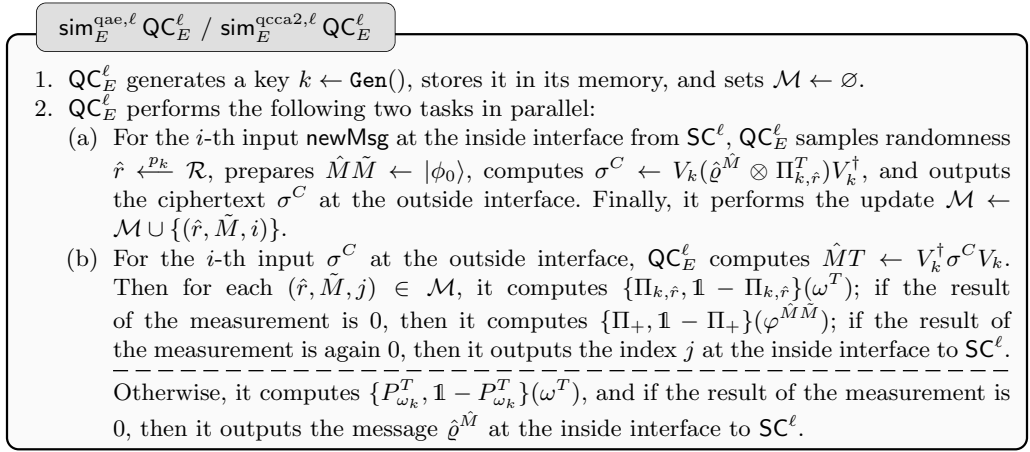

    \centering
    \begin{convbox}{\textwidth}{$\simul^{\qae,\ell}_E\,\qc^\ell_E$ / $\simul^{\qcca2,\ell}_E\,\qc^\ell_E$}
        \begin{enumerate}
            \item $\qc^\ell_E$ generates a key $k\gets\Gen()$, stores it in its memory, and sets $\cM\gets\es$.
            \item $\qc^\ell_E$ performs the following two tasks in parallel:
            \begin{enumerate}
                \item For the $i$-th input $\lNewMsg$ at the inside interface from $\sqc^\ell$, $\qc^\ell_E$ samples randomness $\hat r\samp\Rnd$, prepares $\hat M\tilde M\gets\epr$, computes $\ctx\reg C\gets V_k(\hat\msg\reg{\hat M}\otimes\Pi_{k,\hat r}\reg T)V_k^\dag$, and outputs the ciphertext $\ctx\reg C$ at the outside interface.
                Finally, it performs the update $\cM\gets\cM\cup\{(\hat r,\tilde M,i)\}$.
                \item For the $i$-th input $\ctx\reg C$ at the outside interface, $\qc^\ell_E$ computes $\hat MT\gets V_k^\dag\ctx\reg CV_k$.
                Then for each $(\hat r,\tilde M,j)\in\mc M$, it computes $\{\Pi_{k,\hat r},\One-\Pi_{k,\hat r}\}(\aux\reg T)$; if the result of the measurement is $0$, then it computes $\{\Pi_+,\One-\Pi_+\}(\varphi\reg{\hat M\tilde M})$; if the result of the measurement is again $0$, then it outputs the index $j$ at the inside interface to $\sqc^\ell$.\\[-2mm]
                \hdashrule{12cm}{0.5pt}{2mm 1mm}\\
                Otherwise, it computes $\{P_{\aux_k}\reg T,\One-P_{\aux_k}\reg T\}(\aux\reg T)$, and if the result of the measurement is $0$, then it outputs the message $\hat\msg\reg{\hat M}$ at the inside interface to $\sqc^\ell$. 
            \end{enumerate}
        \end{enumerate}
    \end{convbox}
    \caption{$\QAE$ (until the dashed line) and $\QCCA2$ (until the end) simulators.}
    \label{fig:sim}
\end{figure}

\begin{theorem}
    \label{thm:qae}
    Let $\sch\df(\Gen,\Enc,\Dec)$ be a \sqes\ (implicit in all defined systems).
    Then with protocol $\pi^\qenc_{AB}=(\pi^\qenc_A,\pi^\qenc_B)$ making use of quantum computers $\qc^\ell_A$ and $\qc^\ell_B$ as defined in \autoref{fig:enc-dec}, simulator $\simul^\qae_E$ making use of quantum computer $\qc^\ell_E$ as defined in \autoref{fig:sim} (until the dashed line), and (trivial) reduction system $\sys C$ as specified in the proof, for any distinguisher $\sys D$ we have
    \[\Delta^{\sys D}(\pi^\qenc_{AB}\,[\key,\iqc^\ell,\qc^\ell_A,\qc^\ell_B],\simul^\qae_E\,[\sqc^\ell,\qc^\ell_E])\leq\Adv{\qae,\ell}\sch{\sys D\sys C}.\]
\end{theorem}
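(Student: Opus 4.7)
The plan is to define a reduction system $\sys C$ that takes over the role of Alice's encryption, Bob's decryption, and the insecure channel in the AC real-world setup, delegating all cryptographic computations to the oracles of the $\QAE$ game. Concretely, $\sys C$ exposes three ports $A$, $B$, $E$ to $\sys D$: on a plaintext $\msg\reg M$ at $A$, $\sys C$ queries the encryption oracle on $\msg\reg M$ and outputs the returned ciphertext at $E$ (mimicking the leak of $\iqc^\ell$ to Eve); on a ciphertext $\tilde\ctx\reg{\tilde C}$ at $E$, $\sys C$ queries the decryption oracle and outputs the returned state at $B$, suppressing any $\Bot$ answer so as to match the behavior of $\qc^\ell_B$, which produces no output when the syndrome check fails.

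The proof then reduces to establishing two perfect system equalities by inspection. For the real game, $\sys C[\sys G^{\qaer,\ell}]=\pi^\qenc_{AB}[\key,\iqc^\ell,\qc^\ell_A,\qc^\ell_B]$ because both systems sample a single key $k$ at the start, both apply $V_k(\cdot\otimes\Pi_{k,r})V_k^\dag$ on every plaintext submitted at $A$, both route the resulting ciphertext via $\sys D$'s $E$ port, and both perform the same decryption-plus-syndrome check on whatever ciphertext returns. For the ideal game, $\sys C[\sys G^{\qaei,\ell}]=\simul^\qae_E[\sqc^\ell,\qc^\ell_E]$ because the ideal oracle ${\bf Enc}$ and the simulator respond to each new plaintext (resp.\ to $\lNewMsg$) identically — sampling $\hat r$, preparing $\hat M\tilde M\gets\epr$, encrypting the $\hat M$ half, and storing $(\hat r,\tilde M,\cdot)$ — while the ideal ${\bf Dec}$ and the simulator both perform, for each stored entry, the projective measurement $\{\Pi_{k,\hat r},\One-\Pi_{k,\hat r}\}$ followed by $\{\Pi_+,\One-\Pi_+\}$, returning the stored plaintext on first success. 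The key identification is that the register $M$ that the game keeps in $\mc M$ and the register $\rA_j$ that $\sqc^\ell$ holds are the same physical subsystem originally handed over by $\sys D$, so returning $M$ from the game and having $\sqc^\ell$ emit $\state\reg{\rA_j}$ upon the index $j$ produce the very same state at $\sys D$'s $B$ port.

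Given these two equalities, the conclusion is immediate: for every $\sys D$,
\[\Delta^{\sys D}(\pi^\qenc_{AB}\,[\key,\iqc^\ell,\qc^\ell_A,\qc^\ell_B],\simul^\qae_E\,[\sqc^\ell,\qc^\ell_E])=\Delta^{\sys D\sys C}(\sys G^{\qaer,\ell},\sys G^{\qaei,\ell})=\Adv{\qae,\ell}\sch{\sys D\sys C}.\]
I expect the delicate step to be verifying the second equality — specifically aligning the \emph{for each $(\hat r,\tilde M,M)\in\mc M$, return on first match} control flow of ${\bf Dec}$ with the simulator's iteration and index hand-off to $\sqc^\ell$, and arguing that the resulting joint state at $\sys D$'s interfaces agrees even in the mismatch case, where $\sys C$ must cleanly translate $\Bot$ into ``no output at $B$'' without disturbing the EPR halves still stored in $\mc M$. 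Everything else is bookkeeping once the reduction is fixed.
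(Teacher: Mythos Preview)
Your proposal is correct and takes essentially the same approach as the paper: define $\sys C$ to forward plaintexts at $A$ to the encryption oracle (outputting the result at $E$) and ciphertexts at $E$ to the decryption oracle (outputting the result at $B$), then argue $\sys C\sys G^{\qaer,\ell}\equiv\sys R$ and $\sys C\sys G^{\qaei,\ell}\equiv\sys S$, from which the bound is immediate. You are in fact more careful than the paper in flagging the $\Bot$-suppression and the register identification for the ideal case---the paper simply asserts both equivalences as ``trivial'' without spelling out these points.
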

\begin{proof}
    Let $\sys R\df\pi^\qenc_{AB}\,[\key,\iqc^\ell,\qc^\ell_A,\qc^\ell_B]$ and $\sys S\df\simul^\qae_E\,[\sqc^\ell,\qc^\ell_E]$.
    We need to provide a reduction system $\sys C$ so that distinguishing between the games $\sys G^{\qaer,\ell}$ and $\sys G^{\qaei,\ell}$ can be reduced to distinguishing between resources $\sys R$ and $\sys S$.
    Let $\sys D$ be a distinguisher for $\sys R$ and $\sys S$.
    Then we define an adversary $\sys D'=\sys D\sys C$ with access to real/ideal encryption oracle $\mathbf{Enc(\cdot)}$ and real/ideal decryption oracle $\mathbf{Dec(\cdot)}$ provided by $\sys G^{\qaer,\ell}$/$\sys G^{\qaei,\ell}$, where $\sys C$ behaves as follows towards $\sys D$:
    \begin{itemize}
        \item {\bf Interface $\iA$:} On input message $\msg$, output $\mathbf{Enc}(\msg)$ at interface $\iE$.
        \item {\bf Interface $\iE$:} On input state $\ctx$, output $\mathbf{Dec}(\ctx)$ at interface $\iB$.
    \end{itemize}
    Note that we trivially have that $\sys C\sys G^{\qaer,\ell}\equiv\sys R$ and $\sys C\sys G^{\qaei,\ell}\equiv\sys S$, hence
    \begin{align*}
    \Adv{\qae,\ell}\sch{\sys D\sys C}
    &=\pr{}{\sys D\sys C[\sys G^{\qaer,\ell}]=1}-\pr{}{\sys D\sys C[\sys G^{\qaei,\ell}]=1}\\
    &=\pr{}{\sys D[\sys C\sys G^{\qaer,\ell}]=1}-\pr{}{\sys D[\sys C\sys G^{\qaei,\ell}]=1}\\
    &=\pr{}{\sys D[\sys R]=1}-\pr{}{\sys D[\sys S]=1}\\
    &=\Delta^{\sys D}(\sys R,\sys S).
    \end{align*}
    This concludes the proof.
\end{proof}

\begin{corollary}
    \label{cor:qae}
    With $\e(\sys D):=\sup_{\sys D'\in\mathcal B(\sys D)}\Adv{\qae,\ell}\sch{\sys D'}$, we have
    \begin{equation*}
        \left[\key,\iqc^\ell,\qc^\ell_A,\qc^\ell_B\right]
        \xrightarrow{\pi^{\qenc}_{AB},\eps}
        \left[\sqc^\ell,\qc^\ell_E\right],
    \end{equation*}
    where the class $\mathcal B(\sys D)$ is defined in \eqnref{eq:distinguisher.class}, \autoref{sec:ac.theory}.
\end{corollary}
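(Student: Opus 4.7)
My plan is to deduce \autoref{cor:qae} directly from \autoref{thm:qae} by unfolding both sides of the construction statement using \autoref{def:security}. Let me write $\sys R \coloneqq \pi^\qenc_{AB}[\key,\iqc^\ell,\qc^\ell_A,\qc^\ell_B]$ and $\sys S \coloneqq \simul^\qae_E[\sqc^\ell,\qc^\ell_E]$. By \autoref{def:security}, the claim $\sys R \xrightarrow{\pi^\qenc_{AB},\eps} \sys S$ amounts to showing $d^{\sys D}(\sys R, \sys S) \leq \eps(\sys D)$ for every $\sys D$, and by the definition of the distinguisher\-/class pseudo\-/metric in \autoref{sec:ac.theory} this is in turn $\sup_{\sys D'\in\mathcal B(\sys D)} \Delta^{\sys D'}(\sys R, \sys S) \leq \eps(\sys D)$.

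Applied pointwise to each $\sys D' \in \mathcal B(\sys D)$, \autoref{thm:qae} already gives $\Delta^{\sys D'}(\sys R, \sys S) \leq \Adv{\qae,\ell}\sch{\sys D' \sys C}$, with $\sys C$ the trivial oracle\-/wrapping converter exhibited in its proof. The remaining step is to argue that $\sys D' \sys C$ itself lies in $\mathcal B(\sys D)$, so that its $\QAE$\-/advantage is controlled by the supremum defining $\eps(\sys D)$. This follows from \eqnref{eq:distinguisher.class}: if $\sys D \alpha = \sys D'$ for some converter $\alpha$, then $\sys D (\alpha \sys C) = \sys D' \sys C$, so the composite $\alpha \sys C$ witnesses membership of $\sys D' \sys C$ in $\mathcal B(\sys D)$, and the case $\sys D' \alpha = \sys D$ is symmetric. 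Hence $\mathcal B(\sys D)$ is closed under post\-/composition with $\sys C$.

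Chaining the two inequalities then yields $\Delta^{\sys D'}(\sys R, \sys S) \leq \Adv{\qae,\ell}\sch{\sys D' \sys C} \leq \sup_{\sys D'' \in \mathcal B(\sys D)} \Adv{\qae,\ell}\sch{\sys D''} = \eps(\sys D)$ for every $\sys D' \in \mathcal B(\sys D)$, and taking the supremum over $\sys D'$ on the left\-/hand side preserves the bound, giving $d^{\sys D}(\sys R, \sys S) \leq \eps(\sys D)$ as required. The argument is essentially bookkeeping on top of \autoref{thm:qae}; the only mildly delicate point is the closure observation about $\mathcal B(\sys D)$, which is the formal bridge between the game\-/based reduction of \autoref{thm:qae} and the class\-/based pseudo\-/metric used by \autoref{def:security}.
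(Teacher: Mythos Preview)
Your argument is correct in spirit and matches the paper's intended reading of the corollary (the paper gives no separate proof; the corollary is meant as an immediate repackaging of \autoref{thm:qae} into the pseudo\-/metric language of \autoref{def:security}).

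One point, however, is not as you state it. You claim that ``the case $\sys D'\alpha=\sys D$ is symmetric'' in the closure argument, but it is not: in the first case ($\sys D\alpha=\sys D'$), membership of $\sys D'\sys C$ in $\mathcal B(\sys D)$ follows purely from associativity, via the witness $\alpha\sys C$. In the second case you need a converter $\beta$ with $(\sys D'\sys C)\beta=\sys D=\sys D'\alpha$, i.e.\ $\sys C\beta=\alpha$, which is not automatic and requires $\sys C$ to admit a right inverse as a converter. For the specific $\sys C$ exhibited in the proof of \autoref{thm:qae} this is unproblematic: $\sys C$ only relabels oracle calls as interface inputs and outputs, so the reverse relabelling $\sys C^{-1}$ is again a converter and $\beta=\sys C^{-1}\alpha$ does the job. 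But the two disjuncts in \eqnref{eq:distinguisher.class} are not handled by the same mechanism, and the class $\mathcal B(\sys D)$ as defined is \emph{not} closed under post\-/composition with an arbitrary converter---only with invertible ones. Once you note that $\sys C$ is invertible, your chain of inequalities goes through exactly as written.
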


\subsubsection{$\QAE$ is Stronger than $\QSEC$.}

We remark that even though $\QAE$ implies $\QSEC$, the converse is not true.
In particular, we find that $\QAE$ is an (unnecessarily) stronger notion than $\QSEC$.
We can in fact show that there are \sqes s that satisfy $\QSEC$, but not $\QAE$.
Following \cite{canetti2003relaxing}, in order to show this fact it suffices to take any \sqes\ $\sch$ which is $\QAE$ secure, and slightly modify it into a new \sqes\ $\sch'$ so that a classical $0$-bit is appended to every encryption, which is then ignored upon decryption.
Now an adversary can flip the bit of a ciphertext that it got from the encryption oracle, and then query the decryption oracle on the new ciphertext: in the real setting it will get back the original message, while in the ideal setting it will get back $\Bot$, and can thus perfectly distinguish between the two, hence $\sch'$ cannot be $\QAE$ secure.
On the other hand, $\sch'$ is still $\QSEC$ secure because it can still be used to achieve the construction of a secure channel from an insecure one and a shared secret key.
This is possible by using a simulator which works essentially as $\simul_E^{\qae,\ell}\qc_E^\ell$ from \autoref{fig:sim}, but which ignores the bit.

\subsection{Relating $\QCCA2$ and $\QCNF$}

The goal of this section is to present and relate several $\QCCA2$ security definitions.
The original $\AGM\QCCA2$ definition from \cite{alagic2018unforgeable} is restated in \aref{app:agm}.
In this section we begin by introducing a new definition, $\RRC\QCCA2$ (where $\mathsf{RRC}$ stands for \emph{``real-or-random challenge''}), which is similar to $\AGM\QCCA2$. 
Both notions define a challenge phase, and thus we introduce a third variant, $\RRO\QCCA2$ (where $\mathsf{RRO}$ stands for \emph{``real-or-random oracles''}), in which there is no real-or-random challenge, but rather access to real-or-random oracles.
Crucially, the latter is identical to $\QAE$ as introduced by \cite{alagic2018unforgeable}, up to a small detail: \emph{upon decryption, if the ciphertext was not generated by the encryption oracle, instead of returning $\Bot$, return the decrypted plaintext.}
Finally, we show that for a restricted class of \sqes s, $\RRC\QCCA2$ implies $\RRO\QCCA2$, and for any \sqes s, $\RRO\QCCA2$ implies $\QCNF$.

\subsubsection{$\RRC\QCCA2$ Security Definition.}

We now introduce an alternative game-based security definition that seems more natural than $\AGM\QCCA2$.
This notion is defined in terms of a distinction problem (as opposed to $\AGM\QCCA2$), and essentially it is analogous to the test setting of the latter, $\sys G^\qccat2$, but where the decryption oracle provided to the distinguisher behaves differently: after the real-or-random challenge phase, upon querying the challenge ciphertext, it will respond with the plaintext originally submitted by the distinguisher, in both the real and ideal settings.
Note that this is possible in the ideal setting, because we make use of the same trick as in $\sys G^\qccaf2$, but we do not just set a flag whenever we detect that the adversary is cheating, but rather return the original message that it submitted as challenge.
Since a similar behavior is implemented in the real setting, the adversary must really be able to distinguish between ciphertexts in order to win.

\begin{definition}[$\RRC\QCCA2$ Security]
    \label{def:rrc}
    For \sqes\ $\sch\df(\Gen,\Enc,\Dec)$ (implicit in all defined systems) we define the \emph{$\RRC\QCCA2$-advantage} of $\sch$ for distinguisher $\sys D$ as
    \[\Adv{\rrc\qcca2}\sch{\sys D}\df\pr{}{\sys D[\sys G^{\rrc\qccar2}]=1}-\pr{}{\sys D[\sys G^{\rrc\qccai2}]=1},\]
    where the interactions of $\sys D$ with game systems $\sys G^{\rrc\qccar2}$ and $\sys G^{\rrc\qccai2}$ are defined in \autoref{fig:rrc}.
\end{definition}
\begin{figure}[tb]
    \begin{tcbraster}[raster columns=1, sidebyside, lefthand width=.302\textwidth]
        \begin{algobox}{\textwidth}{Experiments ${\sys G^{\rrc\qccar2}[\sys D]}$ and ${\sys G^{\rrc\qccai2}[\sys D]}$ for \sqes\ ${\sch\df(\Gen,\Enc,\Dec)}$}
            \begin{algorithmic}
                \State $k\gets\Gen()$
                \State $\msg\reg M\gets\sys D^{\Enc_k(\cdot),\Dec_k(\cdot)}$
                \State $\ctx\reg C\gets\Enc_k(\msg\reg M)$
                \State $b'\gets\sys D^{\Enc_k(\cdot),\Dec_k(\cdot)}(\sigma^C)$
                \State\Return $b'$
            \end{algorithmic}
            \tcblower
            \begin{algorithmic}
                \State $k\gets\Gen()$
                \State $\bar\msg\reg{\bar M}\gets\sys D^{\Enc_k(\cdot),\Dec_k(\cdot)}$\cmt{Keep $\bar\msg\reg{\bar M}$}
                \State $\hat r\samp\Rnd$\cmt{Keep $\hat r$}
                \State $\hat M\tilde M\gets\epr$
                \State $\hat\ctx\reg{\hat C}\gets V_k(\hat\msg\reg{\hat M}\otimes\Pi_{k,\hat r}\reg T)V_k^\dag$\cmt{Ignore $\bar\msg\reg M$}
                \State $b'\gets\sys D^{\Enc_k(\cdot),{\bf Dec}(\cdot)}(\hat\ctx\reg{\hat C})$
                \State\Return $b'$
                \vspace{1mm}\hrule\vspace{1mm}
                \Orac[Dec]{$\ctx\reg C$}
                    \State $MT\gets V_k^\dag\ctx\reg CV_k$
                    \If{$\{P_{\aux_k}\reg T,\One-P_{\aux_k}\reg T\}(\aux\reg T)\Out0$}
                        \If{$\{\Pi_{k,\hat r},\One-\Pi_{k,\hat r}\}(\aux\reg T)\Out0$}
                            \If{$\{\Pi_+,\One-\Pi_+\}(\varphi\reg{M\tilde M})\Out0$}
                                \State\Return $\bar\msg\reg{\bar M}$\cmt{Original challenge}
                            \EndIf
                        \EndIf
                    \Else
                        \State\Return $\hat D_k(\rho\reg{MT})$\cmt{Invalid ciphertext}
                    \EndIf
                    \State\Return $\msg\reg M$
                \EndOrac
            \end{algorithmic}
        \end{algobox}
    \end{tcbraster}\caption{$\RRC\QCCA2$ games $\sys G^{\rrc\qccar2}$ ({\bf left}) and $\sys G^{\rrc\qccai2}$ ({\bf right}).}
    \label{fig:rrc}
\end{figure}

\subsubsection{$\RRO\QCCA2$ Security Definition.}

In order to relate the latter definition with a constructive notion of confidentiality, it is helpful to have a game-based security definition which analogously to $\QAE$ defines a real and an ideal setting (by specifying real-or-random oracles, and in particular, not only a real-or-random challenge).
We do this by introducing the notion $\RRO\QCCA2$, which can be seen as a natural extension of $\RRC\QCCA2$.

\begin{definition}[$\RRO\QCCA2$ Security]
    \label{def:rro}
    For \sqes\ $\sch\df(\Gen,\Enc,\Dec)$ (implicit in all defined systems) we define the \emph{$\RRO\QCCA2$-advantage} of $\sch$ for distinguisher $\sys D$ as
    \[\Adv{\rro\qcca2}\sch{\sys D}\df\pr{}{\sys D[\sys G^{\rro\qccar2}]=1}-\pr{}{\sys D[\sys G^{\rro\qccai2}]=1},\]
    where the interactions of $\sys D$ with game systems $\sys G^{\rro\qccar2}$ and $\sys G^{\rro\qccai2}$ are defined in \autoref{fig:rro}.
\end{definition}
\begin{figure}[tb]
    \begin{tcbraster}[raster columns=1, sidebyside, lefthand width=.28\textwidth]
        \begin{algobox}{\textwidth}{Experiments ${\sys D[\sys G^{\rro\qccar2}]}$ and ${\sys D[\sys G^{\rro\qccai2}]}$ for \sqes\ ${\sch\df(\Gen,\Enc,\Dec)}$}
            \begin{algorithmic}
                \State $k\gets\Gen()$
                \State\Return $\sys D^{\Enc_k(\cdot),\Dec_k(\cdot)}$
            \end{algorithmic}
            \tcblower
            \begin{algorithmic}
                \State $k\gets\Gen()$
                \State $\cM\gets\es$
                \State\Return $\sys D^{{\bf Enc}(\cdot),{\bf Dec}(\cdot)}$
                \vspace{1mm}\hrule\vspace{1mm}
                \Orac[Enc]{$\msg\reg M$}
                    \State $\hat r\samp\Rnd$
                    \State $\hat M\tilde M\gets\epr$
                    \State $\hat\ctx\reg{\hat C}\gets V_k(\hat\msg\reg{\hat M}\otimes\Pi_{k,\hat r}\reg T)V_k^\dag$\cmt{Ignore $\msg\reg M$}
                    \State $\cM\gets\cM\cup\{(\hat r,\tilde M,M)\}$
                    \State\Return $\hat\ctx\reg{\hat C}$
                \EndOrac
                    \vspace{1mm}\hrule\vspace{1mm}
                \Orac[Dec]{$\ctx\reg C$}
                    \State $\hat MT\gets V_k^\dag\ctx\reg CV_k$
                    \For{\Each$(\hat r,\tilde M,M)\in\mc M$}
                        \If{$\{\Pi_{k,\hat r},\One-\Pi_{k,\hat r}\}(\aux\reg T)\Out0$}
                            \If{$\{\Pi_+,\One-\Pi_+\}(\varphi\reg{\hat M\tilde M})\Out0$}
                                \State\Return $\msg\reg M$
                            \EndIf
                        \EndIf
                    \EndFor
                    \If{$\{P_{\aux_k}\reg T,\One-P_{\aux_k}\reg T\}(\aux\reg T)\Out0$}
                        \State\Return $\hat\msg\reg{\hat M}$
                    \Else
                        \State\Return $\hat D_k(\rho\reg{\hat MT})$\cmt{Invalid ciphertext}
                    \EndIf
                \EndOrac
            \end{algorithmic}
        \end{algobox}
    \end{tcbraster}
    \caption{$\RRO\QCCA2$ games $\sys G^{\rro\qccar2}$ ({\bf left}) and $\sys G^{\rro\qccai2}$ ({\bf right}).}
    \label{fig:rro}
\end{figure}

\subsubsection{Relating $\AGM\QCCA2$ and $\RRC\QCCA2$.}

We feel that $\RRC\QCCA2$ is a much simpler and more natural definition than $\AGM\QCCA2$.
In fact, in \cite{alagic2018unforgeable} the authors claim that $\AGM\QCCA2$ is a ``natural'' security definition based on the fact that its classical analogon is shown to be equivalent to (a variation of) the standard classical $\IND\CCA2$ security definition.
We claim that our $\RRC\QCCA2$ is more natural in the sense that it is formulated as a normal distinction problem (as opposed to $\AGM\QCCA2$), and its classical analogon can be shown to be equivalent to standard classical $\IND\CCA2$ security much more directly (in particular, with no concrete security loss, as opposed to $\AGM\QCCA2$, where it is shown that the concrete reduction has a factor $2$ security loss).

Similarly as done in \cite{alagic2018unforgeable} for $\QAE$, whose classical restriction was shown to be equivalent to the common classical notion of authenticated encryption $\IND\CCA3$ from \cite{shrimpton2004characterization}, we now show that our $\RRC\QCCA2$ security notion, when casted to a classical definition, dubbed $\RRC\CCA2$, is equivalent (in particular, with no loss factors, as opposed to $\AGM\QCCA2$) to a common classical notion of $\IND\CCA2$.
The latter definition is the same mentioned in \cite{alagic2018unforgeable}, and comprises a real-or-random challenge, but the decryption oracle returns $\bot$ upon submitting the challenge ciphertext.
On the other hand, $\RRC\CCA2$ behaves exactly the same as $\IND\CCA2$, except that it always returns the challenge plaintext as originally submitted by the adversary upon querying the challenge ciphertext, independently from the (real or ideal) setting.
\begin{lemma}
    $\RRC\CCA2$ and $\IND\CCA2$ are equivalent.
\end{lemma}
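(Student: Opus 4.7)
The plan is to establish both implications via black-box reductions that preserve advantage exactly, so that in either direction a distinguisher winning one game with probability $p$ yields an explicit distinguisher winning the other with the same $p$. This will give concrete equivalence with no constant-factor loss, in contrast with the situation for $\AGM\QCCA2$.

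First I would prove $\IND\CCA2 \Rightarrow \RRC\CCA2$. Given any distinguisher $\sys D$ for $\RRC\CCA2$, I construct $\sys D'$ for $\IND\CCA2$ that runs $\sys D$ internally, forwards every pre-challenge encryption and decryption query to its own oracles unchanged, and when $\sys D$ outputs its challenge plaintext $\bar m$, submits $\bar m$ to its own challenger to obtain a challenge ciphertext $\hat c$, storing the pair $(\bar m, \hat c)$ before handing $\hat c$ to $\sys D$. From that point on, every decryption query equal to $\hat c$ is answered locally with $\bar m$, while every other decryption query is forwarded to the $\IND\CCA2$ decryption oracle; $\sys D'$ finally outputs whatever $\sys D$ outputs. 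Since the two games coincide in all other respects and differ \emph{only} in the response to a decryption query on the challenge ciphertext (which is $\bot$ in $\IND\CCA2$ and $\bar m$ in $\RRC\CCA2$), the reduction perfectly simulates $\RRC\CCA2$ in both the real and the ideal settings, giving the same advantage.

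The converse $\RRC\CCA2 \Rightarrow \IND\CCA2$ is completely symmetric: the reduction intercepts any decryption query equal to the stored $\hat c$ and substitutes the reply $\bot$ in place of the $\bar m$ that the $\RRC\CCA2$ oracle would return, passing everything else through. Again the simulation is perfect in both settings and the advantages coincide.

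The only step that deserves comment, and the reason this simple argument does not transfer to the quantum setting, is that the reduction must detect subsequent decryption queries that target the challenge ciphertext. Classically $\hat c$ is just a bit string which can be copied, stored, and compared against each subsequent query, so this interception is trivial; it is the main (and only) conceptual step. Quantumly no-cloning forbids such direct comparison, which is precisely why the $\RRC\QCCA2$ and $\RRO\QCCA2$ definitions above resort to the EPR-pair-based measurement trick in their decryption oracles. Hence the classical proof reduces to the observation that the two games differ only by a deterministic local substitution keyed on equality with $\hat c$, which any reduction can transparently apply or undo.
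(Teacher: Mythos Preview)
Your proposal is correct and matches the paper's own proof essentially line for line: both directions use the obvious interception reduction that stores the challenge ciphertext (and, for the $\IND\Rightarrow\RRC$ direction, also the challenge plaintext) and locally substitutes $\bar m$ for $\bot$ or vice versa on any decryption query equal to $\hat c$. Your additional remarks on why the classical copy-and-compare step fails quantumly are accurate context but go beyond what the paper records in its two-sentence proof.
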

\begin{proof}
    To transform $\RRC\CCA2$ into $\IND\CCA2$, the reduction simply stores the challenge ciphertext $\hat c$, and returns $\bot$ whenever the decryption oracle is queried upon $\hat c$.
    To transform $\IND\CCA2$ into $\RRC\CCA2$, the reduction simply stores the challenge plaintext $\hat m$ and the challenge ciphertext $\hat c$, and returns $\hat m$ whenever the decryption oracle is queried upon $\hat c$.
\end{proof}

\subsubsection{$\RRC\QCCA2$ Implies $\RRO\QCCA2$.}

As above, here we add as superscript the parameter $\ell$ to games and advantages to denote that the distinguisher is allowed to make at most $\ell$ queries to the oracles.
Note that we relate $\RRC\QCCA2$ and $\RRO\QCCA2$ for only the subclass of \sqes s which satisfy the following condition.

\newtheorem{condition}{\bf Condition}
\newcommand\conref[1]{\hyperref[#1]{Condition~\ref*{#1}}}
\begin{condition}
    \label{con:restricted}
    \textit{\sqes\ $\sch$ is such that the auxiliary state does not depend on the key (but possibly on the randomness), and it appends explicitly the randomness to the ciphertext, that is:
    \[\Enc_k(\msg\reg M)=U_{k,r}(\msg\reg M\otimes\Pi_r\reg T)U_{k,r}^\dagger\otimes\ketbra rr\reg R,\]
    for some unitary $U_{k,r}$ depending on both the key $k$ and the randomness $r$.}
\end{condition}
\noindent We remark that this restriction still captures all the
explicit protocols considered in \cite{alagic2018unforgeable}.

\begin{lemma}
    \label{lem:rrc-rro}
    Let $\sch$ be a \sqes\ satisfying \conref{con:restricted}.
    Then for reduction system $\sys C_I$ as specified in the proof, for any distinguisher $\sys D$ we have
    \[\Adv{\rro\qcca2,\ell}\sch{\sys D}\leq\ell\cdot\Adv{\rrc\qcca2,\ell-1}\sch{\sys D\sys C_I}.\]
\end{lemma}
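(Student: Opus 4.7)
My plan is a standard hybrid argument combined with a reduction that exploits \conref{con:restricted} to emulate the ideal decryption oracle without knowledge of the key. For $0\le i\le\ell$, define the hybrid $H_i$ in which the first $i$ encryption queries are answered by the $\RRO\QCCA2$ ideal rule (sample $\hat r$, prepare $\hat M\tilde M\gets\epr$, encrypt $\hat M$ with $\Enc_k$, and store $(\hat r,\tilde M,M)$), while the remaining $\ell-i$ encryption queries are answered by the real $\Enc_k$; the decryption oracle of $H_i$ performs the ideal projective check against the first $i$ stored tuples before either returning a stored message, the real decryption, or the invalid output. Then $H_0\equiv\sys G^{\rro\qccar2,\ell}$ and $H_\ell\equiv\sys G^{\rro\qccai2,\ell}$, so a telescoping sum yields
\[\Adv{\rro\qcca2,\ell}\sch{\sys D}=\sum_{i=1}^\ell\bigl(\pr{}{\sys D[H_{i-1}]=1}-\pr{}{\sys D[H_i]=1}\bigr).\]

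For each $i\in[\ell]$ I will build a reduction $\sys C_i$ such that $\sys C_i$ interacting with $\sys G^{\rrc\qccar2,\ell-1}$ perfectly simulates $H_{i-1}$ and interacting with $\sys G^{\rrc\qccai2,\ell-1}$ perfectly simulates $H_i$. On $\sys D$'s $j$-th encryption query $\msg_j$, $\sys C_i$ proceeds as follows: for $j<i$, it samples $\hat M_j\tilde M_j\gets\epr$, forwards $\hat M_j$ to the real encryption oracle, uses \conref{con:restricted} to read $r_j$ from the classical $R$-register of the returned ciphertext, stores $(r_j,\tilde M_j,\msg_j)$ and returns the ciphertext to $\sys D$; for $j=i$, it submits $\msg_j$ as the one-shot challenge and returns the challenge ciphertext; for $j>i$, it simply relays the query to the real encryption oracle. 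On each decryption query $\ctx$, $\sys C_i$ first reads $r'$ from the $R$-register of $\ctx$, then iterates over its $i-1$ locally-stored tuples: for each $j$ with $r_j=r'$ it calls the real decryption oracle to obtain a candidate plaintext $M'$ and applies the Bell projection $\{\Pi_+,\One-\Pi_+\}$ against $\tilde M_j$, returning $\msg_j$ on outcome $0$. If no local match fires, $\ctx$ is forwarded to the game's decryption oracle, which in the ideal world already implements the analogous check for the $i$-th (challenge) tuple. The final $\sys C_I$ first samples $I\in[\ell]$ uniformly and then executes $\sys C_I$; averaging the per-hybrid advantage over $I$ exchanges the sum above for an expectation and yields the factor-$\ell$ loss.

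The main obstacle is verifying the quantum equivalence between the reduction's emulated decryption and the ideal oracle's built-in projective measurements. The $\RRO\QCCA2$ ideal oracle applies $V_k^\dag$ inside the game and measures $\Pi_{k,\hat r}$ on an internal $T$-register, whereas the reduction reads $r'$ from the external $R$-register of the ciphertext and invokes $\Dec_k$ to obtain the candidate plaintext; \conref{con:restricted} is precisely what makes these two procedures indistinguishable, because $V_k$ acts as $U_{k,r}$ controlled on the classical $R$-register and the auxiliary projector $\Pi_r$ is key-independent, so extracting $r'$ externally and subsequently projecting onto the Bell pair against $\tilde M_j$ produces the same post-measurement state as the internal sequence of measurements performed by the honest ideal oracle. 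A secondary bookkeeping step is that the iteration over the $i-1$ stored tuples collapses to a single selective test: distinct $r_j$'s yield orthogonal outcomes of the classical $R$-measurement, so at most one tuple can fire on any ciphertext and the order of the tests is immaterial. These observations together establish $\sys C_i[\sys G^{\rrc\qccar2,\ell-1}]\equiv H_{i-1}$ and $\sys C_i[\sys G^{\rrc\qccai2,\ell-1}]\equiv H_i$, from which the advertised bound follows.
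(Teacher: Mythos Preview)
Your proposal is essentially the paper's proof: the same hybrids, the same reduction $\sys C_i$ on the encryption side (using \conref{con:restricted} to read the classical randomness $\hat r$ off the returned ciphertext), and the same averaging over a uniform index $I$ to obtain the factor $\ell$. The one structural difference is in the decryption oracle of $\sys C_i$: the paper makes a single call to the game's $\mathbf{Dec}$ upfront and then carries out all local Bell checks on the returned plaintext, whereas you call $\mathbf{Dec}$ inside the local loop and, if the Bell projection fails, attempt to ``forward $\ctx$ to the game's decryption oracle'' afterwards. That final forwarding is not possible in the quantum setting, since the ciphertext register was already consumed by the earlier $\mathbf{Dec}$ call; adopting the paper's single-call-then-check ordering removes this wrinkle, and the rest of your argument goes through unchanged.
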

\begin{proof}
    We need to provide a reduction system $\sys C_i$, parameterized on an index $i\in\{1,\ldots,\ell\}$, so that distinguishing between the two games $\sys G^{\rrc\qccar2,\ell-1}$ and $\sys G^{\rrc\qccai2,\ell-1}$ can be reduced to distinguishing between $\sys G^{\rro\qccar2,\ell}$ and $\sys G^{\rro\qccai2,\ell}$.
    Let $\sys D$ be a distinguisher for $\sys G^{\rro\qccar2,\ell}$ and $\sys G^{\rro\qccai2,\ell}$.
    Then we define an adversary $\sys D'=\sys D\sys C_I$, where $I$ is a uniform random variable over $\{1,\ldots,\ell\}$, with access to encryption oracle $\mathbf{Enc(\cdot)}$, decryption oracle $\mathbf{Dec(\cdot)}$, and challenge oracle $\mathbf{Chall(\cdot)}$, provided by $\sys G^{\rrc\qccar2,\ell}$/$\sys G^{\rrc\qccai2,\ell}$, where $\sys C_i$ internally keeps a set $\cM$ and exports oracles $\mathbf{Enc'(\cdot)}$ and $\mathbf{Dec'(\cdot)}$ towards $\sys D$ which behave as follows:
    \begin{itemize}
        \item Oracle $\mathbf{Enc'(\cdot)}$: on input the $j$-th message $\msg\reg M$:
        \begin{itemize}
            \item If $j<i$:
            \begin{enumerate}
                \item $\hat M\tilde M\gets\epr$.
                \item $\hat\ctx\reg{\hat C}\otimes\ketbra{\hat r}{\hat r}\reg R\gets{\bf Enc}(\hat\msg\reg{\hat M})$.
                \item $\cM\gets\cM\cup\{(\hat r,\tilde M,M)\}$.
                \item Return $\hat\ctx\reg{\hat C}\otimes\ketbra{\hat r}{\hat r}\reg R$.
            \end{enumerate}
            \item If $j=i$: return $\mathbf{Chall}(\msg\reg M)$.
            \item If $j>i$: return $\mathbf{Enc}(\msg\reg M)$.
        \end{itemize}
        \item Oracle $\mathbf{Dec'(\cdot)}$: on input ciphertext $\ctx\reg C\otimes\ketbra rr\reg R$:
        \begin{enumerate}
            \item $\hat\msg\reg{\hat M}\gets{\bf Dec}(\ctx\reg C\otimes\ketbra rr\reg R)$.
            \item If $\hat\msg\reg{\hat M}=\Bot$, return $\Bot$, otherwise proceed.
            \item For each $(\hat r,\tilde M,M)\in\cM$ do the following: if $\{\Pi_{\hat r},\One-\Pi_{\hat r}\}(\aux\reg T)\Out0$ and $\{\Pi_+,\One-\Pi_+\}(\varphi\reg{\hat M\tilde M})\Out0$, return $\msg\reg M$, otherwise proceed.
            \item Return $\Bot$.
        \end{enumerate}
    \end{itemize}
    Let now define the hybrid systems $\sys H_i$ resulting from attaching the reduction system $\sys C_i$ to the real $\RRC\QCCA2$ game, that is, define $\sys H_i\df\sys C_i\sys G^{\rrc\qccar2,\ell-1}$.
    Then observe that $\sys H_{i+1}\equiv\sys C_i\sys G^{\rrc\qccai2,\ell-1}$, $\sys H_1\equiv\sys G^{\rro\qccar2,\ell}$, and $\sys H_{\ell+1}\equiv\sys G^{\rro\qccai2,\ell}$.
    Finally, by the law of total probability, we have:
    \allowdisplaybreaks
    \begin{align*}
    &\Adv{\rrc\qcca2,\ell-1}\sch{\sys D\sys C}\\
    &\quad=\pr{}{\sys D\sys C[\sys G^{\rrc\qccar2,\ell-1}]=1}-\pr{}{\sys D\sys C[\sys G^{\rrc\qccai2,\ell-1}]=1}\\
    &\quad=\frac1\ell\sum_{i=1}^\ell\lp\pr{}{\sys D[\sys C_i\sys G^{\rrc\qccar2,\ell-1}]=1}-\pr{}{\sys D[\sys C_i\sys G^{\rrc\qccai2,\ell-1}]=1}\rp\\
    &\quad=\frac1\ell\sum_{i=1}^\ell\lp\pr{}{\sys D[\sys H_i]=1}-\pr{}{\sys D[\sys H_{i+1}]=1}\rp\\
    &\quad=\frac1\ell\cdot\lp\pr{}{\sys D[\sys H_1]=1}-\pr{}{\sys D[\sys H_{\ell+1}]=1}\rp\\
    &\quad=\frac1\ell\cdot\lp\pr{}{\sys D[\sys G^{\rro\qccar2,\ell}]=1}-\pr{}{\sys D[\sys G^{\rro\qccai2,\ell}]=1}\rp\\
    &\quad=\frac1\ell\cdot\Adv{\rro\qcca2,\ell}\sch{\sys D}.
    \end{align*}
    This concludes the proof.
\end{proof}

\begin{remark}
    \label{rem:rro-rrc}
    It is easy to show that the other direction of \autoref{lem:rrc-rro} also holds (for the same class of \sqes), that is, $\RRO\QCCA2$ implies $\RRC\QCCA2$.
    For this, the reduction $\sys C$ flips a bit $\tilde B$ and uses the $\RRO\QCCA2$ security game to emulate the $\RRC\QCCA2$ game, resulting in perfect emulation with probability $\frac12$, and perfect unguessability otherwise.
    Thus, with $\sys D\sys C$ outputting $1$ if and only if $\sys D$ correctly guesses $\tilde B$, we have $\Adv{\rrc\qcca2,\ell}\sch{\sys D}\leq2\cdot\Adv{\rro\qcca2,\ell-1}\sch{\sys D\sys C}$, and therefore the two notions are asymptotically equivalent, as we formalize in the following lemma.
\end{remark}

\begin{lemma}
    For \sqes\ satisfying \conref{con:restricted}, $\RRC\QCCA2$ and $\RRO\QCCA2$ are asymptotically equivalent.
\end{lemma}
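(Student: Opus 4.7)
The plan is to combine the two finite-reduction statements already established in this section: \autoref{lem:rrc-rro} and the inequality sketched in \autoref{rem:rro-rrc}. Both are explicit black-box reductions with only polynomial overhead in the query bound $\ell$, so asymptotic equivalence follows once we recall what asymptotic security means in this setting: $\sch$ is $\mathsf{XXX}$-secure if $\Adv{\mathrm{xxx},\ell}{\sch}{\sys D}$ is negligible for every efficient distinguisher $\sys D$ making a number of queries $\ell$ that is polynomial in the security parameter.

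For the forward direction ($\RRC\QCCA2 \Rightarrow \RRO\QCCA2$), I would invoke \autoref{lem:rrc-rro} directly. For any efficient $\sys D$, the reduction system $\sys C_I$ constructed in that proof is efficient (it merely keeps a list $\cM$ and runs a handful of applications of the EPR/projective-measurement trick per query), so $\sys D\sys C_I$ is efficient; since $\ell$ is polynomial and $\Adv{\rrc\qcca2,\ell-1}{\sch}{\sys D\sys C_I}$ is negligible under the $\RRC\QCCA2$-security assumption, the bound $\Adv{\rro\qcca2,\ell}{\sch}{\sys D} \leq \ell \cdot \Adv{\rrc\qcca2,\ell-1}{\sch}{\sys D\sys C_I}$ is a product of a polynomial and a negligible function, hence negligible.

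For the reverse direction ($\RRO\QCCA2 \Rightarrow \RRC\QCCA2$), the plan is to make the sketch in \autoref{rem:rro-rrc} precise. The reduction $\sys C$ samples a uniform bit $\tilde B \in \bin$ and, using its RRO encryption and decryption oracles, emulates either $\sys G^{\rrc\qccar2}$ or $\sys G^{\rrc\qccai2}$ toward $\sys D$ according to $\tilde B$. Here \conref{con:restricted} is exploited to ensure that the random-encryption step used in the challenge can be implemented coherently via the same EPR-trick used by $\sys G^{\rro\qccai2}$, and that the explicit randomness register in the ciphertext lets $\sys C$ route each decryption query to the correct branch. The key observation is that whenever $\tilde B$ coincides with the unknown RRO world bit (which happens with probability exactly $\tfrac12$), the emulation is information-theoretically identical to the corresponding $\RRC\QCCA2$ game, whereas in the mismatched case the distribution of $\sys D$'s view is independent of the RRO world, so $\tilde B$ remains perfectly hidden. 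Defining $\sys D\sys C$ to output $1$ precisely when $\sys D$'s bit equals $\tilde B$ then yields $\Adv{\rrc\qcca2,\ell}{\sch}{\sys D} \leq 2 \cdot \Adv{\rro\qcca2,\ell-1}{\sch}{\sys D\sys C}$, and the constant factor $2$ preserves negligibility.

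The main obstacle is the reverse direction: one must specify exactly how $\sys C$ uses the RRO oracles in each of the two $\tilde B$-cases, and verify, using the special form of $\Enc$ from \conref{con:restricted}, that the emulation is perfect in the matching case and truly $\tilde B$\=/independent in the mismatched case. Once this reduction is laid out explicitly, combining it with \autoref{lem:rrc-rro} gives asymptotic equivalence by closure of negligibility under multiplication by polynomials.
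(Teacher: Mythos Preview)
Your proposal is correct and follows exactly the paper's approach: the paper's proof is a single sentence invoking \autoref{lem:rrc-rro} and \autoref{rem:rro-rrc}, and your plan combines precisely these two ingredients. Your elaboration of the reverse direction goes beyond what is required here, since the paper treats the sketch in \autoref{rem:rro-rrc} as already establishing the needed inequality; the lemma merely records the asymptotic conclusion.
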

\begin{proof}
    This follows directly by \autoref{lem:rrc-rro} and \autoref{rem:rro-rrc}.
\end{proof}

Just as we casted $\RRC\QCCA2$ into the classical definition $\RRC\CCA2$, we can cast $\RRO\QCCA2$ into $\RRO\CCA2$.
Then it is possible to obtain analogous results as above for the classical notions (without restrictions on the (classical) encryption scheme).

\begin{corollary}
    $\RRC\CCA2$ and $\RRO\CCA2$ are asymptotically equivalent.
\end{corollary}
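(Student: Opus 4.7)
The plan is to mimic, in the classical setting, the two reductions already established for the quantum case, namely \autoref{lem:rrc-rro} and \autoref{rem:rro-rrc}. Concretely, I would first formalize the casting: $\RRC\CCA2$ and $\RRO\CCA2$ are the natural classical analogues of \autoref{def:rrc} and \autoref{def:rro}, obtained by dropping all quantum-specific apparatus (the auxiliary purification registers $\tilde M$, the EPR pair $\epr$, and the projective measurements $\{\Pi_+,\One-\Pi_+\}$), and replacing the ``comparison of random challenges'' by ordinary plaintext equality tests against a stored list.

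For the first direction, $\RRC\CCA2 \Rightarrow \RRO\CCA2$, I would carry out exactly the same $\ell$-step hybrid as in the proof of \autoref{lem:rrc-rro}. The reduction $\sys C_i$ picks an index $I \uar \{1,\dots,\ell\}$, answers the first $i-1$ encryption queries ``ideally'' (by encrypting a freshly sampled random message and recording the pair $(m, c)$ in $\cM$), forwards the $i$-th query to its own challenge oracle, and answers the remaining queries with its own real encryption oracle; decryption queries are answered by first calling the underlying decryption oracle and then looking up the returned plaintext in $\cM$ to decide whether to substitute the originally submitted message. A telescoping sum over the hybrids yields
\[
\Adv{\rro\cca2,\ell}{\sch}{\sys D} \;\leq\; \ell \cdot \Adv{\rrc\cca2,\ell-1}{\sch}{\sys D\sys C_I}.
\]
Crucially, \conref{con:restricted} is no longer needed: in the classical setting the reduction can freely read and copy both ciphertexts and plaintexts, so the random message sampled by the hybrid can be stored and compared against decryption outputs without any additional structural assumption on $\sch$.

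For the converse direction, $\RRO\CCA2 \Rightarrow \RRC\CCA2$, I would reuse the bit-flipping reduction sketched in \autoref{rem:rro-rrc}: the reduction $\sys C$ samples an internal bit $\tilde b$, emulates the $\RRC\CCA2$ challenge oracle using the $\RRO\CCA2$ encryption oracle if $\tilde b = 0$ and using a fresh random encryption otherwise, and forwards all other queries transparently. A short case analysis shows that this yields perfect emulation with probability $\tfrac12$ and is statistically independent of the $\RRO\CCA2$ bit otherwise, giving $\Adv{\rrc\cca2,\ell}{\sch}{\sys D} \leq 2 \cdot \Adv{\rro\cca2,\ell-1}{\sch}{\sys D\sys C}$. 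Combining the two inequalities and noting that the multiplicative factors $\ell$ and $2$ are polynomial, the two notions are asymptotically equivalent.

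The bulk of the work is therefore bookkeeping: writing down classical analogues of the two games and checking that each step of the hybrid goes through without invoking purifications or projective measurements. I expect no real obstacle — the only subtlety is convincing oneself that dropping \conref{con:restricted} is indeed harmless, which is immediate since the restriction was only there to ensure that the quantum hybrid could simulate the ideal encryption oracle without cloning the auxiliary register.
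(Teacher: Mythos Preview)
Your proposal is correct and follows essentially the same approach as the paper: the paper simply remarks that the corollary follows by carrying over the arguments of \autoref{lem:rrc-rro} and \autoref{rem:rro-rrc} to the classical setting, noting explicitly that \conref{con:restricted} is no longer needed. You have spelled out in more detail how the classical hybrid reduction and the bit-flipping reduction work, which is exactly the content the paper leaves implicit.
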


\subsubsection{$\RRO\QCCA2$ Implies $\QCNF$.}

We can now finally relate $\QCCA2$ game-based security definitions to the constructive cryptography notion of confidentiality, $\QCNF$.
We do that by showing that $\RRO\QCCA2$ security implies $\QCNF$, and therefore, by \autoref{lem:rrc-rro}, so does $\RRC\QCCA2$ (with concrete security loss factor $\ell$).

\begin{theorem}
    \label{thm:qcca2}
    Let $\sch\df(\Gen,\Enc,\Dec)$ be a \sqes\ (implicit in all defined systems).
    Then with protocol $\pi^\qenc_{AB}=(\pi^\qenc_A,\pi^\qenc_B)$ making use of quantum computers $\qc^\ell_A$ and $\qc^\ell_B$ (already defined in \autoref{fig:enc-dec} for \autoref{thm:qae}), simulator $\simul^\qcca2_E$ making use of quantum computer $\qc^\ell_E$ as defined in \autoref{fig:sim} (until the end), and (trivial) reduction system $\sys C$ as specified in the proof, for any distinguisher $\sys D$ we have
    \[\Delta^{\sys D}(\pi^\qenc_{AB}\,[\key,\iqc^\ell,\qc^\ell_A,\qc^\ell_B],\simul^\qcca2_E\,[\nmqc^\ell,\qc^\ell_E])\leq\Adv{\rro\qcca2,\ell}\sch{\sys D\sys C}.\]
\end{theorem}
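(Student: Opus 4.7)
The plan is to follow the blueprint of the proof of \autoref{thm:qae} essentially verbatim, since the only structural differences between the two statements are that the ideal resource $\sqc^\ell$ is replaced by $\nmqc^\ell$ and the game pair $(\sys G^{\qaer},\sys G^{\qaei})$ by $(\sys G^{\rro\qccar2},\sys G^{\rro\qccai2})$. I would set
\[\sys R \df \pi^\qenc_{AB}[\key,\iqc^\ell,\qc^\ell_A,\qc^\ell_B] \quad\text{and}\quad \sys S \df \simul^\qcca2_E[\nmqc^\ell,\qc^\ell_E],\]
and take $\sys C$ to be the same ``pass-through'' reduction used in \autoref{thm:qae}: on input a plaintext $\msg$ at $\iA$ it invokes $\mathbf{Enc}(\msg)$ and forwards the reply to $\iE$; on input a ciphertext $\ctx$ at $\iE$ it invokes $\mathbf{Dec}(\ctx)$ and forwards the reply to $\iB$.

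First I would verify $\sys C\sys G^{\rro\qccar2,\ell}\equiv\sys R$, which is immediate because the real $\RRO\QCCA2$ game exposes the genuine $\Enc_k,\Dec_k$ oracles under a freshly generated key, precisely what $\pi^\qenc_{AB}$ realises through $\qc^\ell_A,\qc^\ell_B$ and the shared $\key$ over $\iqc^\ell$. The substantive step is $\sys C\sys G^{\rro\qccai2,\ell}\equiv\sys S$. For each encryption query, the ideal game samples $\hat r$, prepares $\hat M\tilde M\gets\epr$, encrypts $\hat M$, and stores $(\hat r,\tilde M,M)$; this is reproduced in $\sys S$ by $\nmqc^\ell$ storing the input register as $\rA_i$ together with $\qc^\ell_E$ independently sampling the same $\hat r$, generating its own EPR half $\tilde M$, and storing $(\hat r,\tilde M,i)$. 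For each decryption query, the game and $\qc^\ell_E$ apply exactly the same projective measurements against the stored triples: when a triple matches, the game returns the register $M$ directly, whereas $\qc^\ell_E$ forwards the matching index $j$ to $\nmqc^\ell$, which releases $\rA_j$ at $\iB$; these two outputs are the same register. When no triple matches, the game's $\{P_{\aux_k},\One-P_{\aux_k}\}$ branch returns either $\hat\msg\reg{\hat M}$ or $\hat D_k(\cdot)$, and the simulator performs the identical measurement and injects the resulting state at $\iE$ of $\nmqc^\ell$, which then forwards it unchanged to $\iB$.

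The main subtlety lies in the second equivalence: the ideal game carries the ``original'' plaintext as a single quantum register $M$ in $\cM$, whereas the construction splits this information between the storage register $\rA_i$ inside $\nmqc^\ell$ and the classical index $i$ kept by $\qc^\ell_E$. Because the link between $(\hat r,\tilde M)$ and the corresponding plaintext register is established exactly once at encryption time, and is consumed at most once at the first matching decryption (with the same projective outcomes performed on the same joint state in both worlds), the joint distributions over the distinguisher's registers, the stored EPR halves, and the plaintext registers are identical in the two settings. Once the two equivalences are in hand, the calculation
\[\Adv{\rro\qcca2,\ell}\sch{\sys D\sys C} = \pr{}{\sys D[\sys C\sys G^{\rro\qccar2,\ell}]=1} - \pr{}{\sys D[\sys C\sys G^{\rro\qccai2,\ell}]=1} = \Delta^{\sys D}(\sys R,\sys S)\]
yields the stated bound, and combining it with \autoref{lem:rrc-rro} would in turn give a corresponding $\RRC\QCCA2$ version with an extra factor of $\ell$.
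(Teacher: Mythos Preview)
Your proposal is correct and follows essentially the same approach as the paper: the same pass-through reduction $\sys C$, the same two equivalences $\sys C\sys G^{\rro\qccar2,\ell}\equiv\sys R$ and $\sys C\sys G^{\rro\qccai2,\ell}\equiv\sys S$, and the same final chain of equalities. In fact you supply more justification for the second equivalence than the paper does (the paper simply asserts both equivalences as trivial), and your closing remark about combining with \autoref{lem:rrc-rro} anticipates exactly what the paper records as a corollary.
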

\begin{proof}
    Let $\sys R\df\pi^\qenc_{AB}\,[\key,\iqc^\ell,\qc^\ell_A,\qc^\ell_B]$ and $\sys S\df\simul^\qae_E\,[\nmqc^\ell,\qc^\ell_E]$.
    We need to provide a reduction system $\sys C$ so that distinguishing between the games $\sys G^{\rro\qccar2,\ell}$ and $\sys G^{\rro\qccai2,\ell}$ can be reduced to distinguishing between resources $\sys R$ and $\sys S$.
    Let $\sys D$ be a distinguisher for $\sys R$ and $\sys S$.
    Then we define an adversary $\sys D'=\sys D\sys C$ with access to real/ideal encryption oracle $\mathbf{Enc(\cdot)}$ and real/ideal decryption oracle $\mathbf{Dec(\cdot)}$ provided by $\sys G^{\rro\qccar2,\ell}$/$\sys G^{\rro\qccai2,\ell}$, where $\sys C$ behaves as follows towards $\sys D$:
    \begin{itemize}
        \item {\bf Interface $\iA$:} On input message $\msg$, output $\mathbf{Enc}(\msg)$ at interface $\iE$.
        \item {\bf Interface $\iE$:} On input state $\ctx$, output $\mathbf{Dec}(\ctx)$ at interface $\iB$.
    \end{itemize}
    Note that we trivially have that $\sys C\sys G^{\rro\qccar2,\ell}\equiv\sys R$ and $\sys C\sys G^{\rro\qccai2,\ell}\equiv\sys S$, hence
    \begin{align*}
    \Adv{\rro\qcca2,\ell}\sch{\sys D\sys C}
    &=\pr{}{\sys D\sys C[\sys G^{\rro\qccar2,\ell}]=1}-\pr{}{\sys D\sys C[\sys G^{\rro\qccai2,\ell}]=1}\\
    &=\pr{}{\sys D[\sys C\sys G^{\rro\qccar2,\ell}]=1}-\pr{}{\sys D[\sys C\sys G^{\rro\qccai2,\ell}]=1}\\
    &=\pr{}{\sys D[\sys R]=1}-\pr{}{\sys D[\sys S]=1}\\
    &=\Delta^{\sys D}(\sys R,\sys S).
    \end{align*}
    This concludes the proof.
\end{proof}

\begin{corollary}
    \label{cor:qcca2}
    With $\e(\sys D):=\sup_{\sys D'\in\mathcal B(\sys D)}\Adv{\rro\qcca2,\ell}\sch{\sys D'}$, we have
    \begin{equation*}
        \left[\key,\iqc^\ell,\qc^\ell_A,\qc^\ell_B\right]
        \xrightarrow{\pi^{\qenc}_{AB},\eps}
        \left[\nmqc^\ell,\qc^{\qcca2,\ell}_E\right],
    \end{equation*}
    where the class $\mathcal B(\sys D)$ is defined in \eqnref{eq:distinguisher.class}, \autoref{sec:ac.theory}.
\end{corollary}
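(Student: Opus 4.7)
The plan is to derive Corollary~\ref{cor:qcca2} directly from Theorem~\ref{thm:qcca2} together with Definition~\ref{def:security}. The candidate simulator is exactly the $\simul^{\qcca2}_E$ constructed in the theorem, acting on the ideal resource $[\nmqc^\ell, \qc^\ell_E]$; no new simulator needs to be built. Let $\sys R \df \pi^\qenc_{AB}[\key,\iqc^\ell,\qc^\ell_A,\qc^\ell_B]$ and $\sys S \df \simul^{\qcca2}_E[\nmqc^\ell,\qc^\ell_E]$.

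First I would unfold the pseudo-metric from \eqnref{eq:distinguisher.class}: by definition, $d^{\sys D}(\sys R,\sys S) = \sup_{\sys D' \in \cB(\sys D)} \Delta^{\sys D'}(\sys R,\sys S)$. For each $\sys D' \in \cB(\sys D)$, Theorem~\ref{thm:qcca2} applied with $\sys D'$ in place of $\sys D$ yields $\Delta^{\sys D'}(\sys R, \sys S) \leq \Adv{\rro\qcca2,\ell}{\sch}{\sys D' \sys C}$, where $\sys C$ is the fixed reduction converter explicitly described in the proof of that theorem.

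Next I would verify that the distinguisher $\sys D' \sys C$ appearing on the right-hand side still lies in $\cB(\sys D)$. Writing $\sys D' = \sys D \alpha$ for some converter $\alpha$ (or the symmetric case), post-composition gives $\sys D' \sys C = \sys D (\alpha \sys C)$, which by \eqnref{eq:distinguisher.class} again belongs to $\cB(\sys D)$. Consequently $\Adv{\rro\qcca2,\ell}{\sch}{\sys D' \sys C} \leq \sup_{\sys D'' \in \cB(\sys D)} \Adv{\rro\qcca2,\ell}{\sch}{\sys D''} = \eps(\sys D)$. Taking the supremum over $\sys D' \in \cB(\sys D)$ on the left then gives $d^{\sys D}(\sys R,\sys S) \leq \eps(\sys D)$, which by Definition~\ref{def:security} is precisely the construction statement $[\key,\iqc^\ell,\qc^\ell_A,\qc^\ell_B] \xrightarrow{\pi^{\qenc}_{AB},\eps} [\nmqc^\ell,\qc^{\qcca2,\ell}_E]$.

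There is no serious technical obstacle here; the corollary is essentially a repackaging of Theorem~\ref{thm:qcca2} into the $\eps$-parameterized construction notation of Section~\ref{sec:ac}, entirely analogous to how Corollary~\ref{cor:qae} is extracted from Theorem~\ref{thm:qae}. The only point requiring a small check is closure of the class $\cB(\sys D)$ under post-composition with the fixed reduction converter $\sys C$, which is immediate from the definition.
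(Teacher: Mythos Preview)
Your proposal is correct and matches the paper's (implicit) approach: the paper states Corollary~\ref{cor:qcca2} without proof, treating it as an immediate repackaging of Theorem~\ref{thm:qcca2} into the construction notation of Definition~\ref{def:security}, exactly parallel to Corollary~\ref{cor:qae}. Your unfolding of $d^{\sys D}$ via $\cB(\sys D)$ and the closure check for post-composition with $\sys C$ are precisely the details one would fill in; the symmetric case ($\sys D'\alpha=\sys D$) goes through once you note that the reduction $\sys C$ from the theorem's proof is an invertible rewiring (since $\sys C\sys G^{\rro\qccar2,\ell}\equiv\sys R$ and $\sys C\sys G^{\rro\qccai2,\ell}\equiv\sys S$), so $(\sys D'\sys C)(\sys C^{-1}\alpha)=\sys D$ and hence $\sys D'\sys C\in\cB(\sys D)$.
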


Using \autoref{lem:rrc-rro}, we finally obtain the following corollary.

\begin{corollary}
    \label{cor:qcca2rrc}
    With $\e(\sys D):=\sup_{\sys D'\in\mathcal B(\sys D)}\Adv{\rrc\qcca2,\ell}\sch{\sys D'}$, we have
    \begin{equation*}
        \left[\key,\iqc^\ell,\qc^\ell_A,\qc^\ell_B\right]
        \xrightarrow{\pi^{\qenc}_{AB},(\ell+1)\cdot\eps}
        \left[\nmqc^\ell,\qc^{\qcca2,\ell}_E\right],
    \end{equation*}
    where the class $\mathcal B(\sys D)$ is defined in \eqnref{eq:distinguisher.class}, \autoref{sec:ac.theory}.
\end{corollary}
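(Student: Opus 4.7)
The plan is to combine \autoref{cor:qcca2} with \autoref{lem:rrc-rro}: the former establishes the construction with error controlled by the $\RRO\QCCA2$ advantage, while the latter reduces $\RRO\QCCA2$ to $\RRC\QCCA2$ at the cost of a factor~$\ell$. Everything needed is already proved; the claim reduces to a short chaining argument.

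First I would invoke \autoref{cor:qcca2} to obtain, for every distinguisher $\sys D$,
\[\Delta^{\sys D}\bigl(\pi^\qenc_{AB}[\key,\iqc^\ell,\qc^\ell_A,\qc^\ell_B],\simul^{\qcca2}_E[\nmqc^\ell,\qc^{\qcca2,\ell}_E]\bigr)\leq\sup_{\sys D'\in\cB(\sys D)}\Adv{\rro\qcca2,\ell}\sch{\sys D'}.\]
Then, for each $\sys D'\in\cB(\sys D)$, \autoref{lem:rrc-rro} supplies the concrete reduction converter $\sys C_I$ together with the inequality
\[\Adv{\rro\qcca2,\ell}\sch{\sys D'}\;\leq\;\ell\cdot\Adv{\rrc\qcca2,\ell-1}\sch{\sys D'\sys C_I}.\]

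The only remaining step is a small closure observation about the class $\cB(\sys D)$ defined in \eqnref{eq:distinguisher.class}: right-composing any $\sys D'\in\cB(\sys D)$ with the fixed converter $\sys C_I$ still yields a distinguisher in $\cB(\sys D)$, because if $\sys D'=\sys D\alpha$ then $\sys D'\sys C_I=\sys D(\alpha\sys C_I)$. Combined with the trivial monotonicity $\Adv{\rrc\qcca2,\ell-1}\sch{\cdot}\leq\Adv{\rrc\qcca2,\ell}\sch{\cdot}$, this gives
\[\sup_{\sys D'\in\cB(\sys D)}\Adv{\rro\qcca2,\ell}\sch{\sys D'}\;\leq\;\ell\cdot\sup_{\sys D''\in\cB(\sys D)}\Adv{\rrc\qcca2,\ell}\sch{\sys D''}\;=\;\ell\cdot\e(\sys D)\;\leq\;(\ell+1)\cdot\e(\sys D),\]
which matches the stated construction bound; the extra $+1$ is a harmless slack, useful only for a uniform statement.

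Since the proof is essentially a composition of two previously established results, I do not anticipate any real obstacle. The only point deserving an explicit line of justification is the closure property of $\cB(\sys D)$ under right-composition with a fixed converter, which is immediate from the definition in \eqnref{eq:distinguisher.class}.
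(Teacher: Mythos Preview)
Your approach—chaining \autoref{cor:qcca2} with \autoref{lem:rrc-rro}—is exactly what the paper intends; the paper itself offers no proof beyond the sentence ``Using \autoref{lem:rrc-rro}, we finally obtain the following corollary.'' Your observation that the hybrid argument already yields a factor $\ell$, so that the stated $(\ell+1)$ is slack, is also correct.

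There is, however, a wrinkle in your closure step that you should not call ``immediate.'' The reduction $\sys C_I$ from \autoref{lem:rrc-rro} is \emph{not} a converter in the sense used to define $\cB(\sys D)$: per \autoref{rem:computation} and the sentence immediately preceding \eqnref{eq:distinguisher.class}, the $\alpha$'s in that definition are memory-less, computation-less forwarders, whereas $\sys C_I$ maintains an internal set $\cM$, samples the index $I$, prepares EPR pairs, and performs projective measurements. Consequently $\alpha\sys C_I$ is not a converter of the required kind, and the inclusion $\sys D'\sys C_I\in\cB(\sys D)$ does not follow from the definition. (You also only treat the branch $\sys D'=\sys D\alpha$ of \eqnref{eq:distinguisher.class}; the branch $\sys D=\sys D'\alpha$ would need its own argument.) The paper does not address this point either, so this is best read as a mild informality in how the corollary is phrased rather than a defect in your overall strategy; but you should flag it rather than assert closure is immediate.
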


\subsubsection{$\RRO\QCCA2$ is Stronger than $\QCNF$.}

We remark that even though the security notion $\RRO\QCCA2$ implies $\QCNF$, the converse is not true for the same reason outlined above for $\QAE$ and $\QSEC$: it is possible to show that there are \sqes s that satisfy $\QCNF$ but not $\RRO\QCCA2$ by applying the same principle of extending a $\RRO\QCCA2$ secure scheme into one which is not anymore $\RRO\QCCA2$, but still satisfies $\QCNF$.

    \ifsub
    \else
    \section*{Acknowledgments}
\pdfbookmark[1]{Acknowledgments}{sec:ack}

CP acknowledges support from the Zurich Information Security
and Privacy Center.

    \fi
    \appendix
    \section{Notation}
\label{app:notation}

\paragraph{Pauli Operators.}
We write $P_k$ or $P_{x,z}$ to denote a Pauli operator on $m$ qubits, where $k =(x,z)$ are concatenation of two $m$-bits strings indicating in which qubit bit flips and phase flips occur. 
$$P_k = P_{x,z} =  \bigotimes_{i=1}^m P_{x_iz_i},$$
where 
$$P_{ab}  = 
\begin{cases}
               I &  a=0, b=0,\\
              X&  a=1, b=0,\\
              Z&  a=0, b=1,\\
              XZ&  a=1, b=1.\\
            \end{cases}
$$
Note that $P_k = P_k^{\dagger}$, therefore we simply write $P_k\rho P_k$ when applying a Pauli-operator $P_k$ on state $\rho$. To undo Pauli-operator $P_k$, we simply apply $P_k$ again, namely, $P_kP_k\rho P_kP_k = \rho$. 
\paragraph{Bell Basis.}
We write $\ket{\phi_0}$ as the maximum entangled state of $2m$ qubits.
$$\ket{\phi_0} = \left(\frac{\ket{00} + \ket{11}}{\sqrt{2}}\right)^{\otimes m}$$
and $\ket{\phi_k}$ as applying Pauli operator $I^{\otimes m} \otimes P_k$ on $\ket{\phi_0}$ 
$$\ket{\phi_k} = I^{\otimes m} \otimes P_k \ket{\phi_0}.$$
Then $\{\ket{\phi_k}\}_{k \in \{0,1\}^{2m}}$ forms the Bell basis for $2m$ qubits. 
\paragraph{Symplectic Inner Product.}
Symplectic inner product of two string $k=(x,z)$ and $l=(x',z')$ of
length $2m$ is given by
\[\spl(k,l)=\sum_{i=1}^m x_iz_i' - x_i'z_i.\] 



Two Pauli operators $P_k$ and $P_l$ with $k=(x,z)$ and $l=(x',z')$ commute (anti-commute) if  $\spl(k,l)$
equals to 0 (equals to 1) module 2. Hence for any $P_k$ and $P_l$,
\[P_kP_l = (-1)^{\spl(k,l)}P_lP_k.\]

For any strings $k$ and $l$ we have,
$$\sum_{k \in \{0,1\}^m} (-1)^{\spl(k,l)}  = 
\begin{cases}
               2^n & l=0,\\
               0 & l \neq 0,\\
            \end{cases}
$$
where $l=0$ means all bits of the string $l$ are $0$.



\section{Composition}
\label{app:composition}

In this section we prove that \eqnsref{eq:ac.composition1} and
\eqref{eq:ac.composition2} hold for the construction notion from \autoref{def:security}.

\begin{theorem}[Composition theorem]\label{thm:composition}
Suppose that $\pi_1$ constructs $\sys S$ from $\sys R$ within $\eps_1$
and $\pi_2$ constructs $\sys T$ from $\sys S$ within $\eps_2$, i.e.,
\[ \sys R \xrightarrow{\pi_1,\eps_1} \sys S \qquad \text{and} \qquad
  \sys S \xrightarrow{\pi_2,\eps_2} \sys T.\]
Then it holds that
\[ \sys R \xrightarrow{\pi_2\pi_1,\eps_1+\eps_2} \sys T.\]
Furthermore, for any resource $\sys U$, we have
\[ \left[\sys R, \sys U\right] \xrightarrow{\pi_1,\eps'} \left[\sys S,
  \sys U \right],\] with
\[\eps'(\sys D) \coloneqq \eps_1\left(\sys D\left[ \cdot ,\sys
      U\right]\right).\]
\end{theorem}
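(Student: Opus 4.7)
My plan is to treat both parts of the theorem by standard hybrid arguments built on the pseudo\-/metric $d^{\sys D}(\cdot,\cdot)$ and on the fact that converters acting at disjoint interfaces commute.

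For the sequential composition, let $\simul^1_E,\simul^2_E$ be the simulators witnessing the two assumed constructions and propose $\simul_E \df \simul^1_E\simul^2_E$ as the simulator for $\pi_2\pi_1$. I would introduce the hybrid resource $\pi_{2,AB}\simul^1_E \sys S$ between the real world $\pi_{2,AB}\pi_{1,AB}\sys R$ and the ideal world $\simul^1_E\simul^2_E \sys T$; the triangle inequality for $\Delta^{\sys D}$ then produces two differences to bound. In the first, $\pi_{2,AB}$ appears as the outermost converter on both sides, and the identity $\Delta^{\sys D\alpha}(X,Y) = \Delta^{\sys D}(\alpha X,\alpha Y)$ lets me absorb it into the distinguisher, reducing the bound to an instance of the first assumed construction and yielding $\eps_1(\sys D\pi_{2,AB})$. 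For the second difference I would first use that $\pi_{2,AB}$ and $\simul^1_E$ act on disjoint interfaces and hence commute (as stated in \autoref{sec:ac.theory}), rewriting $\pi_{2,AB}\simul^1_E \sys S$ as $\simul^1_E\pi_{2,AB}\sys S$; absorbing $\simul^1_E$ into the distinguisher then yields $\eps_2(\sys D\simul^1_E)$. Both effective distinguishers lie in $\cB(\sys D)$ by the definition in \eqnref{eq:distinguisher.class}, so the bound $\eps_1(\sys D\pi_{2,AB}) + \eps_2(\sys D\simul^1_E)$ is precisely what is meant by ``within $\eps_1 + \eps_2$''.

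For the context\-/extension part, I would keep $\simul^1_E$ as the simulator, trivially extended to forward messages on whatever $\sys U$\-/ports appear at interface $\iE$. The key observation is that any distinguisher $\sys D$ interacting with either $[\pi_{AB}\sys R,\sys U]$ or $[\simul^1_E \sys S,\sys U]$ can be ``regrouped'': pulling $\sys U$ inside the distinguisher produces exactly the effective distinguisher $\sys D[\cdot,\sys U]$. Then
\[ d^{\sys D}\left(\left[\pi_{AB}\sys R,\sys U\right],\left[\simul^1_E \sys S,\sys U\right]\right) = d^{\sys D[\cdot,\sys U]}\left(\pi_{AB}\sys R,\simul^1_E \sys S\right) \leq \eps_1\left(\sys D[\cdot,\sys U]\right), \]
which is exactly $\eps'(\sys D)$ as claimed.

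The hard part will be justifying the ``absorb\-/into\-/distinguisher'' and ``commute\-/across\-/disjoint\-/interfaces'' manipulations rigorously in the finite\-/security setting. Everything ultimately reduces to the identity $\Delta^{\sys D\alpha}(X,Y) = \Delta^{\sys D}(\alpha X,\alpha Y)$ together with closure of the class $\cB(\sys D)$ under composition with converters acting on interfaces disjoint from the one being measured, both of which are already built into the framework of \autoref{sec:ac.theory}. Once these book\-/keeping facts are spelled out, both composition statements follow by mechanical manipulation.
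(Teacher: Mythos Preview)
Your proposal is correct and matches the paper's own argument essentially line for line: the same composed simulator $\simul^1_E\simul^2_E$, the same hybrid $\pi_{2,AB}\simul^1_E\sys S$, the triangle inequality, and then absorbing the common converter on each side into the distinguisher; the context\-/extension part is likewise identical. The only stylistic difference is that the paper exploits the invariance $d^{\sys D}(\alpha X,\alpha Y)=d^{\sys D}(X,Y)$ built into the pseudo\-/metric to write $\eps_1(\sys D)+\eps_2(\sys D)$ directly, whereas you track the absorbed converters explicitly and then invoke $\sys D\pi_{2,AB},\sys D\simul^1_E\in\cB(\sys D)$ \--- these are two phrasings of the same step.
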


\begin{proof}
  Let $\simul$ and $\simul'$ be the simulators for which for all $\sys D$,
  $d^{\sys D}(\pi_1 \sys R,\sys S \simul) \leq \eps(\sys D)$ and
  $d^{\sys D}(\pi_2 \sys S,\sys T \simul') \leq \eps(\sys D)$. Then from
  the triangle inequality we have
  \begin{align*}
    d^{\sys D}(\pi_2 \pi_1 \sys R, T \simul' \simul)
       & \leq d^{\sys D}(\pi_2 \pi_1 \sys R, \pi_2 \sys S \simul) + d^{\sys D}(\pi_2 S \simul, T \simul' \simul)\\
       & = d^{\sys D}(\pi_1 \sys R,\sys S \simul) + d^{\sys D}(\pi_2 S, T \simul')\\
       & \leq \eps_1(\sys D) + \eps_2(\sys D).
  \end{align*}
  Furthermore
  \begin{align*}
    d^{\sys D}\left( \pi_1 \left[\sys R,\sys U\right], \left[\sys S,\sys U \right] \simul\right)
    & = d^{\sys D}\left( \left[\pi_1 \sys R,\sys U\right], \left[\sys S \simul,\sys U \right]\right)\\
    & \leq d^{\sys D \left[ \cdot , \sys U\right]}\left(\pi_1 \sys R,
      \sys S \simul\right) \leq \eps_1\left(\sys D\left[ \cdot ,\sys
      U\right]\right). \qedhere      
  \end{align*}
\end{proof}

\section{Quantum One-Time Pad}
\label{app:example.otp}

Here we show that the quantum one-time pad constructs a one-time Pauli\-/malleable 
confidential quantum channel $\overline{\xqc}^{1,m}$ from a
one-time insecure quantum channel $\overline{\iqc}^{1,m}$ and a key
resource $\key^{2m}$. Note that $\overline{\xqc}$ and
$\overline{\iqc}$ are a slightly modified versions of $\xqc$ and
$\iqc$. Which work as follows: after getting an input at interface
$\iE$, the channels reject any input at interface $\iA$. A weaker
statement has appeared in \cite{DFPR14}, where it was proven that the
one-time pad constructs a confidential channel \--- i.e., one which
allows the adversary to make arbitrary changes, but only leaks the
message size \--- which is strictly weaker than the resource
$\overline{\xqc}^{1,m}$ proven to be constructed here.

The protocol is described in
\autoref{fig:ic.xqc.one.protocol}. The construction also gives Eve
certain computation power. We present the simulator $\simul^\qconf$
and the quantum computer of Eve $\qc^\qconf$ in
\autoref{fig:ic.xqc.simulator} such that the ideal channel connecting
with the simulator and Eve's quantum computer can be proved equivalent
to the system of Alice's and Bob's converters, quantum computers
connecting with the insecure channel and the key resource.


\begin{figure}[tb]
    \centering
    \begin{convbox}{\textwidth}{$\pi_A \quad \pi_B \quad \qc_A \quad \qc_B$}
		\begin{enumerate}
			\item $\pi_A, \pi_B$: Alice and Bob obtain uniform key $k$ from key resource $\key$.
			\item $\pi_A$: Alice inputs message $\state^{\rA}$ and requests her computer to encrypt $\state^{\rA}$ with $P_k$.
			\item $\qc_A$: On input $\state^{A}$, the computer apply Pauli operator $P_k$ on the state $\state^{A}$ and outputs $\sigma =P_k \state^{A} P_k $ to Alice.
			\item $\pi_A$: Alice sends ciphertext $\sigma$ to Bob through insecure quantum channel $\overline{\iqc}$.
			\item $\pi_B$: Bob receives ciphertext $\tilde{\sigma}$ and requests his computer to decrypt $\tilde{\sigma}$ with $P_k$.
			\item $\qc_B$: On input $\tilde{\sigma}$, the computer apply Pauli operator $P_k$ on the state $\tilde{\sigma}$.  Then the computer outputs $\tilde{\state} = P_k \tilde{\sigma} P_k $ to Bob.
			\item $\pi_B$: Bob outputs the decrypted message $\tilde{\rho}^{\rB}$.
		\end{enumerate}
  
    \end{convbox}
    \caption{Converters and computer resources to construct $\overline{\xqc}^{1,m}$ from $\overline{\iqc}^{1,m}$. The quantum computer $\qc^{1, m, m}_A$ and $\qc^{1, m, m}_B$  will be requested 1 time. The plaintext and ciphertext both have length $m$. $\key^{2m}$ gives a shared key of length $2m$.}\label{fig:ic.xqc.one.protocol}
    
	\begin{convbox}{\textwidth}{$\simul^\qconf \quad \qc^\qconf$}
		\begin{enumerate}
		\item $\simul^\qconf$: On input $\lNewMsg$ from the ideal channel $\overline{\xqc}^{1,m}$, it requests $\qc^\qconf$ to provide a quantum state and outputs the state at interface $\iE$.  
		\item $\simul^\qconf$: On input $\tilde{\sigma}$ at interface $\iE$, it requests $\qc^\qconf$ to provide a key $k$ of length $2m$ (or $\bot$).  The key (or $\bot$) is then forwarded to ideal channel $\overline{\xqc}^{1,m}$.
		\item $\qc^\qconf$: On input $\lNewMsg$ from $\simul^\qconf$, the computer prepares a maximum entangled state $\ketbra{\phi_0}{\phi_0}^{MC}$ in register $\rM$ and $\rC$. The computer  stores $\state^M$ in register M in its internal memory and outputs the state $\state^C$ in register C to $\simul_E$.
		\item $\qc^\qconf$ : On input $\tilde{\rho}^C$ from interface $\iE$, if $\state^M$ is stored internally, the computer measures $\tilde{\state}^{MC}$ in the bell basis and gets measurement result $\tilde{k}$. The computer outputs $\tilde{k}$ to $\simul^\qconf$. If no $\state^M$ is stored, then no $\lNewMsg$ has arrived yet, the computer outputs $\bot$ to $\simul^\qconf$.
		\end{enumerate}
    \end{convbox}
    \caption{The simulator $\simul^\qconf$ and computer resource $\qc^\qconf$ connecting to the ideal resource $\overline{\xqc}^{1,m}$. $\qc^\qconf$ is capable of doing 1 encryption and 1 decryption.}
    \label{fig:ic.xqc.simulator}
\end{figure}

\begin{lemma} \label{lem:ic.xqc.one}
Let $\pi_{AB} = (\pi_A, \pi_B), \qc^{1, m,m}_A, \qc^{1, m, m}_B, \simul^\qconf, \qc^\qconf$ denote the converters, simulator and computing resources described in \autoref{fig:ic.xqc.one.protocol} and \autoref{fig:ic.xqc.simulator}. Then for any distinguisher $\sys D$, we have
\begin{equation*}
d^{\sys D}\left(\pi_{AB}\left[\overline{\iqc}^{1,m},\key^{2m},	 \qc^{1,m,m}_A, \qc^{1,m,m}_B\right], \simul^\qconf\left[\overline{\xqc}^{1,m}, \qc^\qconf\right]\right) =0.
\end{equation*}
\end{lemma}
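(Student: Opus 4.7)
The plan is to show that the joint quantum state seen by the distinguisher at the end of the interaction is identical in the real and ideal worlds, so the distinguishing advantage is exactly zero. First I would reduce to a canonical form: WLOG the distinguisher initially prepares a pure state $\ket{\psi}^{AR}$ (with $A$ fed into Alice's interface and $R$ kept) and applies a Stinespring isometry $V\colon CR\to\tilde CF$ at Eve's interface, returning $\tilde C$ to the channel and keeping $F$. It then suffices to show that the two CPTP maps $\state^A\mapsto\state^{BF}$ coincide; equivalently, that their Choi--Jamiolkowski states, obtained by feeding half of $\ket{\phi_0}^{A'A}$ through, are equal.

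For the real system, encryption sends $\ket{\phi_0}^{A'A}$ to $\ket{\phi_k}^{A'C}$, Eve applies $V$, and Bob applies $P_k$ on $\tilde C\to B$. Using the ``ricochet'' identity $(P_k\otimes I)\ket{\phi_0}=(I\otimes P_k^T)\ket{\phi_0}$ together with $P_k^T=\pm P_k$ (the Paulis of \aref{app:notation} have real entries), the resulting pure state equals, up to a global sign, $(P_k^{A'}\otimes P_k^B\otimes I^F)\ket{J_V}^{A'BF}$, where $\ket{J_V}^{A'BF}\df(I\otimes V)\ket{\phi_0}^{A'C}$ (with $\tilde C\equiv B$) is the pure Choi state of $V$. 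Averaging over uniform $k\in\{0,1\}^{2m}$ yields
\[J_{\mathrm{real}}=\frac{1}{d^2}\sum_k(P_k^{A'}\otimes P_k^B)\ketbra{J_V}{J_V}^{A'BF}(P_k^{A'}\otimes P_k^B)^\dagger,\qquad d=2^m.\]
For the ideal system, the initial preparation gives $\ket{\phi_0}^{A'A}\otimes\ket{\phi_0}^{MC}$; applying $V$ to $C$ turns this into $\ket{\phi_0}^{A'A}\otimes\ket{J_V}^{M\tilde CF}$; Bell-measuring $M\tilde C$ with outcome $\tilde k$ collapses the second factor onto $\ket{\mu_{\tilde k}}^F\df\bra{\phi_{\tilde k}}^{M\tilde C}\ket{J_V}^{M\tilde CF}$; and the correction $P_{\tilde k}^{A\to B}$ turns $\ket{\phi_0}^{A'A}$ into $\ket{\phi_{\tilde k}}^{A'B}$. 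Summing over $\tilde k$ gives
\[J_{\mathrm{ideal}}=\sum_{\tilde k}\ketbra{\phi_{\tilde k}}{\phi_{\tilde k}}^{A'B}\otimes\ketbra{\mu_{\tilde k}}{\mu_{\tilde k}}^F.\]

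To conclude $J_{\mathrm{real}}=J_{\mathrm{ideal}}$ I would invoke the Pauli twirl in the Bell basis: from $P_kP_l=(-1)^{\spl(k,l)}P_lP_k$ and $P_k^2=\pm I$ one gets $(P_k\otimes P_k)\ket{\phi_l}=(-1)^{\spl(k,l)}\ket{\phi_l}$, and combined with the orthogonality $\frac{1}{d^2}\sum_k(-1)^{\spl(k,l\oplus l')}=\delta_{ll'}$ (the functional $\spl(\cdot,l\oplus l')$ is a nontrivial linear form when $l\ne l'$), this yields
\[\frac{1}{d^2}\sum_k(P_k\otimes P_k)\state(P_k\otimes P_k)^\dagger=\sum_l\bra{\phi_l}\state\ket{\phi_l}\cdot\ketbra{\phi_l}{\phi_l}\]
for every bipartite state $\state$. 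Applied to $\ketbra{J_V}{J_V}^{A'BF}$ (acting trivially on $F$) and using the Bell-basis expansion $\ket{J_V}^{A'BF}=\sum_l\ket{\phi_l}^{A'B}\otimes\ket{\mu_l}^F$, this produces exactly $J_{\mathrm{ideal}}$. The main bookkeeping obstacle is tracking the register renamings $A\to C\to\tilde C\to B$ and the non-Hermiticity of the Paulis $P_{x,z}$, but the scalar phases arising from $P_k^T=\pm P_k$ and $P_kP_l=\pm P_lP_k$ all square or cancel under the twirl, leaving only the symplectic sign that drives the final orthogonality.
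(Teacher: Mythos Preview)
Your argument is correct and takes a genuinely different route from the paper. The paper works directly with the distinguisher's input $\rho^{AE}$ and attack map $\Lambda$: it expands each Kraus operator of $\Lambda$ in the Pauli basis on the ciphertext register, $\Gamma_i=\sum_p P_p\otimes\Gamma_{i,p}$, then computes $\Phi_{\mathrm{real}}$ and $\Phi_{\mathrm{ideal}}$ term by term and collapses the real-world sum via the single-register identity $\sum_{k} P_kP_pP_k(\cdot)P_kP_qP_k=2^{2m}\,\delta_{p,q}\,P_p(\cdot)P_p$. You instead pass to Choi states and use the ricochet identity to transfer Alice's Pauli from $C$ to the reference $A'$, reducing the comparison to the two-register Bell-basis twirl $(1/d^2)\sum_k(P_k\otimes P_k)\rho(P_k\otimes P_k)^\dagger=\sum_l\proj{\phi_l}\rho\proj{\phi_l}$. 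This is essentially the teleportation picture of the quantum one-time pad; it is cleaner conceptually and avoids the Kraus-level bookkeeping, while the paper's computation is more pedestrian but needs no Choi/ricochet machinery. Both proofs ultimately rest on the same symplectic orthogonality $\sum_k(-1)^{\spl(k,l)}=d^2\delta_{l,0}$.

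Two small points. First, you only treat the ordering in which the distinguisher feeds $A$ before $E$. The paper separately handles the case where Eve inputs at $E$ first (so $\qc^\qconf$ returns $\bot$ and $\overline{\xqc}^{1,m}$ outputs the fully mixed state), which is just the standard fact that a uniformly random Pauli on Bob's side sends any input to $\tfrac{1}{2^m}I\otimes\rho^E$; you should add a sentence covering this. Second, your opening sets up $V\colon CR\to\tilde CF$ with side information $R$ entangled with $A$, but the Choi computation you actually carry out has $V$ acting on $C$ alone and leaves $A'$ untouched. This is harmless---proving equality of the $A'$-referenced Choi states for every $V\colon C\to\tilde CF$ already pins down the comb, since the universal tester (independent EPR halves at $A$ and at $E$) is of exactly this form---but it would be tidier either to drop $R$ from the setup or to carry an extra reference $R'$ through as a spectator of the $A'B$ twirl.
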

\begin{proof}
    We first consider the simpler case that distinguisher only inputs at interface $\iE$. In both worlds, the distinguisher prepares an arbitrary state $\rho^{AE}$ and inputs $\rho^A$ at interface $\iE$.  In the real world, the protocal applies a Pauli operator $P_k$ on the state and then output at interface $\iB$.  The distinguisher has state 
\[ \Phi_{real} = \frac{1}{2^{2m}} \sum_{k \in \{0,1\}^{2m}} P^A_k \rho^{AE} P^A_k.\]
In the ideal world, since no $\lNewMsg$ has arrived yet, the simulator output $\bot$ to channel $\overline{\xqc}^{1,m}$ and the channel outputs a fully mixed state $\frac{1}{2^m}I$ at interface B.  The distinguisher has state
\[\Phi_{ideal} = \frac{1}{2^m}I \otimes \rho^E\]
It is easy to see that for any mixed state $\rho^{AE}$,
\[\frac{1}{2^{2m}} \sum_{k \in \{0,1\}^{2m}} P^A_k \rho^{AE} P^A_k = \frac{1}{2^m}I \otimes \rho^E\] 
and therefore we have
\[\Phi_{real} = \Phi_{ideal}.\]

    Now we consider the more complicated case that distinguisher first input at interface $\iA$ and then at interface $\iE$. In both worlds, the distinguisher prepares an arbitrary state $\rho^{AE}$ and a CPTP map $\Lambda$. In the real world, the distinguisher inputs $\state^A$ at interface $\iA$. After getting the flipped (Alice applying a Pauli operator $P_k$) state $P_k\state^{A} P_k$ at interface $\iE$ , the distinguisher applies the CPTP map $\Lambda$ on register $\rA$ and $\rE$. Then the distinguisher inputs the result state $\tilde{\rho}^A$in register $\rA$ into the insecure channel $\overline{\iqc}^{1,m}$.  Then the distinguisher gets the flipped (Bob applying a Pauli operator $P_k$) state output at interface $\iB$. The distinguisher holds state $\Phi_{real}$
    $$\Phi_{real} = \frac{1}{2^{2m}}\sum_{k \in \{0,1\}^{2m}}P_{k}^{A}\Lambda^{AE}(P_k^A\state^{AE} P_k^{A}) P_{k}^A. $$
    
    In the ideal world, the distinguisher inputs $\state^A$ at interface $\iA$. The simulator prepares a maximum entangled state $\ketbra{\phi_0}{\phi_0}^{MC}$ in register $\rM$ and $\rC$ and outputs the state $\sigma^C$ in register $\rC$ at interface $\rE$. The distinguisher applies the CPTP map $\Lambda$ on register $\rC$ and $\rE$. Then the distinguisher inputs the result state $\tilde{\sigma}^C$ in register $\rC$ into the channel. The simulator measures the state in register $M$ and $C$ on bell basis, get the measurement result $k$ and the channel applies Pauli operator $P_k$ on the state in register $A$. After getting the modified state output from interface $\iB$, the distinguisher holds state $\Phi_{ideal}$.
    $$\tilde{\sigma}^{AMCE} = \Lambda^{CE}(\proj{\phi_0}^{MC} \otimes \state^{AE})$$
    $$\Phi_{ideal} = \Tr_{MC}\sum_{k \in \{0,1\}^{2m}} (P_k^{A} \otimes \proj{\phi_k}^{MC})\tilde{\sigma}^{AMCE} (P_k^{A} \otimes \proj{\phi_k}^{MC}).$$
    
    We prove that for any $\state^{AE}$ and CPTP map $\Lambda^{XE}$, where register $\rX$ is $\rA$ in the real world and $\rC$ in the ideal world,  we have $\Phi_{real} = \Phi_{ideal}$.
    
    For any CPTP map $\Lambda^{XE}$ applying on state $\rho^{XE}$ on register $\rX$ and $\rE$, we can decompose the map as 
    \begin{align} \label{eq:lambda decomposition}
    \Lambda^{XE}(\rho^{XE}) &= \sum_i \Gamma_i^{XE}\rho^{XE}\Gamma_i^{XE\dagger} \nonumber \\ 
    &= \sum_i (\sum_{p}P^X_{p} \otimes \Gamma^E_{i,p}) \rho^{XE} (\sum_{q}P^X_{q} \otimes \Gamma^{E\dagger}_{i,q}) \nonumber \\
    &= \sum_{i,p, q} (P^X_{p} \otimes \Gamma^E_{i,p}) \rho^{XE} (P^X_{q} \otimes \Gamma^{E\dagger}_{i,q})
    \end{align} 
    
    Plugging \eqnref{eq:lambda decomposition} into $\tilde{\sigma}^{AMCE}$, we get
    \begin{align*}
    \tilde{\sigma}^{AMCE} &= \Lambda^{CE}(\proj{\phi_0}^{MC} \otimes \state^{AE} )\\
    &= \sum_{i,p, q} (P^C_{p} \otimes \Gamma^E_{i,p}) (\proj{\phi_0}^{MC} \otimes \state^{AE})  (P^C_{q} \otimes \Gamma^{E\dagger}_{i,q})\\
    &= \sum_{i,p, q} P^C_{p} \proj{\phi_0}^{MC} P^C_{q} \otimes( \Gamma^E_{i,p}  \state^{AE} \Gamma^{E\dagger}_{i,q})\\
    &= \sum_{i,p, q} \ketbra{\phi_{p}}{\phi_{q}}^{MC} \otimes( \Gamma^E_{i,p}  \state^{AE} \Gamma^{E\dagger}_{i,q}),
    \end{align*}
where $\ket{\phi_p}^{MC} = P^C_{p} \ket{\phi_0}^{MC}$, see \aref{app:notation}.

    Now we can express $\Phi_{ideal}$ as 
    \begin{align*}
    \Phi_{ideal} & = \Tr_{MC}\sum_{k,i,p,q} (P_k^{A} \otimes \proj{\phi_k}^{MC})\tilde{\sigma}^{AMCE} (P_k^{A} \otimes \proj{\phi_k}^{MC})  \\
    &= \Tr_{MC}\sum_{k,i,p,q} (P_k^{A} \otimes \proj{\phi_k}^{MC})	(( \Gamma^E_{i,p}  \state^{AE} \Gamma^{E\dagger}_{i,q}) \otimes \ketbra{\phi_{p}}{\phi_{q}}^{MC}) (P_k^{A} \otimes \proj{\phi_k}^{MC}) \\
    &= \Tr_{MC}\sum_{k,i,p,q} (P_k^{A} \otimes \Gamma^E_{i,p} \rho^{AE} P_k^{A} \otimes \Gamma^{E\dagger}_{i,q})  \otimes (\proj{\phi_k}^{MC}  \ketbra{\phi_{p}}{\phi_{q}}^{MC}  \proj{\phi_k}^{MC}) \\
    &= \Tr_{MC}\sum_{k,i} (P_k^{A} \otimes \Gamma^E_{i,k} \rho^{AE} P_k^{A} \otimes \Gamma^{E\dagger}_{i,k})  \otimes \proj{\phi_k}^{MC} \\
    &=\sum_{k,i} P_k^{A} \otimes \Gamma^E_{i,k} \rho^{AE} P_k^{A} \otimes \Gamma^{E\dagger}_{i,k}
    \end{align*}
    
    Plugging \eqnref{eq:lambda decomposition} into $\Phi_{real}$, we get
    \begin{align*}
    \Phi_{real} &= \frac{1}{2^{2m}}\sum_{k}P_{k}^{A}\Lambda^{AE}(P_k^A\state^{AE} P_k^{A}) P_{k}^A  \\
    &= \frac{1}{2^{2m}}\sum_{k}P_{k}^{A}(\sum_{i,p, q} (P^A_{p} \otimes \Gamma^E_{i,p}) P_k^A\state^{AE} P_k^{A} (P^A_{q} \otimes \Gamma^{E\dagger}_{i,q})) P_{k}^A  \\
    &= \frac{1}{2^{2m}}\sum_{k,i,p,q}(P_{k}^{A}P^A_{p}P^A_{k} \otimes \Gamma^E_{i,p}) \state^{AE} (P_k^{A} P^A_{q}P^A_{k} \otimes \Gamma^{E\dagger}_{i,q}).
    \end{align*}
    
    Note that $P_kP_p = (-1)^{\spl(k,p)}P_pP_k$ (see \aref{app:notation}), so we have
    \begin{align*}
    \sum_{k,p,q}P_{k}P_{p}P_{k} (\cdot) P_k P_{q}P_{k}
    &=  \sum_{k,p,q}(-1)^{\spl(k,p)}P_{p}P_{k}P_{k} (\cdot) (-1)^{\spl(k,q)} P_q P_{k}P_{k}  \\
    &= \sum_{k,p,q}(-1)^{\spl(k,p)+\spl(k,q)}P_p (\cdot) P_q  \\
    &= \sum_{p,q} P_p (\cdot) P_q \sum_k(-1)^{\spl(k,p \oplus q)}  \\
    &= \sum_{p \oplus q =0} P_p (\cdot) P_q \cdot 2^{2m} \\
    &= 2^{2m} \sum_{p} P_p (\cdot) P_p. 
    \end{align*} 
    Therefore we can further simplify $\Phi_{real}$
    \begin{align*}
    \Phi_{real} 
    &= \frac{1}{2^{2m}}\sum_{k,i,p,q}(P_{k}^{A}P^A_{p}P^A_{k} \otimes \Gamma^E_{i,p}) \state^{AE} (P_k^{A} P^A_{q}P^A_{k} \otimes \Gamma^{E\dagger}_{i,q})  \\
    &= \frac{1}{2^{2m}}2^{2m}\sum_{p,i}(P_{p}^{A} \otimes \Gamma^E_{i,p}) \state^{AE} (P_p^{A} \otimes \Gamma^{E\dagger}_{i,p})  \\
    &= \sum_{p,i}P_{p}^{A} \otimes \Gamma^E_{i,p} \state^{AE} P_p^{A} \otimes \Gamma^{E\dagger}_{i,p}.
    \end{align*}
    Therefore for any state $\rho^{AE}$ and CPTP map $\Lambda^{XE}$, $\Phi_{real} = \Phi_{ideal}$.
\end{proof}

\section{Constructing $\osqc^{\ell, m}$ from $\sqc^{\ell,m+\log\ell}$}
\label{app:osc}

In this section we present a construction from a secure quantum channel to an ordered secure quantum channel. In the protocol, Alice appends to each message an index. When Bob receives the message, he  checks the index. He accepts the message only when the index is expected. The protocol works as following.

\begin{figure}[tb]
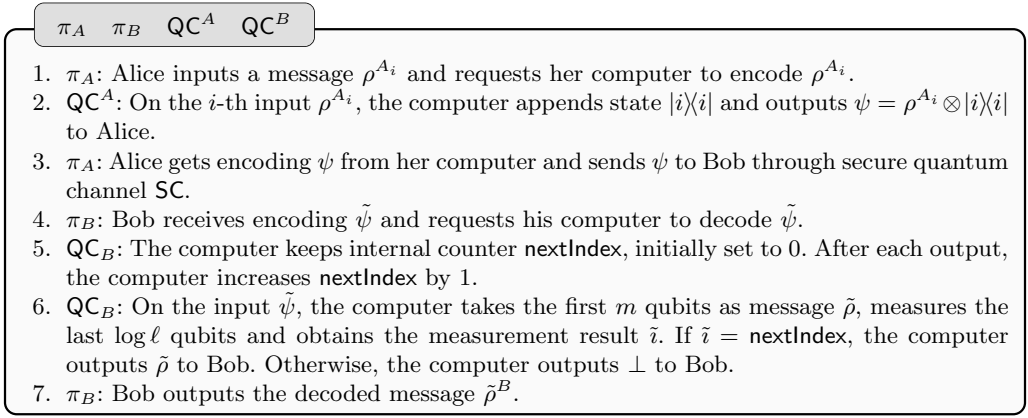
 
    \centering
    \begin{convbox}{\textwidth}{$\pi_A \quad \pi_B \quad \qc^A \quad \qc^B$}
        \begin{enumerate}		
            \item $\pi_A$: Alice inputs a message $\state^{\rA_i}$ and requests her computer to encode $\state^{\rA_i}$.
            \item $\qc^A$: On the $i$-th input $\state^{\rA_i}$, the computer appends state $\proj{i}$ and outputs $\psi=\state^{\rA_i} \otimes \proj{i}$ to Alice.
            \item $\pi_A$: Alice gets encoding $\psi$ from her computer and sends $\psi$ to Bob through secure quantum channel $\sqc$.
            \item $\pi_B$: Bob receives encoding $\tilde{\psi}$ and requests his computer to decode $\tilde{\psi}$.
            \item $\qc_B$: The computer keeps internal counter $\vNextIndex$, initially set to 0. After each output, the computer increases $\vNextIndex$ by 1. 
            \item $\qc_B$: On the input $\tilde{\psi}$, the computer takes the first $m$ qubits as message $\tilde{\state}$, measures the last $\log\ell$ qubits and obtains the measurement result $\tilde{\imath}$. If $\tilde{\imath} = \vNextIndex$, the computer outputs $\tilde{\state}$ to Bob. Otherwise,  the computer outputs $\bot$ to Bob.
            \item $\pi_B$: Bob outputs the decoded message $\tilde{\rho}^{\rB}$.
        \end{enumerate}
        
    \end{convbox}
    \caption{Converters and computing resources to construct $\osqc^{\ell, m}$ from $\sqc^{\ell,m+\log\ell}$.  $\qc^{\ell, m, m+\log\ell}_A$ and $\qc^{\ell, m, m+\log\ell}_B$  will be requested $\ell$ times. The message has length $m$ and encoding has length $m + \log \ell$. }. 
    \label{fig:sc.osc.protocol}
\end{figure}


\begin{theorem} \label{thm:iqc.sqc}
    Let $\pi_{AB}=(\pi_A, \pi_B)$, $\qc^{\ell, m, m+\log\ell}_A$ and $\qc^{\ell, m, m+\log\ell}_B$ denote converters and computing resources corresponding to the protocol from \autoref{fig:sc.osc.protocol}. Let $\qc^\ell_E$ be a computing resource for Eve, capable of doing $\ell$ add operations.  Then $\pi_{AB}$ constructs an ordered secure quantum channel $\osqc^{\ell, m}$ from a secure quantum channel $\sqc^{\ell,m+\log\ell}$ with $\epsilon =0$, i.e.,
    \[\left[\sqc^{\ell,m+\log\ell}, \qc^{\ell, m, m+\log\ell}_A , \qc^{\ell, m, m+\log\ell}_B \right]
    \xrightarrow{\pi_{AB},0}
    \left[\osqc^{\ell, m}, \qc^{\ell}_E\right].\]
\end{theorem}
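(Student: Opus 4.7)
}
The plan is to show perfect indistinguishability by exhibiting a simulator $\simul_E$ for which the real and ideal systems implement the \emph{same} CPTP map. Since the only cryptographic action performed by $\pi_{AB}$ on top of the already-secure channel is to classically append (resp.\ measure) an index, the simulator needs to do nothing more than translate Eve's classical indexing at the $\sqc$-interface into the $\lSend$/$\lSkip$ vocabulary of $\osqc$, and no distinguishing term will appear.

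First, I will define $\simul_E$: it keeps an internal counter $\vNextIndex$ initialized so as to track Bob's decoder's next expected index (i.e.\ the same convention used by $\qc^{\ell,m,m+\log\ell}_B$). On receiving $\lNewMsg$ from $\osqc^{\ell,m}$ it forwards $\lNewMsg$ unchanged to Eve's outer interface. On receiving an index $j\in[\ell]$ from Eve, it compares $j$ with $\vNextIndex$: it inputs $\lSend$ to $\osqc^{\ell,m}$ if $j=\vNextIndex$ and $\lSkip$ otherwise, and then increments $\vNextIndex$. The resource $\qc^{\ell}_E$ is attached but used only as a passive bookkeeping register; no quantum operations are required of it because the translation is purely classical.

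Second, I will verify by a short case analysis, in the order in which Eve's inputs are processed, that Bob's outputs coincide in the two worlds. Let $j_k$ denote Eve's $k$-th index input. In the real system, $\sqc^{\ell,m+\log\ell}$ delivers to $\qc^{\ell,m,m+\log\ell}_B$ the contents of its register $\rA_{j_k}$: if Alice has already sent her $j_k$-th message and the register has not been consumed, this content is $\state^{\rA_{j_k}}\otimes\proj{j_k}$, so the decoder's measurement returns the index $j_k$, and Bob outputs $\state^{\rA_{j_k}}$ exactly when $j_k=\vNextIndex=k$ and $\bot$ otherwise; if the register is $\bot$ (either because Alice has not yet sent the $j_k$-th message or because a previous query with the same index consumed it), Bob outputs $\bot$. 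In the ideal system, the simulator produces $\lSend$ iff $j_k=k$, causing $\osqc^{\ell,m}$ to emit $\state^{\rA_k}$ (or $\bot$ if $\rA_k$ is still empty), and $\lSkip$ otherwise, causing $\osqc^{\ell,m}$ to emit $\bot$. Since $j_k=k$ forces $\state^{\rA_{j_k}}=\state^{\rA_k}$, and Alice's inputs populate the registers in both systems in exactly the same order, Bob's outputs agree in every case. Eve's view is identical by construction: in both worlds she only observes the sequence of $\lNewMsg$ notifications triggered by Alice's inputs.

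From this equivalence of CPTP maps, $\Delta^{\sys D}(\pi_{AB}[\sqc^{\ell,m+\log\ell},\qc^{\ell,m,m+\log\ell}_A,\qc^{\ell,m,m+\log\ell}_B],\simul_E[\osqc^{\ell,m},\qc^{\ell}_E])=0$ for every distinguisher $\sys D$, which yields $\eps=0$ and hence the stated construction. The main (and only) subtlety is the careful case split handling Eve's out-of-order and repeated index queries; once that is done, there is nothing non-trivial left, as the underlying $\sqc^{\ell,m+\log\ell}$ already guarantees confidentiality and authenticity, and the appended index is merely classical tagging. Correctness (the honest transcript) follows trivially from the honest case of the same analysis and is omitted as in \autoref{thm:ic.osqc}.
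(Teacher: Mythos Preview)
Your proof is essentially the paper's: the same simulator (forward $\lNewMsg$; compare Eve's index against an incrementing counter and emit $\lSend$/$\lSkip$) and the same equivalence claim, only spelled out with a more explicit case analysis. The one modeling difference is that the paper, following its convention that converters and simulators are stateless (\autoref{rem:computation}), places the counter inside $\qc_E^\ell$ and has $\simul_E$ merely shuttle messages between Eve, $\qc_E^\ell$, and $\osqc$; you instead embed the counter in $\simul_E$ and leave $\qc_E^\ell$ ``passive'', which is behaviorally equivalent but obscures why $\qc_E^\ell$ appears in the constructed resource at all.
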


\begin{proof} 
    In order to prove this theorem, we need to find a simulator and a program for Eve's computer such that the real system and ideal system are indistinguishable with 0 advantage. 
    
    The simulator $\simul_E$ works as following. On input $\lNewMsg$ from the ideal resource $\osqc^{\ell,m}$, $\simul_E$ outputs $\lNewMsg$ at interface $\iE$. On input $i$ at interface $\iE$, $\simul_E$ inputs $i$ to $\qc_E^{\ell}$, requests it to output a $\lSend$ or $\lSkip$ signal and forwards the signal to ideal resource $\osqc^\ell$.
    
    Eve's computer $\qc_E^{\ell}$ works as following. The computer keeps an internal counter $\vNextIndex$, initially set to 0. On input $i$ from $\simul_E$, if  $i = \vNextIndex$, it outputs $\lSend$ to $\simul_E$, otherwise, it outputs $\lSkip$ to $\simul_E$. After the output, it increases $\vNextIndex$ by 1. The computer will be requested at most $\ell$ times.
    
    One can see that the real system $\pi_{AB}\left[\sqc^{\ell,m+\log\ell}, \qc^{\ell, m, m+\log\ell}_A , \qc^{\ell, m, m+\log\ell}_B \right]$ and the ideal world $\simul_E\left[\osqc^{\ell, m}, \qc^{\ell}_E\right]$ are two equivalent systems. If the distinguisher reorders the messages, in both worlds, Bob will only receive messages with consecutive indices. Therefore, two systems are equivalent and the distinguishing advantage is 0.
\end{proof}

\begin{remark*}
    \autoref{thm:iqc.sqc} is meaningful only if the protocol
    also provides correctness. This is trivially the case, since if
    the  distinguisher is honest, i.e., always preserves the order of
    messages, then Bob will receive them all in the correct order.
\end{remark*}



%

\section{$\AGM\QCCA2$ Security Definition (\cite{alagic2018unforgeable}).}
\label{app:agm}

Here we restate what it means for a \sqes\ $\sch$ to be secure in the $\AGM\QCCA2$ sense according to \cite{alagic2018unforgeable}.
Contrary to the $\QAE$ definition, the authors did not formulate $\AGM\QCCA2$ in terms of a distinction problem (between a real and an ideal setting); rather, the interaction of a distinguisher $\sys D$ with two different games is considered:
\begin{itemize}
    \item The first offers (true) encryption and decryption oracles, as well as a challenge phase: $\sys D$ is required to input a plaintext, for which it will get back either the true encryption, or the encryption of a random plaintext ($\tau\reg M$, half of a maximally-entangled state).
    $\sys D$ wins the game if it guesses which of the two states was actually encrypted.
    After the challenge phase, $\sys D$ has still access to the same oracles, and could therefore in principle trivially win the game by submitting the challenge ciphertext and compare the result with the challenge plaintext;
    \item The second game countermeasures this by exposing towards $\sys D$ the same game, but where the challenge plaintext is always replaced by $\tau\reg M$, and if $\sys D$ tries to cheat by submitting the challenge ciphertext to the decryption oracle, it instantly loses the game (and wins it with probability $\frac12$ otherwise).
\end{itemize}
Then the advantage of $\sys D$ is measured as the probability of winning the first game minus the probability of cheating in the second.
The rationale behind this definition is that $\sys D$ should not be able to have a larger probability of winning the first game than of cheating in the second.
We now state our adaption of the corresponding definition from \cite{alagic2018unforgeable} (Definition 9 therein).

\begin{definition}[$\AGM\QCCA2$ Security \cite{alagic2018unforgeable}]
    \label{def:agm}
    For \sqes\  (implicit in all defined systems) $\sch\df(\Gen,\Enc,\Dec)$, we define the \emph{$\AGM\QCCA2$-advantage} of $\sch$ for distinguisher $\sys D\in\bD$ as
    \[\Adv{\agm\qcca2}\sch{\sys D}\df\pr{}{\sys G^{\agm\qccat2}[\sys D]=1}-\pr{}{\sys G^{\agm\qccaf2}[\sys D]=1},\]
    where the interactions of $\sys D$ with game systems $\sys G^{\agm\qccat2}$ and $\sys G^{\agm\qccaf2}$ are defined in \autoref{fig:qcca2}.
\end{definition}

\begin{figure}[htb]
    \begin{tcbraster}[raster columns=1, sidebyside, lefthand width=.297\textwidth]
        \begin{algobox}{\textwidth}{Experiments ${\sys G^{\agm\qccat2}[\sys D]}$ and ${\sys G^{\agm\qccaf2}[\sys D]}$ for \sqes\ ${\sch\df(\Gen,\Enc,\Dec)}$}
            \begin{algorithmic}
                \State $b\uar\bin$
                \State $k\gets\Gen()$
                \State $\msg\reg M\gets\sys D^{\Enc_k(\cdot),\Dec_k(\cdot)}$
                \If{$b=1$}
                \State $\ctx\reg C\gets\Enc_k(\msg\reg M)$
                \Else
                \State $\ctx\reg C\gets\Enc_k(\tau^M)$
                \EndIf
                \State $b'\gets\sys D^{\Enc_k(\cdot),\Dec_k(\cdot)}(\sigma^C)$
                \State\Return $\One\{b'=b\}$
            \end{algorithmic}
            \tcblower
            \begin{algorithmic}
                \State $b\uar\bin$
                \State $k\gets\Gen()$
                \State $\lCheat\gets0$
                \State $\msg\reg M\gets\sys D^{\Enc_k(\cdot),\Dec_k(\cdot)}$
                \State $\hat r\samp\Rnd$\cmt{Keep $\hat r$}
                \State $\hat M\tilde M\gets\epr$
                \State $\hat\ctx\reg{\hat C}\gets V_k(\hat\msg\reg{\hat M}\otimes\Pi_{k,\hat r}\reg T)V_k^\dag$\cmt{Ignore $\msg\reg M$}
                \State $b'\gets\sys D^{\Enc_k(\cdot),{\bf Dec}(\cdot)}(\hat\ctx\reg{\hat C})$
                \State\Return $\ite\lCheat1b$\cmt{Ignore $b'$}
                \vspace{1mm}\hrule\vspace{1mm}
                \Orac[Dec]{$\ctx\reg C$}
                \State $MT\gets V_k^\dag\ctx\reg CV_k$
                \If{$\{P_{\aux_k}\reg T,\One-P_{\aux_k}\reg T\}(\aux\reg T)\Out0$}
                \If{$\{\Pi_{k,\hat r},\One-\Pi_{k,\hat r}\}(\aux\reg T)\Out0$}
                \If{$\{\Pi_+,\One-\Pi_+\}(\varphi\reg{M\tilde M})\Out0$}
                \State $\lCheat\gets1$
                \EndIf
                \EndIf
                \Else
                \State\Return $\hat D_k(\rho\reg{MT})$\cmt{Invalid ciphertext}
                \EndIf
                \State\Return $\msg\reg M$
                \EndOrac
            \end{algorithmic}
        \end{algobox}
    \end{tcbraster}\caption{$\AGM\QCCA2$ games $\sys G^{\agm\qccat2}$ ({\bf left}) and $\sys G^{\agm\qccaf2}$ ({\bf right}).}
    \label{fig:qcca2}
\end{figure}


    \bibliography{refsFull}
    \else

    \appendix
    
    \ifsub
    \else
    
    \fi
    \bibliography{refsFull}
    \fi

\end{document}